\def\FP#1{{\bf ({#1})}}
\def\CP#1{{\em {#1}}}
\newtheorem{thm}{Result}
\newtheorem{prop}[thm]{Proposition}
\newtheorem{hypothesis}{Hypothesis}
\def\reftab#1{Table~\ref{#1}}
\def\refresult#1{Result~\ref{#1}}
\def\etal{{\em et~al.}}
\def\eps{\ensuremath{\varepsilon}}
\def\Reals{\ensuremath{\mathbb{R}}}
\def\grad{\ensuremath{\nabla}}
\def\del{\ensuremath{\partial}}
\def\norm#1{\ensuremath{\left\|{#1}\right\|}}
\newlength{\bigfigwidth}
\newlength{\figwidth}
\def\spcing{\vspace{1ex}}
\newtheorem{mech}{Schema}
\newtheorem{theorem}[thm]{Theorem}
\newtheorem{lemma}[thm]{Lemma}
\newtheorem{corol}[thm]{Corollary}
\theoremstyle{definition}
\newtheorem{definition}[thm]{Definition}
\theoremstyle{remark}
\newtheorem{remark}[thm]{Remark}
\def\inner#1#2{\ensuremath{\left< {#1} , {#2}\right>}}
\def\bsupp#1#2{\ensuremath{ \mathrm{bsupp}({#1} ; {\del#2}) }}
\def\paren#1{\ensuremath{\left( {#1} \right)}}
\def\st#1{\ensuremath{\left\{ {#1} \right\}}}
\def\abs#1{\ensuremath{\left| {#1} \right|}}
\def\norm#1{\ensuremath{\left\| {#1} \right\|}}
\def\del{\ensuremath{\partial}}
\def\Reals{\ensuremath{\mathbb{R}}}
\def\shapesym{\ensuremath{\Omega}}
\def\distsym{\ensuremath{\mathcal{E}}}
\def\MA{\ensuremath{\mathbf{MA}}}
\def\approxct{\ensuremath{\tilde c_t}}
\def\grad{\ensuremath{\nabla}}
\def\eps{\ensuremath{\varepsilon}}
\newcommand{\reffig}[2][{}]{Fig.~\ref{#2}{\em #1}}
\def\refsec#1{Section~\ref{#1}}
\def\refeq#1{Eq.~\ref{#1}}
\def\refthm#1{Theorem~\ref{#1}}
\def\reflemma#1{Lemma~\ref{#1}}
\def\refprop#1{Proposition~\ref{#1}}
\def\reflem#1{Lemma~\ref{#1}}
\def\c1D{\ensuremath{ {c_{1D}}} }
\def\partderiv#1#2{\ensuremath{ \frac{\del {#1}}{\del {#2}} } }
\def\conv{\ensuremath{\star}}
\def\d#1{\ensuremath{\mathrm{d}{#1}}}
\def\v#1{\ensuremath{\mathbf{#1}}}
\def\defem#1{{\em {#1}}}
\def\dt{\ensuremath{\Delta t}}%%%
\title{ {\bf Distance Maps and Plant Development \#1: Uniform
    Production and Proportional Destruction} } 
\author{Pavel Dimitrov and Steven W. Zucker} 
\date{\today}
\begin{document}

\maketitle

\begin{abstract}
%% We develop two principles to guide the development of vascular
%% structures in plants. The first of these, that auxin is produced at a
%% constant rate in all cells, leads to a (Poisson) reaction-diffusion
%% equation. Equilibrium solutions uniquely codify information about
%% distances, thereby providing cells with the signal to begin
%% differentiation from ground to vascular.  A uniform destruction
%% hypothesis and scaling by cell size leads to a more
%% biologically-relevant (Helmholtz) model, and simulations demonstrate
%% its capability to predict leaf and root auxin distributions and
%% venation patterns.  The mathematical development is centered on
%% properties of the distance map, and provides a mechanism by which
%% global information about shape can be presented locally to individual
%% cells.  The principles provide the foundation for an elaboration of
%% these models in a companion paper \cite{plos-paper2}, and together
%% they provide a framework for understanding organ- and plant-scale
%% organization.

Experimental data regarding auxin and venation formation exist at both
macroscopic and molecular scales, and we attempt to unify them into a
comprehensive model for venation formation.  We begin with a set of
principles to guide an abstract model of venation formation, from
which we show how patterns in plant development are related to the
representation of global distance information locally as
cellular-level signals.  Venation formation, in particular, is a
function of distances between cells and their locations.  The first
principle, that auxin is produced at a constant rate in all cells,
leads to a (Poisson) reaction-diffusion equation. Equilibrium
solutions uniquely codify information about distances, thereby
providing cells with the signal to begin differentiation from ground
to vascular.  A uniform destruction hypothesis and scaling by cell
size leads to a more biologically-relevant (Helmholtz) model, and
simulations demonstrate its capability to predict leaf and root auxin
distributions and venation patterns.  The mathematical development is
centered on properties of the distance map, and provides a mechanism
by which global information about shape can be presented locally to
individual cells.  The principles provide the foundation for an
elaboration of these models in a companion paper \cite{plos-paper2},
and together they provide a framework for understanding organ- and
plant-scale organization.
\end{abstract}

\tableofcontents

\newpage

%%%%%%%%%%%%%%%%%%%%%%%% 
% Section
%%%%%%%%%%%%%%%%%%%%%%%% 
\section{Introduction}

One of the principal tenants of biology is that no matter how large an
organism becomes everything about it must ultimately have an
explanation at the cellular level. Molecular biology goes even further
by requiring an explanation on the level of chemical reactions. What
chemical compounds and what reactions give rise to the intricate
patterns of veins in leaves? Or to the pattern of specialized cells in
the root of a plant? These are the types of questions that modern
biology attempts to answer. And these same questions have prompted
workers from the other sciences to join in. Physicists, mathematicians
and computer scientists find the problems especially intriguing
because of the need to explain how global patterns develop from local
behaviors. Seen in this light, the problem becomes the search for a
``plant geometry:'' how cells determine where they are located with
respect to other ``special'' cells, what those ``special'' cells are,
and how cells behave once the information becomes available.

Attempts to solve a version this problem, in which the notion of
positional information is the focus, can be traced back to over a
century ago \cite{voechting-1877, voechting-1878}, but it was only
within the last fifty years that mechanistic proposals were first
submitted \cite{wolpert-1969, wolpert-1981, gierer72:initial}. Of
these, the idea of a diffusible morphogen
\cite{turing52:morphogenesis} has received the most attention because
it captures complex measurable phenomena in a compact mathematical
form. This so-called reaction-diffusion formulation requires specific
knowledge of molecular interactions, but its typical form requires at
least two chemical substances in order to explain a patterning
phenomenon \cite{harrisonbook}. By contrast, we have shown
\cite{pdswz:paper} that a modification of the original formulation
only requires one substance and already provides preliminary answers
to the three main questions of plant geometry.

This paper is the first of three, in which we develop these questions
in further detail. The series looks for the simplest hypotheses that
can explain patterning phenomena arising in vein formation,
facilitated transport of plant hormones, cell division and expansion,
hormone concentration distributions, and others. We propose hypotheses
about the local behavior of cells, such as how a hormone is produced,
and analyze them mathematically to explain observed phenomena as well
as to generate further hypotheses. Thus, some of our fundamental
assumptions will be theoretically derived. The verification of their
implications is presented through numerical simulations, which we
demonstrate to afford unique interpretations. As a result, we develop
a theory that explains how discrete systems -- such as collections of
cells -- may compute a distance map in a variety of scenarios and
represented by various interpretations of hormone concentrations.

We begin, in this paper, by refining some of our assumptions in
\cite{pdswz:paper} about the biology of plants. To keep matters
tractable, it is still necessary to include some abstraction for what
otherwise might be considered signalling or other networks. We
abstract ``constant production'', ``proportional destruction'', and
c-vascular conversion ``schema'' in this paper. We provide the
mathematical analysis of our earlier model, to which we refer as the
Poisson Model, that proves our claims in \cite{pdswz:paper}. Then, we
extend the model by introducing a more biologically plausible
assumption about the destruction of the signaling hormone auxin, which
gives rise to the Helmholtz Model. Mathematical analysis of this
formulation demonstrates that the properties of the Poisson model are
kept and that it can explain even more experimental observations.

In the next paper \cite{plos-paper2} we elaborate the schema into more
biologically plausible mechanisms using different transport
facilitators and the chemosmotic theory.  We observe that it is
remarkable that the Fickian transport and reaction diffusion equation
developed here can provide this abstraction in such a manner that its
main properties hold when more detailed facilitated transport is taken
into account.  Our goal throughout this series of papers
\cite{pdswz:paper, plos-paper1, plos-paper2, plos-paper3} is to
formulate those abstract principles that can govern the qualitative
properties exhibited at the systems level in plants. Such an approach
is necessary, we believe, to organize into organ- and plant-scale
syntheses the diversity of cellular and molecular mechanisms
constantly being discovered. As we show through the series, an
elaboration of the principles into increasing detail predicts
non-linear and, at times, surprisingly delicate sequential
developmental patterns. Without such a systems-level understanding one
might be tempted to postulate a need for unnecessary genetic
machinery.

%% This paper:
%% \begin{itemize}
%% \item Biological hypotheses: Constant Production Hypothesis, Proportional
%% Destruction Hypothesis

%% \item Distance map / Positinal Information Setup for areoles and veins in leaves

%% \item Computational Tool: Helmholtz dynamics give distance. they produce a
%% unique result -- i.e. independent of the initial auxin concentrations.
%% \end{itemize}

%% There are three themes that run through this series of papers. First
%% is the Distance Map theme. We define the problems that require a
%% distance to be sensed, measured, or signaled: what are the boundaries,
%% what are the signals, and who interprets those signals.

%% Second, we develop and submit several hypothesis with regard to what
%% biological systems may be doing. These are derived from a mathematical
%% analysis with the aim of elucidating the fundamental principles that
%% guide pattern formation in plant development.

%% Finally, we not only illustrate our analysis with numerical
%% simulations but also use the results of those simulations to develop
%% the hypotheses. Therefore, we also provide the necesary mathematical
%% resuts that demonstrate to extent our numerical simulations represent
%% the theory and to what extent they are simply artifacts due to
%% limitations of digital computers.

%%%%%%%%%%%%%%%%%%%%%%%% 
% Section
%%%%%%%%%%%%%%%%%%%%%%%% 
\section{Constant Production Hypothesis: Poisson Model}

In \cite{pdswz:paper}, we proposed the Constant Production Hypothesis
and argued that rich geometric information becomes available to cells
in a single-substance reaction-diffusion model. 

\begin{hypothesis}[CPH]
  \label{hypothesis:cph}
  Auxin is produced in all cells at the same constant rate. 
\end{hypothesis}

\noindent
In this section, we recall the main consequence of this assumption and
then prove it mathematically.

% --------
\subsection{Background}

A leaf is a collection of cells. We distinguish between {\em ground}
cells, those that give rise to all others, and {\em vascular} cells,
those that comprise the venation pattern. We focus on early leaf
development and concentrate on signals sufficient to initiate the
cascade of events that change ground cells into vascular cells within
an expanding areole. To keep matters tractable, a cell will be
referred to as {\em c-vascular} (cascade vascular) immediately after
this cascade is initiated. The role of this c-vascular abstraction is
to summarize the increasingly elaborate cascade of genetic expression
and transcription regulation that is being uncovered; see
\cite{scarpella:genes-for-procambium}. The sub-collection of
c-vascular cells may be thought of as an early pre-pattern from which
veins derive. Ground cells have (essentially) homogeneous
characteristics and areoles are delimited by more developed c-vascular
(or mature vascular) cells.  Instead of assuming the pre-pattern is
predefined, our model establishes how it emerges from local
operations. We refer to both membranes and cell walls together as {\em
  cell interfaces} and assume that they act as a single membrane.

% --------
\subsection{Poisson Model}

\begin{table}[t]
  \centering

\begin{tabular}{cp{1ex}c}
  % first model
  \begin{tabular}[t]{|p{0.9\textwidth}|}
    \hline
%%    {\bf Constant Production Model with Sinks}\\
    {\bf Poisson Model}\\
    \hline
    {\bf Definitions}\\
    
    {\em Ground Cell:}  Diffusion coefficient $D_{g}$. \\
    
    {\em C-Vascular Cell:} At least one interface has diff. coef. $D_{v} >
    D_{g}$.\\
    
    \hline
    {\bf Cell Functions} (Program) \\

    {\sc CF1:} Produce substance $s$ at constant rate $K$.
    
    {\sc CF2:} Measure $c$ and $\Delta c$ through interfaces.\\

    {\sc CF3:} Diffuse $s$ through interfaces.\\
    
    {\sc CF4:} When $\Delta c > \tau$ through interface $I$, change its
    diffusion coefficient to $D_v$. \\
    
    \hline
  \end{tabular}

\end{tabular}
  \caption[Poisson model definitions.]{Poisson model definitions. The
    mechanism for changing the diffusion coefficent in CF4 will be
    elaborated in \cite{plos-paper2}.}
  \label{tab:poisson-model}
\end{table}

Each cell performs the basic functions listed in
\reftab{tab:poisson-model} independently and simultaneously.  Under
these assumptions, then, cell functions CF1 and CF3 determine the
equation governing the distribution of $s$ in the areole. They define
how the substance is produced and transported for both ground and
c-vascular cells. The latter evacuate the hormone much faster so we
assume that the boundary of the areole may be thought of as a sink for
$s$, i.e. it is essentially kept at a constant level. Therefore, the
temporal change of the concentration inside a region depends on how
much diffuses in or out of a cell plus how much is created; in
symbols,
\begin{equation}\label{eq:goveq}
c_t = D \grad^2 c + K,
\end{equation}
where $D$ is the diffusion constant of ground cells, $\grad^2 c =
c_{xx} + c_{yy}$ is the Laplacian of concentration over cell position,
and $K$ is as in CF1. This is a reaction-diffusion equation which has
a steady-state: after a sufficiently long time, the dynamical system
is well approximated by the $c$ such that $c_t=0$ (see
\reffig{fig:big-picture}). Observe that those cells which are further
from the boundary have higher concentrations.  In fact, the
concentration profile is qualitatively similar to that of the function
assigning to each cell the shortest distance to a (c-)vascular
cell---the so-called {\em distance
  transform}~\cite{blum73:the-instigator}.

When $c_t = 0$, \refeq{eq:goveq} becomes a standard Poisson equation.
Given our boundary conditions ($c=0$ at veins), there is a unique $c$
satisfying it \cite{pdswz:proof-TR}. From this, we
calculate:

\begin{thm} \label{cph-result}
  Consider an areole and suppose that $P$ is a ground cell which is
  furthest from the c-vascular boundary. Let $Q$ be a c-vascular cell
  which is closest to $P$ and denote by $L$ the distance between $P$
  and $Q$. Then
  \begin{enumerate}
    \item[(a)] $c(P)$ is proportional to $\frac{K}{D} L^2$;
      
    \item[(b)] the change in $c$ at the interface of $Q$ nearest to
      $P$ is proportional to $\frac{K}{D} L$.
      
    \item[(c)] $\Delta c$ is largest at an interface of the c-vascular
      boundary, larger than for any ground cell, and is proportional
      to $\frac{K}{D} L$.
  \end{enumerate}
\end{thm}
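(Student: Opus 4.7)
The plan is to pass to the steady state $c_t=0$ in \refeq{eq:goveq}, so that $c$ solves the Poisson problem $-D\grad^2 c = K$ on the areole with $c=0$ on the c-vascular boundary, and then to bracket $c$ from below and above by explicit comparison solutions via the maximum principle. Note that $c\geq 0$ throughout the areole by the weak maximum principle, since the forcing is positive and $c$ vanishes on the boundary.

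For part (a), I would first establish the lower bound by observing that the open disk $B_L(P)$ of radius $L$ centered at $P$ is inscribed in the areole, because $L$ is the distance from $P$ to the boundary. On this disk the radial function $\phi(x)=\frac{K}{4D}\paren{L^2-\abs{x-P}^2}$ solves the same Poisson equation and vanishes on $\partial B_L(P)$. The difference $c-\phi$ is then harmonic on $B_L(P)$ and non-negative on its boundary, so the maximum principle yields $c\geq\phi$ throughout $B_L(P)$, and in particular $c(P)\geq\phi(P)=\frac{KL^2}{4D}$. For the matching upper bound, since $P$ is the furthest ground cell from the boundary every point of the areole lies within distance $L$ of it, so under a mild geometric hypothesis (the areole sits inside a strip of width $2L$) comparison with the one-dimensional supersolution $\psi(x)=\frac{K}{2D}(L^2-x^2)$ gives $c\leq\frac{KL^2}{2D}$.

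For part (b), the point $Q$ lies on both the c-vascular boundary and $\partial B_L(P)$, so $c(Q)=\phi(Q)=0$. The inequality $c\geq\phi$ on $B_L(P)$ together with equality at $Q$ forces, via Hopf's lemma applied to $c-\phi$ on the disk, that the outward normal derivative of $c$ at $Q$ satisfies $\abs{\grad c(Q)}\geq \abs{\grad\phi(Q)}=\frac{KL}{2D}$. In the continuum interpretation used to set up \refeq{eq:goveq}, the discrete jump $\Delta c$ across the interface at $Q$ is this gradient times the interface spacing, giving the claimed proportionality to $\frac{K}{D}L$.

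For part (c), I would argue that the global maximum of $\abs{\grad c}$ over the closed areole is attained at the c-vascular boundary. Since $\grad^2 c=-K/D$ is constant, direct differentiation gives $\grad^2\paren{\abs{\grad c}^2}=2\abs{D^2 c}^2\geq 0$, so $\abs{\grad c}^2$ is subharmonic and is therefore maximized on the boundary. Combined with the lower bound from (b) and a matching upper bound via the strip supersolution plus a Hopf-type estimate at the boundary, this completes the proof of the proportionality to $\frac{K}{D}L$. The main obstacle will be the upper bounds in parts (a) and (c): the inscribed-disk comparison handles all lower bounds uniformly, but matching them from above requires controlling the geometry of the areole beyond its inradius, since an arbitrarily elongated region with inradius $L$ can defeat the strip comparison. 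I would therefore state a mild convexity or aspect-ratio hypothesis on the areole explicitly before invoking the supersolution argument.
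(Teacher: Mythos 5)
Your lower bounds follow exactly the paper's route: the inscribed disk $B_L(P)$ with the explicit radial solution is the Disc Lemma (\reflem{lemma:disc}), and reading off the normal derivative at $Q$, where $c$ and the comparison function both vanish and the difference is one-signed, gives the same $\frac{K}{2D}L$ bound that appears in \reflem{lemma:bounds} (Hopf's lemma is not even needed for the non-strict inequality). Your argument for part (c) is genuinely different from, and cleaner than, the paper's: since $\grad^2 c$ is constant, $\grad^2\abs{\grad c}^2 = 2\abs{D^2 c}^2 \geq 0$, so $\abs{\grad c}^2$ is subharmonic and its maximum sits on $\del\shapesym$, with strictness from the strong maximum principle (an interior maximum would force $D^2 c \equiv 0$, contradicting $\grad^2 c = -K/D$). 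The paper instead proves this claim (\reflem{lemma:decgrad}) by translating level sets $\gamma_{c_0}$ of $c$ until they touch $\del\shapesym$ at the point of maximal gradient and invoking domain monotonicity (\reflem{lemma:containment}); your Bernstein-type argument sidesteps the regularity and translation issues of that construction and is the standard tool here.

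The genuine gap is the upper bound. Requiring the areole to sit in a strip of width $2L$ is not a mild hypothesis: it is a width/convexity restriction that fails for exactly the shapes the result is meant to cover --- a C-shaped or annular region of thickness $2L$ has inradius $L$ but fits in no such strip --- whereas \refthm{theorem:maintheorem} assumes only that the smallest concave curvature radius is $pL$ with $p>0$. The paper closes this by building the supersolution $u = 2c_D(pL + \distsym_\shapesym)$, i.e., composing the radial solution on an annulus (the Open Doughnut, \reflem{lemma:odoughnut}) with the distance map of the actual shape. Off the medial axis the Laplacian of this composition is controlled by approximating $\del\shapesym$ near the closest boundary point by its circle of curvature (\reflem{lemma:error}, \reflem{lemma:convc-ondisc}); on the medial axis, where $u$ is not differentiable, a discretization argument (\reflem{lemma:mondyn}) shows the supersolution property survives. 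The only geometric input is then $p$, which is why the upper bound in \refthm{theorem:maintheorem}(b) carries the factor $(2p+1)/p$ rather than a strip width. Without some such distance-map-based construction, your upper bounds in (a), (b) and (c) hold only for a restricted class of areoles, and the proportionality constants you would obtain do not degrade gracefully with the boundary's concavity as the paper's do.
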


Therefore, using part (a) and CF2, a cell may determine if it has
become further than $L$ units from the closest c-vascular (supply)
cell by measuring its concentration. More must be done, however, to
guarantee that the developing vascular network is connected, and
utilizing the difference in concentration accomplishes this.

\refresult{cph-result}(b) asserts that $\Delta c$ at the venation is
directly proportional to $L/D$ and does not depend on the value of
$c$. Moreover, it also gives the direction toward the furthest cell.
This is sufficient to show that mechanisms for new strand creation
should adhere to the following schema:
\begin{mech}\label{mech:vein}
  Let $D_I$ be the diffusion constant across an interface $I$ and
  $\Delta c$ be the concentration difference through $I$. Then
  increase $D_I$ to a higher value when $\Delta c >
  \alpha\frac{K}{D_I} L_0$.  ($\alpha$ is a constant of
  proportionality.) Alternatively, the flux $\phi = D_I \Delta c =
  \alpha K L_0$ may be employed.
\end{mech}

An illustration of this mechanism is shown in
\reffig{fig:big-picture}.

%%%%%%%% Figure
\begin{figure*}[ht]
\centering

\includegraphics[width=\textwidth]{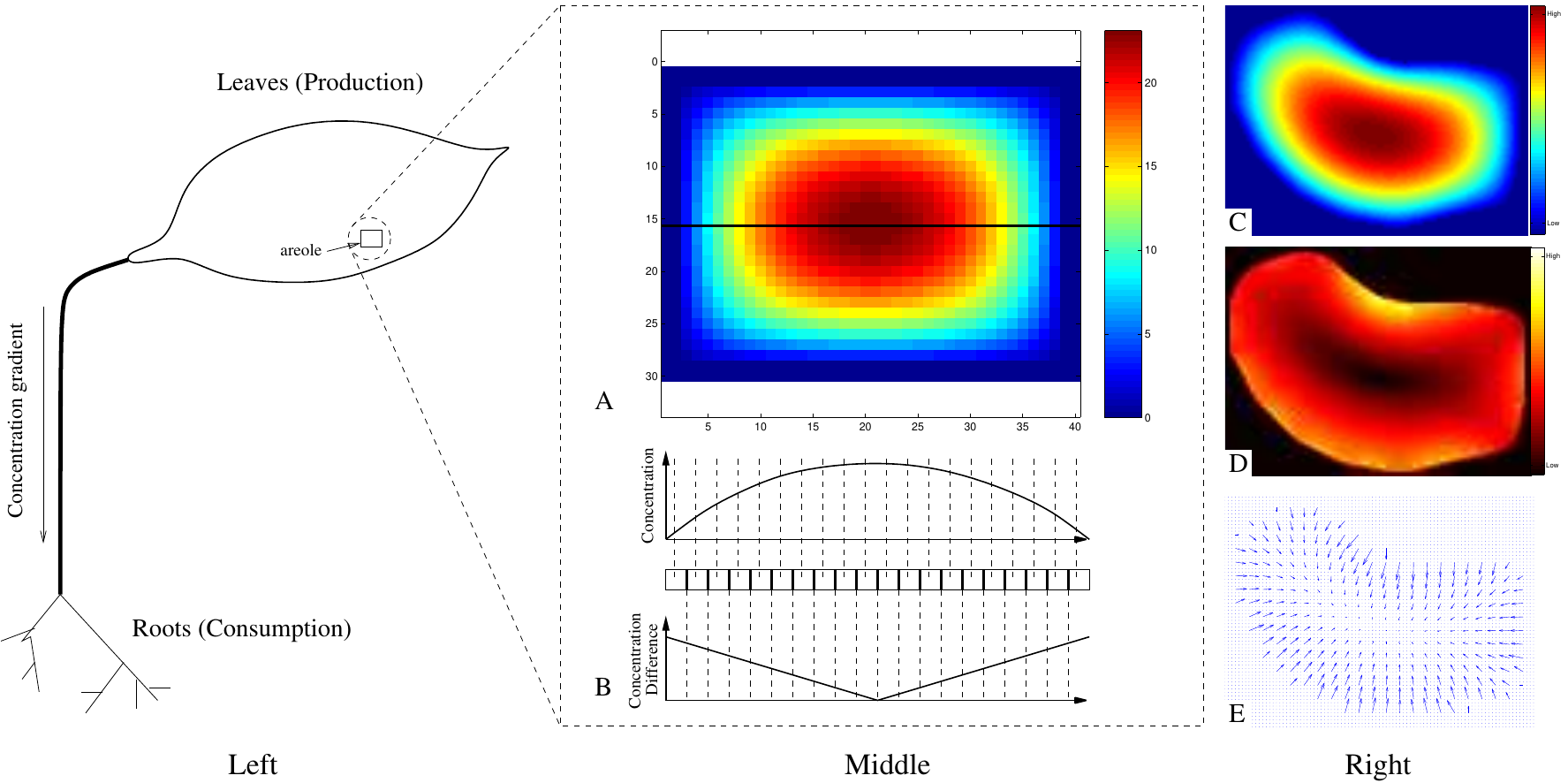} 

\caption[Hormone concentration inside an
  areole.]{\label{fig:big-picture} Hormone concentration inside an
  areole. ({\sc left}) A rectangular domain (artificial areole) is
  illustrated with a boundary of c-vascular cells.  Assuming
  c-vascular cells are much more efficient at transporting $s$, the
  boundary may be taken as a sink and $c$ is governed by
  \refeq{eq:goveq}.  ({\sc middle}) $c$ at near steady-state, $c_t
  \approx 0$. Also shown are the values of $c$ and $\Delta c$ along a
  path (in black) across the areole. Notice how the concentration
  peaks at cells furthest from the veins, while $\Delta c$ peaks near
  the vein. ({\sc right}) (C) Concentration, (D)
  magnitude of gradient, (E) gradient vector field.  Observe how the
  gradient vectors point toward largest concentration increase.}

\end{figure*}
 %%%%%%%% End of Figure

% --------
\subsection{Analysis of the Poisson Model}

%%%%%%% SubSubSection
\subsubsection{Definitions and Background: Geometry}

A collection of ground cells surrounded by c-vascular cells is called
an {\em areole}. Our technical result will assume that an areole is a
discretization of a continuous portion of $\Reals^2$ which we call a
shape.

%%% Defninition
\begin{definition}
  A {\em shape} is any subset $\shapesym\in\Reals^2$ which is the
  closure of a bounded open set and has a boundary $\del\shapesym$
  consisting of finitely many smooth curves.
  
  A point $Q\in\del\shapesym$ is {\em concave} if for any line $\ell$
  locally tangent to $Q$ there is an open ball $B_\eps(Q)$ such that
  $B_\eps(Q)\cap \ell\cap(\shapesym-\del\shapesym) = B_\eps(Q)\cap
  \ell - \st{Q}$ (i.e. the line segment is inside $\shapesym$). If
  $B_\eps(Q)\cap \ell\cap\shapesym \subset \del\shapesym$, then $Q$ is
  a {\em convex} point. The boundary has concave (convex)
  curvature\footnote{We allow infinite curvature.}  at concave
  (convex) points.
\end{definition}

\begin{definition}
  Let $\shapesym$ be a shape and $P\in\Reals^2$. The {\em Euclidean
    distance function} on $\shapesym$, denoted $\distsym_\shapesym$,
  is
  $$
  \distsym_\shapesym (P) = \inf_{Q\in\del\shapesym} \norm{P-Q}_2
  $$

  The {\em boundary support} of $P$, denoted $\bsupp{P}{\shapesym}$, is 
  $$
  \bsupp{P}{\shapesym} = \st{Q\in\del\shapesym:~\norm{P-Q} =
  \distsym_\shapesym (P)}.
  $$
  
  The {\em medial axis} of $\shapesym$, denoted $\MA(\shapesym)$, is
  the set of points $P$ which have two or more closest points on the
  boundary, i.e.  
  $$
  \MA(\shapesym) = \st{P \in \shapesym : \abs{\bsupp{P}{\shapesym}}
  \geq 2}.
$$
where $\abs{\bsupp{P}{\shapesym}}$ denotes the cardinality of the
set.  Note that $\MA(\shapesym)$ does not have to be restricted to the
shape $\shapesym$ and is well-defined on all of $\Reals^2$. Hence,
there is an interior medial axis and an exterior one. Here, we will
only be concerned with the interior one.
\end{definition}

%%% Theorem
\begin{theorem}\label{theorem:distfunc}
  Let $\shapesym$ be a shape and $P\in\shapesym$. Suppose
  $\abs{\bsupp{P}{\shapesym}} = 1$ and pick the unique
  $Q\in\bsupp{P}{\shapesym}$.  Then,

  \begin{enumerate}
  \item [(a)] $P\not\in\del\shapesym$ implies
    $$
    \grad\distsym_\shapesym (P) = \frac{Q-P}{\norm{Q-P}}
    $$
    where $Q-P$ is the vector from $Q\in\del\shapesym$ to $P$.
  \item [(b)] If $\del\shapesym$ is $C^k$ at $Q$, then
    $\grad\distsym_\shapesym$ is $C^k$ at $P$.
  \end{enumerate}

\end{theorem}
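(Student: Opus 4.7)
My strategy for (a) is an envelope/sandwich argument, exploiting that $\distsym_\shapesym$ is the pointwise infimum of the smooth functions $P'\mapsto\norm{P'-R}$ as $R$ ranges over $\del\shapesym$, with the unique minimizer $Q$ isolated in the sense that $\abs{\bsupp{P}{\shapesym}}=1$. For (b), the plan is to realize the closest-point map $P'\mapsto Q(P')$ via the implicit function theorem applied to the first-order optimality condition, and to inherit the regularity of $\grad\distsym_\shapesym$ from that of $\del\shapesym$.

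For (a), I would first establish that $Q(P')$ is well-defined (by compactness of $\del\shapesym$) and continuous at $P$: if $P_n\to P$ and some subsequence $Q(P_{n_k})\to Q^\ast$, then $\norm{P-Q^\ast}=\lim\distsym_\shapesym(P_{n_k})=\distsym_\shapesym(P)$ by the $1$-Lipschitz continuity of $\distsym_\shapesym$, forcing $Q^\ast\in\bsupp{P}{\shapesym}=\st{Q}$. I would then sandwich
\[
\norm{P+h-Q(P+h)}-\norm{P-Q(P+h)}\;\le\;\distsym_\shapesym(P+h)-\distsym_\shapesym(P)\;\le\;\norm{P+h-Q}-\norm{P-Q},
\]
where the upper bound uses $\distsym_\shapesym(P')\le\norm{P'-Q}$ and the lower bound uses $\distsym_\shapesym(P)\le\norm{P-Q(P+h)}$. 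Taylor-expanding the Euclidean norm in $h$ and invoking the continuity $Q(P+h)\to Q$, both outer expressions reduce to $\inner{h}{(P-Q)/\norm{P-Q}}+o(\norm{h})$, yielding differentiability of $\distsym_\shapesym$ at $P$ with the claimed gradient (the unit vector from $Q$ to $P$).

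For (b), I would parametrize $\del\shapesym$ near $Q$ by a $C^k$ arc-length curve $\gamma$ with $\gamma(0)=Q$. For $P'$ near $P$, the closest boundary point $\gamma(s(P'))$ must satisfy the first-order optimality condition
\[
F(P',s)\;:=\;\inner{P'-\gamma(s)}{\gamma'(s)}\;=\;0.
\]
Applying the implicit function theorem to $F$ at $(P,0)$ would produce $s(\cdot)$ as a smooth function of $P'$, whence $Q(P')=\gamma(s(P'))$ is smooth and so is $\grad\distsym_\shapesym(P')=(P'-Q(P'))/\norm{P'-Q(P')}$, inheriting the regularity of $\gamma$. The main obstacle is verifying the IFT nondegeneracy condition
\[
\frac{\del F}{\del s}(P,0)\;=\;-\norm{\gamma'(0)}^2+\inner{P-Q}{\gamma''(0)}\;=\;-1+\kappa(0)\norm{P-Q},
\]
where I use $\gamma''(0)=\kappa(0)\v{n}(0)$ in arc length and the fact that $P-Q$ is parallel to the inward normal at $Q$ (since $Q$ minimizes $\norm{P-\cdot}$ on $\del\shapesym$). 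Ensuring this quantity is nonzero amounts to excluding the ``focal'' case $\norm{P-Q}=1/\kappa(0)$; this is the delicate geometric step and would require either a separate argument tying the uniqueness of $Q$ to the exclusion of $P$ being at a local center of curvature, or a reading of ``$\del\shapesym$ is $C^k$ at $Q$'' that implicitly rules it out. The remaining bookkeeping — that compositions $\gamma\circ s$ and the normalization map preserve the relevant order of smoothness — is routine.
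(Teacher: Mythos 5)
Your proposal is correct in substance, but it is a genuinely different route from the paper's, because the paper does not prove this theorem at all: it simply cites Federer (for (a)) and Krantz (for (b)). What you have written is essentially a self-contained reconstruction of what those references establish. For (a), the sandwich between $\norm{P+h-Q(P+h)}-\norm{P-Q(P+h)}$ and $\norm{P+h-Q}-\norm{P-Q}$, combined with upper semicontinuity of the argmin (so $Q(P+h)\to Q$ when the foot point at $P$ is unique) and the fact that $\norm{P-Q}>0$ off the boundary, is exactly the standard argument and is airtight. For (b), the implicit-function-theorem argument on $F(P',s)=\inner{P'-\gamma(s)}{\gamma'(s)}=0$ is the standard proof of regularity of the nearest-point projection; the only bookkeeping you should make explicit is that the locally unique critical point produced by the IFT coincides with the global minimizer for $P'$ near $P$, which follows from the same upper semicontinuity used in (a). What your approach buys, beyond self-containment, is that it surfaces a real defect in the theorem as stated: the nondegeneracy condition $1-\kappa(0)\norm{P-Q}\neq 0$ is \emph{not} implied by $\abs{\bsupp{P}{\shapesym}}=1$. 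For example, with boundary locally $y=x^2/2$ and $P=(0,1)$ the squared distance to $(x,x^2/2)$ is $1+x^4/4$, so $Q=(0,0)$ is the unique foot point, yet $P$ is the focal point ($\norm{P-Q}=1/\kappa$) and the nearest-point map is not differentiable there ($P$ is an endpoint of the closure of the medial axis). So part (b) genuinely requires either restricting $P$ to lie within the reach of $\del\shapesym$ (equivalently, off the \emph{closure} of $\MA(\shapesym)$), which is how the cited general results are actually formulated, or adding the focal-point exclusion explicitly. This is a flaw in the statement rather than in your proof; you identified the right obstruction, and you should resolve it by strengthening the hypothesis rather than hoping uniqueness of $Q$ alone rules it out, because it does not.
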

\begin{proof}
  Part (a) is due to \cite[4.8(3)]{Federer59} and (b) is a consequence
  of the more general result by \cite{Krantz81} (see also
  \cite{mather83:dist}).
\end{proof}

%%%%%%%  Figure
\begin{figure}[htbp]
  \centering
  \includegraphics[width=\figwidth]{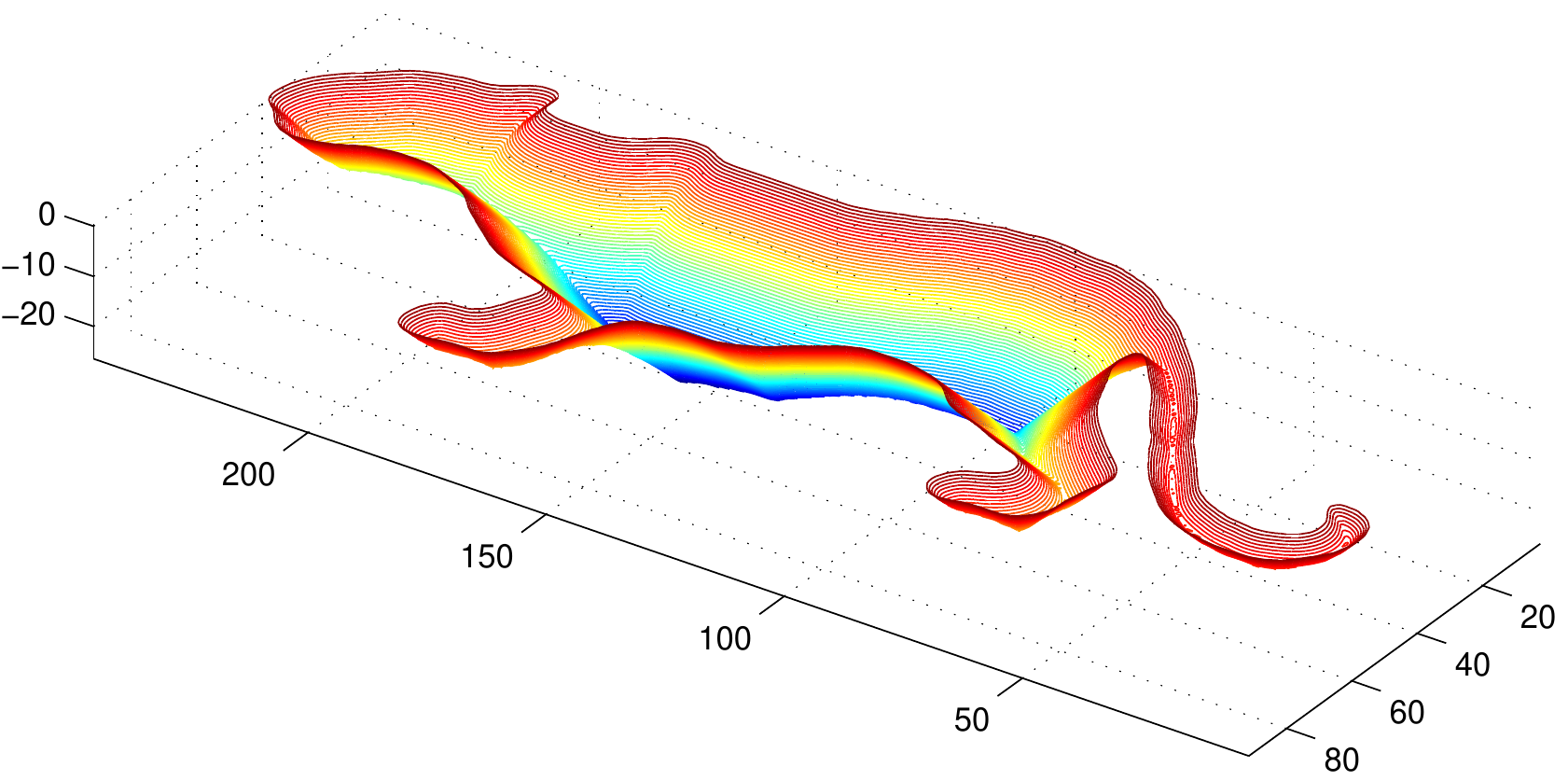}
  \includegraphics[width=\figwidth]{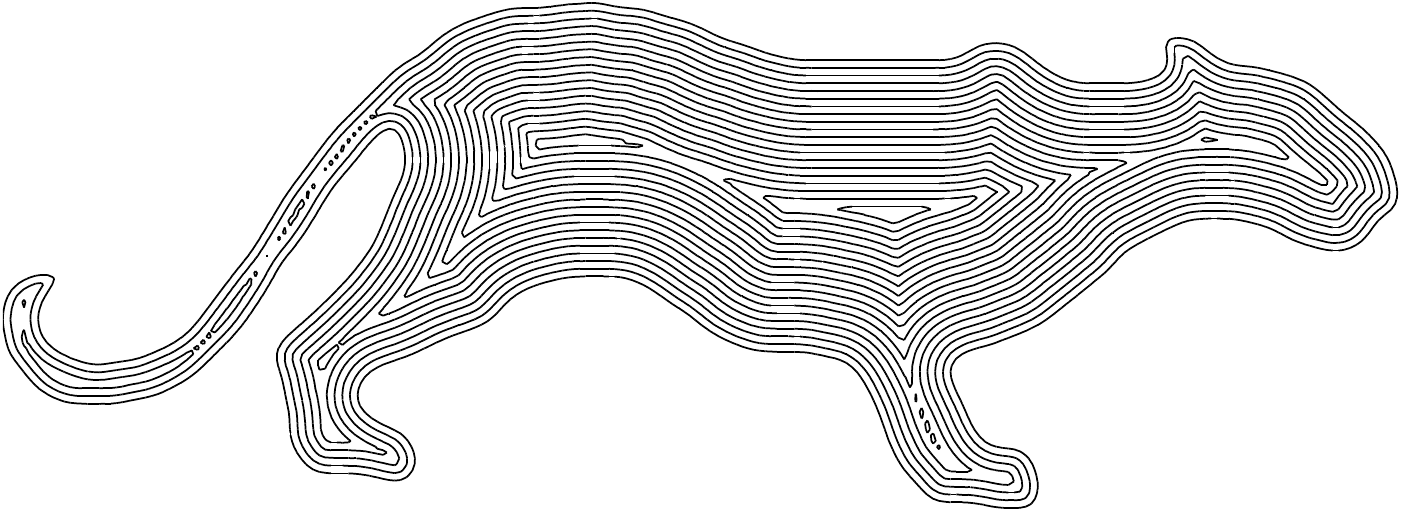}
  \includegraphics[width=\figwidth]{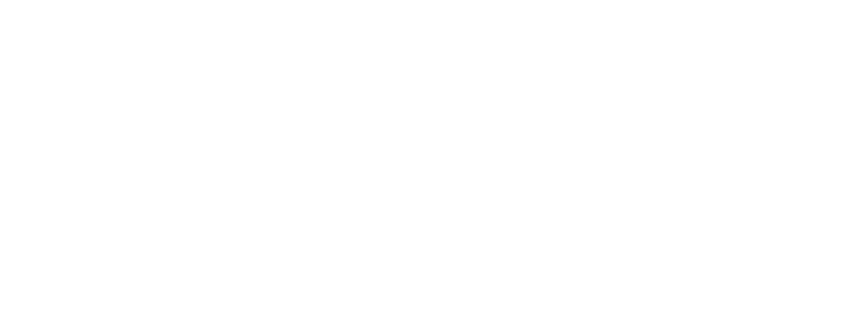}
  \caption[Examples of distance map and medial axis (see
    \cite{ma-hj}).]{Examples of distance map and medial axis (see
    \cite{ma-hj}). {\sc top:} negative distance map
    $-\distsym_\shapesym$, {\sc center:} level sets of $\distsym_\shapesym$,
         {\sc bottom:} medial axis computed as in
         \cite{pdimit:cvpr03}.}
  \label{fig:ma-samples}
\end{figure}
%%%%%%%  End Figure

%%% Corollary
\begin{corol}\label{cor:dist-is-smooth}
  $\distsym_\shapesym (P)$ is smooth at $P\in\shapesym-\MA(\shapesym)$.

\end{corol}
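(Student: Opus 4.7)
The plan is to apply \refthm{theorem:distfunc} more or less directly, once we verify its hypothesis for the setting of the corollary. First I would reduce to the case $\abs{\bsupp{P}{\shapesym}}=1$: since $\del\shapesym$ is a bounded union of finitely many smooth curves, it is compact, so the infimum defining $\distsym_\shapesym(P)$ is attained and $\abs{\bsupp{P}{\shapesym}}\geq 1$; the assumption $P\notin\MA(\shapesym)$ gives $\abs{\bsupp{P}{\shapesym}}<2$, so equality holds. Denote by $Q$ the unique closest boundary point, and (for the interior case) assume $P\notin\del\shapesym$.

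Next I would invoke \refthm{theorem:distfunc}(b) with $k=\infty$. Because $\del\shapesym$ is a finite union of smooth arcs, it is $C^\infty$ at every point lying in the interior of one of these arcs, and in particular at $Q$ whenever $Q$ is not a junction between two distinct smooth pieces. The theorem then yields that $\grad\distsym_\shapesym$ is smooth at $P$; combined with part (a), which provides the closed-form expression $\grad\distsym_\shapesym(P)=(Q-P)/\norm{Q-P}$, this is enough to conclude that $\distsym_\shapesym$ itself is smooth in a neighborhood of $P$.

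The subtle point, which I expect to be the main obstacle, is the exceptional case where $Q$ happens to be a corner of $\del\shapesym$, so that \refthm{theorem:distfunc}(b) does not literally apply. Such corners come in two flavors according to the definition above. A convex corner cannot arise as the unique closest point of any interior $P\neq Q$, since one of the two smooth arcs meeting at $Q$ supplies strictly closer boundary points arbitrarily near $Q$; so that case is vacuous and contributes nothing. At a concave corner $Q$ the situation is genuinely different, but also easy: there is a neighborhood $U$ of $P$ on which $Q$ remains the unique closest boundary point for every $P'\in U$, so $\distsym_\shapesym(P')=\norm{P'-Q}$ holds on all of $U$, and this expression is manifestly smooth in $P'$ because $P\neq Q$. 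Thus in every case $\distsym_\shapesym$ is smooth at $P$, which would complete the proof.
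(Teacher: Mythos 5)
Your main line is exactly the paper's: the paper's entire proof reads ``Immediate from \refthm{theorem:distfunc}(b) and the definition of shape,'' so your reduction to $\abs{\bsupp{P}{\shapesym}}=1$ via compactness of $\del\shapesym$ and the invocation of part (b) with $k=\infty$ is the intended route, merely spelled out. The extra care you take over junction points between the smooth boundary arcs is aimed at precisely the place where a real difficulty lives, and your dismissal of convex corners is fine; but your resolution of the concave-corner case has a genuine gap.

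The claim that for a concave corner $Q$ there is a neighborhood $U$ of $P$ on which $Q$ remains the unique closest boundary point fails when $P$ lies on one of the two normal rays bounding the sector of points whose nearest boundary point is $Q$. Concretely, take the reflex corner at the origin $O$ of the region consisting of all $(x,y)$ with $x\leq 0$ or $y\leq 0$. For every $P=(0,y)$ with $y<0$ the unique closest boundary point is $O$, so $P\notin\MA(\shapesym)$; yet points arbitrarily close to $P$ with $x>0$ have their foot on the edge $\st{(x,0):x\geq 0}$ rather than at $O$. Across that ray, $\distsym_\shapesym$ is the gluing of $\norm{P'-O}$ (for $x\leq 0$) with $\abs{y}$ (for $x\geq 0$); the gluing is $C^1$, but the second $x$-derivative jumps from $1/\abs{y}$ to $0$, so $\distsym_\shapesym$ is genuinely not smooth there. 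Thus the concave-corner case cannot be repaired by your argument: smoothness actually fails on a set of non--medial-axis points whenever $\del\shapesym$ has a concave corner, and the statement is only safe if one either forbids corners or enlarges the excluded set beyond $\MA(\shapesym)$. (A related caveat, which your proof shares with the paper's: \refthm{theorem:distfunc}(b) as drawn from Federer and Krantz--Parks really requires $\distsym_\shapesym(P)$ to stay below the focal distance of $Q$, so limit points of $\MA(\shapesym)$ such as the centers of curvature at an ellipse's vertices are also problematic even for smooth boundaries; taking the stated theorem at face value, as you do, inherits that issue.)
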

\begin{proof}
  Immediate from \refthm{theorem:distfunc}(b) and the definition of
  shape.
\end{proof}

%%% Theorem
\begin{theorem}\label{theorem:me-is-thin}
  Let $\shapesym$ be a shape. Then
  \begin{enumerate}
  \item [(i)]   $\MA(\shapesym)$ has no interior, i.e. it is thin.
  \item [(ii)]  $\MA(\shapesym)$ consists of a finite number of connected
    piece-wise smooth curves.
  \item [(iii)] if $P\in\MA(\shapesym)$, $Q\in\bsupp{P}{\shapesym}$
  and $C$ is the center of curvature for $\del\shapesym$ at $Q$, then
  $\norm{P-Q} \leq \norm{C-Q}$ whenever $\norm{C-P}\leq \norm{C-Q}$.
  \end{enumerate}
\end{theorem}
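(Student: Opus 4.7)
The plan is to dispose of (i) quickly using the differentiability of $\distsym_\shapesym$ off $\MA(\shapesym)$, to handle (ii) by locally solving the equidistance equations for pairs of smooth boundary arcs, and to attack (iii) by a direct geometric comparison of the maximal inscribed disk at $P$ with the osculating circle at $Q$.

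For (i), suppose $P \in \MA(\shapesym)$ with distinct $Q_1, Q_2 \in \bsupp{P}{\shapesym}$. Moving along the direction $u_i = (Q_i - P)/\norm{Q_i - P}$ decreases the distance to the boundary at unit rate, since $\distsym_\shapesym(P + t u_i) \leq \norm{P + t u_i - Q_i} = \distsym_\shapesym(P) - t$ for small $t > 0$, with matching lower bound from the $1$-Lipschitz property. If $\distsym_\shapesym$ were differentiable at $P$, the directional derivative along $u_i$ would equal $-1$ for each $i$, forcing $\grad\distsym_\shapesym(P) = -u_i$ for both $i=1,2$, which contradicts $Q_1 \neq Q_2$. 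Hence $\distsym_\shapesym$ is nondifferentiable on all of $\MA(\shapesym)$; since $\distsym_\shapesym$ is $1$-Lipschitz, Rademacher's theorem guarantees differentiability almost everywhere, so $\MA(\shapesym)$ has Lebesgue measure zero and therefore empty interior.

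For (ii), I would decompose $\del\shapesym$ into its finitely many smooth arcs and, for each unordered pair of arcs, study the locus of points $P$ in $\shapesym$ equidistant from one point on each arc with the two connecting segments orthogonal to the respective boundary tangents (the bitangent-disk condition). The implicit function theorem turns this into a $1$-parameter smooth curve wherever the natural non-degeneracy condition (transversality of the two osculating circles) holds. Points with $\abs{\bsupp{P}{\shapesym}} \geq 3$ satisfy three simultaneous equidistance conditions, hence form a $0$-dimensional, and therefore (by the boundedness of $\shapesym$) finite, set of junctions. Unioning over the finitely many pairs of arcs yields $\MA(\shapesym)$ as a finite collection of piecewise smooth curves meeting at finitely many junctions.

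For (iii), since $Q$ minimizes $\norm{P-\cdot}$ on $\del\shapesym$, the segment from $P$ to $Q$ is orthogonal to the tangent of $\del\shapesym$ at $Q$, so $P$, $Q$, and the center of curvature $C$ are collinear on the inward normal. Place $Q=(0,0)$ with horizontal tangent and the interior above, so locally $\del\shapesym$ is $y = \frac{x^2}{2R} + O(x^3)$ with $R = \norm{C-Q}$. The open disk $B_{\norm{P-Q}}(P)$ must lie in $\shapesym$, and near $Q$ its lower arc is $y = \frac{x^2}{2\norm{P-Q}} + O(x^4)$. Requiring the disk to sit above the boundary near $Q$ forces the disk to be at most as curved as the boundary, i.e. $\norm{P-Q} \leq R = \norm{C-Q}$. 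The hypothesis $\norm{C-P} \leq \norm{C-Q}$ is used only to place $P$ on the $C$-side of $Q$ within distance $2R$, so that the osculating-circle comparison is the relevant one.

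The main obstacle, I expect, lies in the finiteness claims of (ii): one must rule out accumulation of smooth branches or of junction points, especially near corners or points of infinite curvature permitted by the definition of a shape, and justify the non-degeneracy conditions invoked by the implicit function theorem. By contrast, (i) and (iii) reduce to Rademacher's theorem combined with \refthm{theorem:distfunc}, and to a second-order geometric comparison at $Q$, respectively.
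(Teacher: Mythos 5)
The paper does not actually prove this theorem: it cites Matheron and Calabi--Hartnett for (i), Choi, Choi and Moon \cite{Choi97} for (ii), and for (iii) it merely restates the claim in words. Your proposal therefore takes a genuinely different route by supplying direct arguments, and for (i) and (iii) those arguments are correct and essentially complete. The nondifferentiability-plus-Rademacher proof of (i) is clean and even yields the stronger conclusion that $\MA(\shapesym)$ has Lebesgue measure zero; the osculating-circle comparison for (iii) is exactly the right second-order computation, and you are right that the hypothesis $\norm{C-P}\leq\norm{C-Q}$ serves only to discard the orientation in which the center of curvature lies on the exterior normal, where no such bound can hold. What your approach buys is self-containment and quantitative content; what the paper's citations buy is coverage of the genuinely delicate part, namely (ii).

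Part (ii) is where your proposal has a real gap, and you have correctly located it. The implicit-function-theorem argument yields smooth equidistance curves only where the transversality condition holds, and the degenerate set --- points of equal-curvature contact between the maximal inscribed disk and the boundary, and junctions with $\abs{\bsupp{P}{\shapesym}}\geq 3$ --- need not be finite for boundary arcs that are merely $C^\infty$. The finiteness in (ii) is known to require more than smoothness: the theorem of Choi, Choi and Moon is proved for piecewise \emph{real-analytic} boundaries, and there exist smooth non-analytic plane domains whose medial axis has infinitely many branch points, so ruling out accumulation of branches is not a technicality that a compactness argument will supply. Under the paper's own definition of a shape (finitely many smooth boundary curves, infinite curvature permitted), your strategy cannot be closed as stated, and indeed the statement itself implicitly needs the analyticity hypothesis; the paper silently inherits this through its citation, but your proof of (ii) remains a sketch rather than a proof.
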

\begin{proof}
  Part (i) is shown in \cite{matheron88:examp_topol_proper_skelet} and
  in \cite{calabi68:skeletons}; (ii) is treated in detail by
  \cite{Choi97}. Part (iii) asserts that if a medial axis
  point is inside the circle of curvature of a point in its boundary
  support, then it cannot be further than the center of curvature.
\end{proof}

%%% Theorem
\begin{theorem} \label{thm:unique-existence}
  Let $\shapesym$ be a shape. There is a unique $c$ on $\shapesym$
  such that $c=0$ on $\del\shapesym$ and $\grad^2c=-K/D$.
\end{theorem}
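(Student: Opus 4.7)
The plan is to treat existence and uniqueness separately by standard elliptic techniques.

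Uniqueness is the easier half and falls out of a one-line energy argument. If $c_1,c_2$ both satisfy the hypothesis, their difference $u=c_1-c_2$ is harmonic on $\shapesym$ and vanishes on $\del\shapesym$. Green's first identity gives
$$
\int_\shapesym \norm{\grad u}^2 \, dA \;=\; \int_{\del\shapesym} u \, \partial_n u \, ds \;=\; 0,
$$
so $\grad u\equiv 0$ on every connected component of $\shapesym$; combined with the zero boundary values this forces $u\equiv 0$. Equivalently one could apply the maximum principle to $u$ and $-u$ and reach the same conclusion.

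For existence I would decompose $c=p+h$, where $p$ is a fixed particular solution of the inhomogeneous equation and $h$ is a harmonic correction that absorbs the boundary condition. A convenient explicit choice is $p(x,y) = -\frac{K}{4D}\paren{x^2+y^2}$, which satisfies $\grad^2 p = -K/D$ on all of $\Reals^2$. Then $h$ must solve the classical Dirichlet problem: harmonic on $\shapesym$ with continuous boundary data $h|_{\del\shapesym} = -p|_{\del\shapesym}$. On a bounded domain whose boundary consists of finitely many smooth curves, Perron's method (or, equivalently, the variational $H^1_0(\shapesym)$ formulation minimizing the Dirichlet energy) produces such an $h$: every boundary point, whether interior to a smooth arc or a corner where two arcs meet, satisfies an exterior cone condition and hence admits a Perron barrier. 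Setting $c = p + h$ gives the required solution.

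The one delicate point I anticipate is the behavior at boundary corners of $\del\shapesym$. Here $c$ need not be $C^2$ up to the boundary, but the theorem asks only that $\grad^2 c = -K/D$ hold inside $\shapesym$ and that $c=0$ on $\del\shapesym$. Interior elliptic regularity handles the former (so the equation holds classically throughout $\shapesym - \del\shapesym$), and the Perron barriers handle continuous attainment of the zero boundary data. This is the one step at which the argument leans on cited classical potential theory rather than a fully self-contained derivation; everything else, including uniqueness, is elementary.
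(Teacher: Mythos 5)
Your proposal is correct, but it is worth noting that the paper does not actually prove this theorem at all: its ``proof'' consists entirely of citations to Courant--Hilbert, to Theorem 4.3 of Gilbarg--Trudinger, and to Kannenberg. What you have written is, in effect, the standard argument that those references contain, so the comparison is between an actual (sketched) proof and a pointer to the literature. Your uniqueness argument is fine, and the maximum-principle variant you mention in passing is the more robust of the two, since applying Green's first identity to $u=c_1-c_2$ requires some justification of the boundary integral when $u$ is only continuous up to $\del\shapesym$ and $C^2$ inside; the maximum principle needs no such care. Your existence argument via the particular solution $p=-\tfrac{K}{4D}(x^2+y^2)$ plus a harmonic correction solved by Perron's method is the classical route. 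The one imprecision is your claim that every boundary point of a domain bounded by finitely many smooth curves satisfies an exterior cone condition: this can fail at an outward-pointing cusp where two arcs meet tangentially. In two dimensions the right tool is the weaker Lebesgue continuum criterion --- any boundary point that is the endpoint of a nondegenerate continuum in the complement of the open interior is regular for the Dirichlet problem --- and since every point of $\del\shapesym$ lies on a smooth boundary arc of positive length, every boundary point admits a barrier and Perron's method goes through. With that adjustment your argument is complete and, unlike the paper's, self-contained up to classical potential theory. What the paper's approach buys is brevity and generality (the cited theorems cover far more general elliptic operators); what yours buys is an explicit, checkable argument tailored to the Poisson equation actually used in \refthm{theorem:maintheorem}.
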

\begin{proof}
  See \cite[p.~246]{courant-hilbert62} or Theorem 4.3 of
  \cite{gilbarg83:book} for a more general statement and proof. Also
  see \cite{kannenberg89}.
\end{proof}

%%% Theorem
\begin{theorem}[Divergence] \label{thm:divergence}
  Let $\del B_\eps(P)$ be a circle or radius $\eps$ centered at
  $P\in\Reals^2$, $\cal{N}$ the inner normals.  Then
  $$
  \grad^2 c(P) = \lim_{\eps\to 0} \int_{\del B_\eps(P)}
  \inner{\grad c}{N} ds .
  $$
\end{theorem}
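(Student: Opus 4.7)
The plan is to reduce the claim to the classical two-dimensional divergence theorem (Green's theorem) applied to the vector field $\grad c$ on the small disk $B_\eps(P)$, and then pass to the limit using continuity of $\grad^2 c$. First, I would invoke elliptic regularity: since $c$ is assumed to satisfy the Poisson equation $\grad^2 c = -K/D$ in a neighborhood of $P$ (Theorem \ref{thm:unique-existence}), the function $c$ is in particular $C^2$ at $P$, so $\grad c$ is a $C^1$ vector field and the classical divergence theorem applies without any regularity caveats.

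Next, for each sufficiently small $\eps > 0$ so that $\overline{B_\eps(P)} \subset \shapesym - \del\shapesym$, Green's theorem in the plane gives
\begin{equation*}
\int_{B_\eps(P)} \grad^2 c \, dA \;=\; \int_{\del B_\eps(P)} \inner{\grad c}{N_{\text{out}}} \, ds,
\end{equation*}
where $N_{\text{out}}$ is the outer normal. Since the inner normal $\cal N$ used in the statement differs by a sign, a sign convention has to be chosen consistently; the natural reading of the theorem is after a normalization by the disk area $\pi\eps^2$. Then, by the mean value theorem for continuous integrands on a connected compact set, there is some $P_\eps \in B_\eps(P)$ with
\begin{equation*}
\frac{1}{\pi \eps^2} \int_{B_\eps(P)} \grad^2 c \, dA \;=\; \grad^2 c(P_\eps).
\end{equation*}
Since $\grad^2 c$ is continuous at $P$ and $P_\eps \to P$ as $\eps \to 0$, the right-hand side converges to $\grad^2 c(P)$. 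Combining with Green's identity yields the claimed limiting expression for the Laplacian as a normalized limit of the outward flux of $\grad c$ through the boundary circle.

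The only substantive step is the continuity argument — everything else is bookkeeping. The main obstacle I foresee is not mathematical but interpretive: the statement as written omits the $1/(\pi\eps^2)$ normalization and specifies inner rather than outer normals, so care is needed in stating exactly which limit is meant. Once the normalization and sign are pinned down, the proof is a direct application of Green's theorem plus continuity of $\grad^2 c$, and the result is of the expected local-flux flavor that motivates its use in relating intracellular Laplacians to fluxes across cell interfaces in \refresult{cph-result}.
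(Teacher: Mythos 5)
Your argument is correct, and it is worth noting that the paper itself does not supply a proof at all for this statement: it simply defers to a page reference in a differential-geometry text. Your route --- Green's theorem on $B_\eps(P)$, followed by the mean value theorem for integrals and continuity of $\grad^2 c$ --- is the standard elementary planar proof, and it is entirely adequate here since $c$ is smooth away from the boundary by elliptic regularity. What your self-contained version buys over the citation is transparency about exactly what regularity is used ($C^2$ near $P$ suffices) and, more importantly, it forces the issue you correctly flagged: as literally written the statement is false, because without the $\frac{1}{\pi\eps^2}$ normalization the flux integral over a shrinking circle tends to $0$, and with inner normals the sign is reversed. That the intended statement carries the normalization is confirmed by its actual use later in the paper, where \reflemma{lemma:error} writes the averaged flux
$$
\lim_{\eps\to 0}\frac{1}{\pi\eps^2}\int_{\del R}\inner{\grad c}{\mathcal{N}}\,ds
$$
explicitly. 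So your proof is right, your reading of the intended claim is the one consistent with how the theorem is deployed in the Upper Bound argument, and the only caveat is that you are proving a corrected statement rather than the one printed.
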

\begin{proof}
  See p.~151 in \cite{warner:diffman}.
\end{proof}

\begin{definition}
  The $\Theta$-notation for asymptotic behavior of a function is
  defined as:
  
  $$
  \Theta (g(n)) = \st{f(n) : \exists
    c_1,c_2,n_0~\mathrm{positive}~s.t.~\forall n>n_0, 0\leq c_1 g(n)
    \leq f(n) \leq c_2 f(n)} .
  $$
\end{definition}

%%%%%%%%%%%%%%%%%%%%%%%% 
% Section
% \section{Constant Boundary Condition}
% \label{sec:cst-bdry-condition}

% This is the main equation assuming the constant boundary condition
% (Dirichlet).
% %%

%%%%%%% SubSubSection
\subsubsection{Statement of Result}

\refresult{cph-result} is based on the following theorem.

\begin{theorem}\label{theorem:maintheorem}
  Let $\shapesym$ be a shape and $c:\shapesym\to\Reals$ the unique
  function satisfying $c(x,y)=0$ on $(x,y)\in\del\shapesym$ and
  \begin{equation}\label{eq:poisson}
    \grad^2 c = -\frac{K}{D}\quad.
  \end{equation}

  Suppose $P\in\shapesym$ is such that
  $\distsym_{\shapesym}(P) = L = \sup_{\shapesym} \distsym_\shapesym$
  and $Q\in\bsupp{P}{\shapesym}$. Suppose the smallest concave
  curvature radius is $pL$ with $p>0$. Then,
  \begin{enumerate}
  \item [(a)] $c(Q) \in \Theta(L^2)$,
    
  \item [(b)] $\frac{K}{2D} L \leq \abs{\grad c} \leq \frac{K}{D}L
    \frac{2p+1}{p}$,
    
  \item [(c)] $\sup_{\del\shapesym} \abs{\grad c} >
    \sup_{\shapesym-\del\shapesym} \abs{\grad c}$
  \end{enumerate}
\end{theorem}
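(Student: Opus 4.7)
The plan is to prove (c), (b), then (a), since (c) localizes $\sup|\grad c|$ onto $\del\shapesym$, the upper bound in (b) needs that, and the upper bound in (a) drops out by integration along $[Q,P]$. For (c), I would show that $u := |\grad c|^2$ is subharmonic: because $\grad^2 c \equiv -K/D$ is constant, $\grad(\grad^2 c)=0$, and a direct expansion gives $\grad^2 u = 2|\mathrm{Hess}\,c|^2 \geq 0$. The strong maximum principle then forces the maximum of $u$ over $\shapesym$ onto $\del\shapesym$; an interior maximum would make $u$ constant, hence $\mathrm{Hess}\,c\equiv 0$ and $c$ affine, which contradicts $\grad^2 c = -K/D \neq 0$ together with $c = 0$ on the boundary.

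For the lower bounds in (b) and (a), I would use the inscribed disk $B_L(P)\subset\shapesym$ (which fits by definition of $L$) and the radial Poisson solution $v(\rho) = \frac{K}{4D}(L^2 - \rho^2)$ vanishing on $\del B_L(P)$. Since $c\geq 0 = v$ on $\del B_L(P)$, the harmonic function $c-v$ is non-negative everywhere in $B_L(P)$. Evaluating at $P$ gives $c(P)\geq \frac{K}{4D}L^2$. At $Q$ the tangent to $\del B_L(P)$ coincides with the tangent to $\del\shapesym$ (since $Q$ is closest to $P$), so the inward normals agree, and differentiating $c-v\geq 0$ at $Q$ in the normal direction yields $|\grad c(Q)|\geq \frac{KL}{2D}$.

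The upper bound in (b) is the main obstacle. By (c), $\sup|\grad c|$ is attained at some $Q^\ast\in\del\shapesym$. The plan is to construct a disk-shaped upper barrier $\phi$ tangent to $\del\shapesym$ at $Q^\ast$, with $\phi(Q^\ast)=0$, $\grad^2\phi\leq -K/D$, and $\phi\geq c$ on the local comparison region, so that $\phi-c$ is superharmonic with minimum at $Q^\ast$ and a normal-derivative comparison bounds $|\grad c(Q^\ast)|$ from above. The concavity hypothesis enters through the osculating circle at $Q^\ast$: at a concave boundary point, $\shapesym$ lies locally inside a disk of radius $\geq pL$, and \refthm{theorem:me-is-thin}(iii) places the corresponding medial axis point within distance $pL$ of $Q^\ast$. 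Combining these with the inradius bound $L$ to tune the barrier radius is what produces the exact constant $(2p+1)/p$. The delicate step is ensuring $\phi\geq c$ on the auxiliary piece of the comparison boundary without losing $\phi(Q^\ast)=0$.

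Finally, the upper bound in (a) follows by integrating $\grad c$ along the segment $[Q,P]$ of length $L$, which lies in $\shapesym$ since it is a radius of $B_L(P)$:
\[
c(P)\;=\;\int_0^L\grad c\cdot\hat{n}\,d\rho\;\leq\;L\,\sup_{\shapesym}|\grad c|\;\leq\;\frac{K L^2}{D}\cdot\frac{2p+1}{p},
\]
combining (b)'s upper bound with (c). Together with the lower bound above, this gives $c(P)\in\Theta(L^2)$, with $\Theta$-constants depending on $p$.
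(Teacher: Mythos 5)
Your arguments for part (c) and for the two lower bounds are correct and take a genuinely different (and cleaner) route than the paper. For (c) the paper does not use subharmonicity of $\abs{\grad c}^2$ at all: it argues via level sets $\gamma_{c_0}$ of $c$, translates the level curve to touch $\del\shapesym$ at the point of maximal boundary gradient, and invokes a containment/comparison lemma for solutions on nested shapes. Your observation that $\grad^2\abs{\grad c}^2 = 2\abs{\mathrm{Hess}\,c}^2\ge 0$ when $\grad^2 c$ is constant, followed by the strong maximum principle, is a standard P-function argument and is shorter and more rigorous than the paper's translation argument. Likewise your lower bounds via the inscribed disc $B_L(P)$ and the Hopf-type normal-derivative comparison at $Q$ recover exactly the paper's Disc Lemma bounds $c(P)\ge \frac{K}{4D}L^2$ and $\abs{\grad c(Q)}\ge \frac{K}{2D}L$ (you should note explicitly that $c\ge 0$ on $\shapesym$, so $c-v\ge 0$ on $\del B_L(P)$, but that is immediate from the maximum principle).

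The genuine gap is the upper bound in (b), which you yourself flag as ``the main obstacle'' and then leave as a sketch. This is precisely the hard part of the theorem, and a single disc-shaped barrier tangent at $Q^\ast$ will not work: a local osculating disc of radius $pL$ controls $\del\shapesym$ only near $Q^\ast$, while the comparison $\phi\ge c$ must hold on the entire auxiliary boundary of the comparison region, where $c$ can be as large as $\Theta(L^2)$ with constants you have not yet controlled; tuning the barrier to dominate $c$ there generically destroys either $\phi(Q^\ast)=0$ or $\grad^2\phi\le -K/D$. The paper's resolution is a \emph{global} supersolution built from the annulus (Open Doughnut) solution composed with the distance map, $u(x,y)=2\,c_D\bigl(pL+\distsym_\shapesym(x,y);\,pL,\,L\bigr)$, which vanishes on all of $\del\shapesym$, satisfies $\grad^2 u\le -K/D$ off the medial axis (the factor $2$ and the case analysis over concave/convex osculating circles and straight boundary are what make this work), and is handled on the medial axis by a discretized comparison argument (Lemmas \ref{lemma:mondyn}, \ref{lemma:containment}, \ref{lemma:error}, \ref{lemma:convc-ondisc}, \ref{lemma:ubound}); the constant $\frac{2p+1}{p}$ then falls out of $\frac{\del}{\del r}\bigl[2c_D(r)\bigr]$ evaluated at $r=pL$. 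Without an explicit supersolution of this kind (or an equivalent global barrier), your upper bound in (b) is unproven, and consequently so is the upper half of the $\Theta(L^2)$ claim in (a), since your integration along $[Q,P]$ relies on it.
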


If an areole is regarded as a discretization of a shape $\shapesym$,
then the discrete approximation behaves as stated in
\refthm{theorem:maintheorem}. Observe that $\grad c$ at $\del\shapesym$ is
perpendicular to the boundary because $c=0$ there; hence, $\grad c(Q)$
points in the direction of $P$ according to \refthm{theorem:distfunc}.

%%%%%%% SubSubSection
\subsubsection{Organization of the Proof}

Parts (a) and (b) of \refthm{theorem:maintheorem} follow from
\reflem{lemma:bounds}.  The idea of the proof is to find appropriate
bounding functions, one from below $v$ and another $u$ from above,
that sandwich the unique solution $c$ and that take the same values
at the boundary. Thus, $v\leq c\leq u$ everywhere and $\abs{\grad
  v}\leq \abs{\grad c}\leq \abs{\grad u}$ on points where $v=c=u$,
i.e. the boundary. Since the value of $c$ must be the same at the
boundary, its gradient there must be perpendicular to the boundary
which gives the direction as claimed in \refresult{cph-result}(b).
\reflem{lemma:disc} and \reflem{lemma:ubound} give the lower bound and
upper bound constructions and \reflem{lemma:bounds} collects them.

Part (c) of \refthm{theorem:maintheorem} is necessary to show that the
c-vascular strand creation process is well defined.
\refresult{cph-result}(c) is the non-technical version of this claim
which is stated more precisely in \reflem{lemma:decgrad}. The proof is
based on the idea that the boundary may be seen as evolving by
considering level sets of $c$, i.e. points $\gamma_{c_0}$ where
$c(x,y) = c_0$.  The gradient must be perpendicular to this level set
and the solution of \refeq{eq:poisson} inside it follows the same
constraints as the shapes on which the problem is defined. We may move
the level set curve $\gamma_{c_0}$ so that the point on $\gamma_{c_0}$
which is on the gradient curve initiated at the point $Q$ of maximal
gradient on $\gamma_0$ touches $Q$ for small enough $c_0$. Knowing
that the solution on the smaller shape must be strictly smaller than
on the original shape shows that the maximum gradient magnitude must
be strictly decreasing as the curve evolves.  This is true for all
curves, including the evolved ones (i.e.  $\gamma_{c_1}$ for
$c_1>c_0$), so the gradient in the interior of the shape must be lower
than the maximum on the boundary.

%%%%%%% SubSubSection
\subsubsection{The Proof}

%% The first few items are the basic definitions and results from the
%% literature. The proof proper begins with \reflem{lemma:mondyn} and the
%% result is a consequence of \reflem{lemma:bounds} and
%% \reflem{lemma:decgrad}.

We begin with \reflem{lemma:mondyn} which will be used (indirectly) in
most of the proofs that follow. It states that a discretization of the
dynamic process will always move the concentration values in the same
direction (up or down) if this direction is locally the same for all
discrete points. This fact will be used to prove the next result,
\reflem{lemma:containment}, which states that the equilibrium solution
over a shape completely contained in another shape will be bounded
above by the solution over the larger shape. This holds even if the
initializing function is not smooth.

%%% Lemma
\begin{lemma}\label{lemma:mondyn}
  Let $c_t = D \grad^2 c + K$ be approximated on a square lattice by
  $p_i$ and its four neighbors $n_j$ by $\approxct=
  \frac{D}{h^2}\paren{\sum_j c(n_j) - 4 c(p_i)} + K$ where $h$ is the
  lattice spacing. Suppose that $\approxct \leq (\geq) 0$ everywhere
  on the domain of definition at time $t_0$. Then

  \begin{enumerate}
  \item[(a)] $c + \tau \approxct$ will also satisfy the inequality if
    $0<\tau \leq \frac{h^2}{4D}$; and

  \item[(b)] the discrete dynamics with such $\tau$ make $c$ decrease
  (increase) monotonically everywhere.
  \end{enumerate}
\end{lemma}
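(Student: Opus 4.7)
My plan is to prove part (a) by direct algebraic manipulation of the explicit Euler update and to deduce part (b) as a one-line consequence. The key observation is that the update $c \mapsto c + \tau \approxct$ is affine in the $c$-values at each node, so the new $\approxct$ at the next time step is simply a recombination of the old $\approxct$ values at a point and its four neighbors, with the constant $K$ cancelling.

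First I would write $\approxct'(p_i)$ at the updated time by substituting $c'(q) = c(q) + \tau \approxct(q)$ at $p_i$ and each neighbor $n_j$ into the stencil definition of $\approxct$. Using the linearity of the discrete Laplacian and cancelling the $K$ term, a short calculation yields the identity
\begin{equation*}
\approxct'(p_i) = \paren{1 - \frac{4\tau D}{h^2}} \approxct(p_i) + \frac{\tau D}{h^2} \sum_j \approxct(n_j).
\end{equation*}
The hypothesis $0 < \tau \leq h^2/(4D)$ is precisely what makes both coefficients non-negative. Since each of the five quantities on the right is assumed to share a common sign, the same sign is inherited by $\approxct'(p_i)$, which establishes part (a).

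For part (b), the Euler update itself is $c'(p_i) - c(p_i) = \tau \approxct(p_i)$, which carries the sign of $\approxct$. So if $\approxct \leq 0$ (respectively $\geq 0$) at $t_0$, then $c$ decreases (increases) at every node at $t_0$; by part (a) applied inductively over successive time steps, the sign of $\approxct$ persists, and the monotonicity holds throughout the discrete evolution.

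The argument is essentially routine algebra, so I do not expect any serious obstacle. The one point requiring care is lattice nodes whose stencil reaches the boundary $\del\shapesym$: Dirichlet data hold those values fixed, so the corresponding $\approxct$ entries can be treated as zero in the stencil identity above and the coefficient argument goes through unchanged. This bookkeeping is the closest thing to a difficulty in the proof.
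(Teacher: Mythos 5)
Your proof is correct and follows essentially the same route as the paper: the identity $\approxct' = \paren{1 - 4\tau D/h^2}\approxct(p_i) + (\tau D/h^2)\sum_j \approxct(n_j)$ is exactly what the paper's expansion produces (its final line $\alpha\Lambda_{p_i}(1-4\tau\alpha) + \alpha^2\tau\sum_j\Lambda_{n_j} + K$ regroups to your convex combination), and the sign-preservation and monotonicity conclusions are drawn the same way. Your explicit handling of boundary nodes is a small bookkeeping point the paper leaves implicit, but it does not change the argument.
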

\begin{proof}
  Part b) follows directly from a).
  Let $\alpha = D/h^2$ and $\Lambda_p = \sum_{j=1,4} c(n_j) - 4 c(p)$,
  so $\approxct(t_0, p) = \alpha \Lambda_{p} + K$.
  After the time step $\tau$ the approximation to $c_t$ becomes:
  $$
  \begin{array}{rcl}
    \approxct(t_0+\tau) &=& D \grad^2(c + \tau \approxct(t_0)) + K \spcing\\
    &=& \displaystyle
    \alpha\paren{\sum_j \paren{c(n_j) + \tau \approxct(t_0,n_j) } -
      4 \paren{c(p_i) + \tau \approxct(t_0,p_i)}} + K
    \spcing
    \\
    &=&
    \displaystyle
    \alpha \paren{\Lambda_{p_i} + \tau\paren{ \sum_{j=1}^4
        \approxct(t_0,n_j) - 4\approxct(t_0,p_i)}} + K
    \spcing\\
    &=&
    \displaystyle
    \alpha \paren{\Lambda_{p_i} + \tau\paren{ \sum_{j=1}^4
        (\alpha\Lambda_{n_j}+K) - 4(\alpha\Lambda_{p_i}+K)}} + K
    \spcing\\
    &=&
    \displaystyle
    \alpha \paren{\Lambda_{p_i} + \tau\alpha\paren{ \sum_{j=1}^4
        \Lambda_{n_j} - 4\Lambda_{p_i}}} + K
    \spcing\\
    &=&
    \displaystyle
    \alpha\Lambda_{p_i} (1-4\tau\alpha) +\alpha\tau\alpha \sum_{j=1}^4
    \Lambda_{n_j} + K
    \spcing\\
  \end{array}
  $$
  
  \noindent
  Now, all of the $\alpha\Lambda_{n_j}+K\leq 0$ and
  $\alpha\Lambda_{p_i}+K\leq 0$ by assumption. Also,
  $0<4\tau\alpha\leq 1$. So choosing the largest $\Lambda_{n_j}$ and
  replacing for the other three bounds the value above (since
  $\Lambda_{n_j}<0$), shows that $\tau$ is used in a linear
  interpolation between two non-positive numbers. This finishes the
  claim.
\end{proof}

%%%% Lemma
\begin{lemma}\label{lemma:containment}
  Let $u_t=D\grad^2u+K=0$ over $\shapesym$ with $u=0$ on
  $\del\shapesym$. If $\shapesym'\subset \shapesym$ and $D\grad^2
  c+K=0$ on $\shapesym'$ with $c=0$ on $\del\shapesym'$ then $c\leq u$
  on $\shapesym'$. If $\shapesym\not\subset\shapesym'$, then $c < u$
  everywhere on $\shapesym'-\del\shapesym'$.
\end{lemma}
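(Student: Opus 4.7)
The plan is to build on \reflem{lemma:mondyn} by using the discrete dynamics as a comparison tool, rather than invoking the classical maximum principle directly (this keeps the argument inside the framework the paper has set up). Define an initial function $u_0$ on the larger shape $\shapesym$ by
$$
u_0 = \begin{cases} c & \text{on } \shapesym' \\ 0 & \text{on } \shapesym - \shapesym' \end{cases}
$$
which is well-defined and continuous across $\del\shapesym'$ since $c=0$ there. I would then run the discrete dynamics for $u_t = D\grad^2 u + K$ on $\shapesym$ with the Dirichlet condition $u=0$ on $\del\shapesym$ starting from $u_0$, and use \reflem{lemma:mondyn} to compare the limit (which is $u$) with $u_0$.

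The heart of the argument is showing that $\approxct(t_0, p) \geq 0$ at every interior lattice point $p$ of $\shapesym$ at time $t_0 = 0$. I would break this into cases. First, at interior points of $\shapesym'$, $u_0 = c$ satisfies the equilibrium equation, so $\approxct = 0$ there. Second, at lattice points on $\del\shapesym'$ or in $\shapesym - \shapesym'$, $u_0(p) = 0$ and every neighbor $n_j$ has $u_0(n_j) \geq 0$; hence $\Lambda_p \geq 0$ and $\approxct = \alpha\Lambda_p + K \geq K > 0$. The auxiliary fact that $c \geq 0$ on $\shapesym'$ (needed to ensure $u_0 \geq 0$) can be obtained by a one-line application of \reflem{lemma:mondyn} itself: starting from the zero function on $\shapesym'$, $\approxct \equiv K > 0$ initially, so the discrete dynamics are monotonically increasing and converge to $c$, giving $c \geq 0$.

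With this in hand, \reflem{lemma:mondyn}(b) (applied with the reversed inequality, so the dynamics are monotonically increasing) yields a sequence $u_0 \leq u_1 \leq \cdots$ converging to the steady state $u$ on $\shapesym$. Restricting to $\shapesym'$ gives $c = u_0|_{\shapesym'} \leq u|_{\shapesym'}$, which is the weak inequality. For the strict statement, when $\shapesym \not\subset \shapesym'$, there exists at least one interior lattice point $p^* \in \shapesym - \shapesym'$; by the case analysis above, $\approxct(0, p^*) \geq K > 0$ strictly, so $u$ at $p^*$ strictly exceeds $u_0$ after the first time step.

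The main obstacle will be propagating this pointwise strict increase to all of $\shapesym' - \del\shapesym'$: the result at $p^*$ alone does not touch $\shapesym'$ directly. I would handle this by iterating the discrete dynamics and arguing that, because the lattice is connected and the update at any point is a convex combination of neighbor values plus a positive additive term, a strict increase at any one interior point propagates outward by one lattice step each iteration. After finitely many iterations the strict inequality reaches every interior lattice point of $\shapesym'$, and in the continuum limit (refining the lattice) this becomes $c < u$ on $\shapesym' - \del\shapesym'$. This discrete-strong-maximum-principle step is the most delicate piece; an alternative, if one is willing to step outside the lemma's framework, is to observe that $w = u - c$ is harmonic on $\shapesym'$ with $w \geq 0$ on $\del\shapesym'$ and $w > 0$ on the portion of $\del\shapesym'$ lying in the interior of $\shapesym$ (by the weak inequality just proved applied to a slightly enlarged shape, or by the continuous strong maximum principle for the superharmonic $u$), and then apply the strong maximum principle to the harmonic $w$.
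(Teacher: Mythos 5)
Your argument is correct and rests on the same engine as the paper's --- \reflem{lemma:mondyn} used as a discrete comparison principle, with connectivity propagating strictness --- but you run the comparison in the mirror-image direction. The paper restricts $u$ to the smaller domain $\shapesym'$, observes that it satisfies the dynamics in the interior but is too large on $\del\shapesym'$, and lets the monotonically \emph{decreasing} dynamics on $\shapesym'$ carry it down to $c$; you instead extend $c$ by zero to the larger domain $\shapesym$, verify $\approxct\geq 0$ there ($=0$ inside $\shapesym'$, $\geq K>0$ on and outside $\del\shapesym'$), and let the monotonically \emph{increasing} dynamics on $\shapesym$ carry it up to $u$. The two constructions are dual and buy essentially the same thing; your version has the small advantage that the initialization is explicit and the sign check is a clean two-case computation, whereas the paper's phrasing (``lowering a boundary point decreases $\grad^2 u$ at its neighbors'') leaves the initialization implicit. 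Both proofs share the same soft spot on the strict inequality: a strict increase at one lattice point outside $\shapesym'$ must propagate through the lattice into the interior of $\shapesym'$, and neither the paper's one-sentence appeal to connectedness nor your finite-propagation argument addresses whether strictness survives the continuum limit $h\to 0$. Your fallback via the strong maximum principle for the harmonic difference $w=u-c$ on $\shapesym'$ is the one route here that actually closes that gap rigorously, so I would keep it as the primary argument for the strict part rather than as an aside.
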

\begin{proof}
    If $u=0$ over $\shapesym$, then $u_t = K > 0$ and, by
    \reflem{lemma:mondyn}b, $u>0$ in the interior
    $\shapesym-\del\shapesym$ after any non-zero time step. Thus, at
    equilibrium, $u$ will satisfy the dynamics everywhere on
    $\shapesym'$ except possibly on $\del\shapesym'$ where it may need
    to be lower because it the boundary conditions. If a boundary
    point is lowered in the discretization of the problem, any
    neighbor will see its $\grad^2 u$ decrease and strictly decrease
    if the neighbor is not moved. This holds at any time step and will
    affect all points after sufficiently long time because they are
    all connected. If $\shapesym\not\subset\shapesym'$, then $u$ is
    not a solution ($u>0$ somewhere on $\del\shapesym'$) and the
    dynamics on $\shapesym'$ will strictly monotonically lower it
    everywhere.
  \end{proof}

Now we show what the solution looks like in one dimension,
\reflem{lemma:1D}, and then we turn to two special shapes: the circle
(\reflem{lemma:disc}) and the open doughnut
(\reflem{lemma:odoughnut}). These shapes will be instrumental in
providing the lower and upper bounds needed later on.

%%%% Lemma
\begin{lemma}\label{lemma:1D}
  Suppose the domain is the segment $[0,L]$, $\c1D(0)=0$ and
  $\c1D'(L)=0$. Then the solution to \refeq{eq:poisson} is
  $$\c1D(r) = \frac{K}{D} \paren{ -\frac{r^2}{2} + r L } $$
\end{lemma}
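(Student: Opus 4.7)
The plan is to treat this as an elementary two-point boundary value problem for an ordinary differential equation, since in one dimension \refeq{eq:poisson} reduces to $\c1D''(r) = -K/D$, a constant. The approach is direct integration followed by determination of the two constants of integration from the two given boundary conditions.

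First I would integrate once with respect to $r$ to obtain $\c1D'(r) = -\frac{K}{D}r + C_1$, then apply the Neumann condition $\c1D'(L)=0$ to fix $C_1 = \frac{K}{D}L$. Second, I would integrate again to get $\c1D(r) = -\frac{K}{2D}r^2 + \frac{K}{D}Lr + C_2$, then apply the Dirichlet condition $\c1D(0) = 0$ to conclude $C_2 = 0$. Collecting terms yields exactly the stated expression $\c1D(r) = \frac{K}{D}\paren{-\frac{r^2}{2} + rL}$. A brief verification that both boundary conditions and the ODE are satisfied closes the existence side.

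For uniqueness, I would observe that the difference of two solutions $w = \c1D^{(1)} - \c1D^{(2)}$ satisfies $w'' = 0$ with $w(0) = 0$ and $w'(L) = 0$, so $w$ is affine with zero slope and zero value at $r=0$, hence identically zero. Alternatively one could appeal to \refthm{thm:unique-existence} interpreted in the one-dimensional slab setting, but the direct argument is shorter and self-contained.

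There is no real obstacle here; the only subtlety worth flagging is that the right boundary condition is Neumann rather than Dirichlet, reflecting the physical picture of a ``reflecting'' interior point in the one-dimensional analogue of a cell furthest from the vascular boundary. This asymmetry is what produces the linear term $rL$ rather than the symmetric parabola one would get from $\c1D(0) = \c1D(L) = 0$, and it is crucial for the later use of this lemma as a bounding profile along a radial direction in the two-dimensional proofs.
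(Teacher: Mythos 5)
Your proof is correct and amounts to the same elementary argument as the paper's, which simply verifies by inspection that $\del^2/(\del r)^2[\c1D(r)] = -K/D$ and appeals to uniqueness; you derive the formula constructively by integrating twice and checking the boundary conditions, and you supply the short uniqueness argument explicitly. No gaps.
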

\begin{proof}
  By inspection since the solution is unique: $\del^2/(\del
  r)^2[\c1D(r)] = -K/D$.
\end{proof}

%%%% Lemma
\begin{lemma}[Disc] \label{lemma:disc}
  Let the shape be a circle of radius $L$ centered at the
  origin. Suppose $c(x,y) = 0$ on the boundary where
  $x^2+y^2=L^2$. Then the solution to \refeq{eq:poisson} is
$$c(x,y) = \frac{1}{2} \c1D\paren{L-\sqrt{x^2+y^2}}$$
  and 
  $$\abs{\grad c(x,y)} = -\frac{1}{2}\frac{K}{D} \sqrt{x^2+y^2}$$
\end{lemma}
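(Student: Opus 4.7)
The plan is to exploit the rotational symmetry of the problem and reduce it to a one-dimensional ODE. Because \refthm{thm:unique-existence} guarantees a unique solution, and because both the PDE $\grad^2 c = -K/D$ and the boundary datum $c = 0$ on $\del\shapesym$ are invariant under rotations about the origin, the unique $c$ must itself be rotationally invariant: if $R$ is any rotation about the origin, then $c \circ R$ also solves the same problem, and uniqueness forces $c \circ R = c$. Hence $c(x,y) = f(r)$ for some function $f$ of $r = \sqrt{x^2+y^2}$.

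Next I would rewrite the Laplacian in polar coordinates, $\grad^2 c = f''(r) + \tfrac{1}{r} f'(r) = \tfrac{1}{r}(r f'(r))'$, so the equation reduces to $(r f'(r))' = -\tfrac{K}{D} r$. One integration gives $r f'(r) = -\tfrac{K}{2D} r^2 + A$. Boundedness of $c$ at the origin (a standard consequence of interior regularity for the Poisson equation, equivalently the requirement that the logarithmic term $A \log r$ produced at the next integration be absent) forces $A = 0$. Alternatively, applying \refthm{thm:divergence} on a small disc of radius $\rho$ about the origin shows $A=0$ directly, since the net inflow through the shrinking circle must match $-\tfrac{K}{D} \pi \rho^2$. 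A second integration together with $f(L) = 0$ then yields $f(r) = \tfrac{K}{4D}(L^2 - r^2)$.

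Finally I would verify the two claims of the lemma. Substituting into the explicit formula for $c_{1D}$ from \reflem{lemma:1D} gives
\begin{equation*}
  \tfrac{1}{2} c_{1D}(L-r) = \tfrac{1}{2}\tfrac{K}{D}\paren{ -\tfrac{(L-r)^2}{2} + (L-r)L } = \tfrac{K}{4D}(L^2 - r^2) = f(r),
\end{equation*}
which matches the stated formula. For the gradient, since $c$ depends only on $r$, $\grad c = f'(r)\,\hat r = -\tfrac{K r}{2D}\,\hat r$, and therefore $\abs{\grad c} = \tfrac{K}{2D}\sqrt{x^2+y^2}$; the minus sign in the stated formula I read as a typo or as a convention indicating that the gradient points inward.

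I do not anticipate any real obstacle: the only conceptually nontrivial step is the symmetry reduction, and it follows immediately from uniqueness. Everything else is an elementary ODE calculation and a one-line check against the expression for $c_{1D}$ already established in the preceding lemma.
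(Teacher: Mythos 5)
Your proposal is correct, but it takes a different (and longer) route than the paper. The paper's proof is a one--line verification: since \refthm{thm:unique-existence} guarantees uniqueness, it suffices to check that the candidate $c(x,y)=\tfrac{1}{2}\c1D(L-\sqrt{x^2+y^2})=\tfrac{K}{4D}(L^2-x^2-y^2)$ satisfies $c_{xx}=c_{yy}=-\tfrac{K}{2D}$ (hence $\grad^2 c=-K/D$) and vanishes on $x^2+y^2=L^2$. You instead \emph{derive} the formula: uniqueness forces rotational invariance, the Laplacian in polar coordinates reduces the PDE to $(rf')'=-\tfrac{K}{D}r$, boundedness at the origin kills the logarithmic branch, and integration with $f(L)=0$ recovers the stated expression. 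Both arguments lean on the same uniqueness theorem; the paper's is shorter, while yours explains where the formula comes from and the same symmetry-plus-ODE machinery immediately produces the annular solution $c_D$ of the Open Doughnut \reflem{lemma:odoughnut} (where the logarithmic term survives because the origin is excluded), which the paper again only verifies by inspection. Your algebraic check that $\tfrac{1}{2}\c1D(L-r)=\tfrac{K}{4D}(L^2-r^2)$ is right, and you are also correct that the minus sign in the stated $\abs{\grad c}$ is a sign error in the lemma: the magnitude is $\tfrac{K}{2D}\sqrt{x^2+y^2}$, with the gradient pointing radially inward.
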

\begin{proof}
  By inspection since the solution is unique. We see that $c_{xx} =
  c_{yy} = -\frac{1}{2}K/D$, so \refeq{eq:poisson} is satisfied.
\end{proof}

%%%% Lemma
\begin{figure}[tbp]
  \centering
  \includegraphics[width=\figwidth]{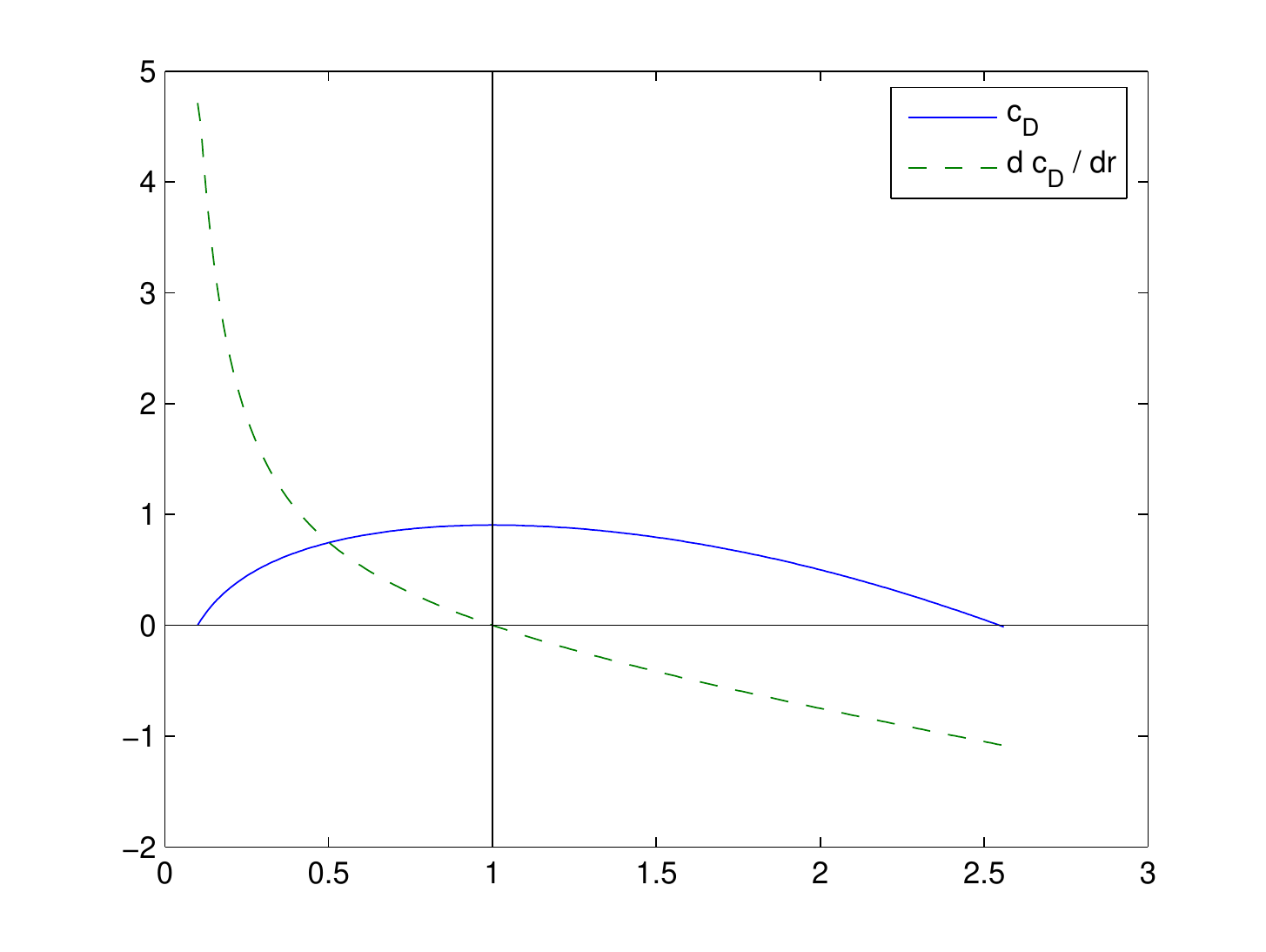}
  \caption{A plot of $c_D(r; 0.1, 0.9)$ and $\frac{d c_D}{d
      r}$. Notice that $c_D$ is increasing from $l$ to $l+L$.}
  \label{fig:cD}
\end{figure}

\begin{lemma}[Open Doughnut] \label{lemma:odoughnut}
  Let $0 \leq l < l+L$ be the radii of two circles centered at the
  origin. Suppose $c(x,y) = 0$ on the boundary $x^2+y^2=l^2$ and
  $\grad c(x,y)=0$ for $x^2+y^2=(l+L)^2$.
  Then the solution to \refeq{eq:poisson} is
  $$c(x,y) = c_D\paren{\sqrt{x^2+y^2}}$$
  where
  \begin{equation}
    \label{eq:c_D}
    c_D(r) = c_D (r ; l,L)= \frac{K}{D}\paren{\frac{1}{4}\paren{l^2-r^2} +
      \frac{1}{2}{\ln\paren{\frac{r}{l}}}(l+L)^2}
  \end{equation}
  Further, $c(x,y)\geq 0$ for $l^2\leq x^2+y^2 \leq (l+L)^2$.
\end{lemma}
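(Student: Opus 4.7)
The plan is to exploit the rotational symmetry of the domain and boundary data, reducing the 2D PDE to a 1D ODE that can be integrated explicitly. Because both the outer and inner boundary conditions are invariant under rotations about the origin, any solution of the mixed Dirichlet--Neumann problem must itself be radial (uniqueness, invoked as in \refthm{thm:unique-existence} plus symmetrization), so we may write $c(x,y) = f(r)$ with $r = \sqrt{x^2+y^2}$. In polar coordinates the Laplacian of a radial function is $\nabla^2 f = f''(r) + \tfrac{1}{r} f'(r)$, so \refeq{eq:poisson} becomes
\begin{equation*}
f''(r) + \frac{1}{r} f'(r) = -\frac{K}{D}, \qquad f(l)=0, \quad f'(l+L)=0.
\end{equation*}

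Next I would integrate. Multiplying through by $r$ rewrites the left-hand side as $(r f'(r))'$, so one integration gives $r f'(r) = -\tfrac{K}{2D} r^2 + A$. The Neumann condition $f'(l+L)=0$ forces $A = \tfrac{K}{2D}(l+L)^2$, hence
\begin{equation*}
f'(r) \;=\; \frac{K}{2Dr}\bigl((l+L)^2 - r^2\bigr).
\end{equation*}
A second integration, followed by imposing $f(l)=0$ to fix the constant, produces exactly the formula $c_D(r)$ claimed in \refeq{eq:c_D}. At this point the explicit form is established; one may also present it the way the paper's earlier radial lemmas do (``by inspection since the solution is unique'') by simply verifying the expression satisfies the ODE and the two boundary conditions.

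For the final assertion $c(x,y)\geq 0$ on the annulus, the computation of $f'(r)$ above is already the key step: since $(l+L)^2 - r^2 \geq 0$ for $l \leq r \leq l+L$ and $r>0$, we have $f'(r) \geq 0$ on the whole annulus, with equality only at $r=l+L$. Thus $f$ is nondecreasing on $[l,l+L]$, and combined with $f(l)=0$ this yields $f(r)\geq 0$ throughout, as required.

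I do not anticipate a genuine obstacle: every step is either routine ODE manipulation or an appeal to uniqueness. The only point requiring mild care is the reduction to radial solutions, which relies on the uniqueness statement of \refthm{thm:unique-existence} (for the Dirichlet part) together with the observation that the Neumann condition on the outer circle is likewise rotationally invariant, so that rotating a candidate solution yields another solution with the same data, forcing the solution to equal its own rotations.
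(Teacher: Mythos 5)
Your proposal is correct and takes essentially the same route as the paper, which simply asserts the formula ``by inspection since the solution is unique'' and notes that $c$ increases radially from the zero value at the inner boundary; you supply the explicit polar-coordinate integration and the sign computation $c_D'(r) = \tfrac{K}{2Dr}\bigl((l+L)^2 - r^2\bigr) \geq 0$ that the paper leaves implicit. The only caveat (shared with the paper) is that \refthm{thm:unique-existence} as stated covers the pure Dirichlet problem, so the uniqueness/radiality step for this mixed Dirichlet--Neumann problem technically needs its own (standard, energy-based) justification.
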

\begin{proof}
  By inspection since the solution is unique. Since $c=0$ at the inner
  boundary and $\grad c$ points radially toward the outer boundary,
  the values of $c$ are increasing radially in $l^2\leq x^2+y^2 \leq
  (l+L)^2$.
\end{proof}

The next two results (\reflem{lemma:error} and
\reflem{lemma:convc-ondisc}) are technical assertions used in the
proof of the Upper Bound Lemma (\reflem{lemma:ubound}). This is the
last result needed to prove \reflem{lemma:bounds} and, therefore,
parts (a) and (b) of \refthm{theorem:maintheorem}.
%%%% Lemma
\begin{lemma}\label{lemma:error}
  Let $\shapesym$ be a shape $P\in\shapesym$, and
  $Q\in\bsupp{P}{\shapesym}$. Let $\Sigma$ be the circle of curvature
  of $\shapesym$ at $Q$. Then $\distsym_\shapesym(P') =
  \distsym_{\Sigma}(P') + O(\eps^3)$ for $\norm{P-P'}=\eps$, and over
  a circular region $R$ of radius $\eps$ centered at $P$
  $$
  \lim_{\eps\to 0}
  \frac{1}{\pi\eps^2}\int_{\del R} \inner{\frac{\del}{\del r}c_D(\distsym_\shapesym)
  \grad \distsym_\shapesym  }{\mathcal{N}} ds = 
  \lim_{\eps\to 0} 
  \frac{1}{\pi\eps^2}\int_{\del R} \inner{\frac{\del}{\del r}c_D(\distsym_\Sigma)
  \grad \distsym_\Sigma  }{\mathcal{N}} ds
  $$
\end{lemma}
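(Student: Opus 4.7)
The plan is to split the lemma into two pieces---the pointwise comparison $\distsym_\shapesym(P')-\distsym_\Sigma(P')=O(\eps^3)$ and the equality of the two averaged ``Laplacian-like'' integrals---and to reduce both to a single fact: $\distsym_\shapesym$ and $\distsym_\Sigma$ agree at $P$ up to second order in a Taylor expansion.

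First I fix coordinates with $Q$ at the origin and the inward unit normal to $\del\shapesym$ at $Q$ along the positive $y$-axis, so that $P=(0,L)$. Near $Q$, $\del\shapesym$ is the graph of a $C^2$ function $y=f(x)$ with $f(0)=f'(0)=0$ and $f''(0)=\kappa$ (the signed curvature at $Q$), and the osculating circle $\Sigma$ is by construction the graph of some $g$ with identical $0$-, $1$-, and $2$nd-order Taylor data; hence $f-g=O(x^3)$ near $0$. Assuming $P$ has a unique foot on each curve (the general case reduces to this by restricting to the component of the distance function realised by the branch of boundary near $Q$), \refthm{theorem:distfunc}(b) ensures $\distsym_\shapesym$ and $\distsym_\Sigma$ are smooth in a neighbourhood of $P$. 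Both functions take the value $L$ at $P$, and by \refthm{theorem:distfunc}(a) both gradients equal $(Q-P)/\norm{Q-P}$. A short calculation in Fermi coordinates $(s,t)$---arc length on the boundary and signed distance along its inward normal---writes $d=t$ identically; expanding $(x,y)=Q(s)+tN(s)$ to second order and inverting shows that $\grad^2\distsym$ at $P$ depends only on $\kappa$ and $L$, not on higher jets of the boundary. Hence the two Hessians coincide at $P$, and Taylor's theorem yields $\distsym_\shapesym(P')-\distsym_\Sigma(P')=O(\eps^3)$.

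For the second claim, the chain rule identifies $\frac{\del}{\del r}c_D(\distsym_\shapesym)\grad\distsym_\shapesym$ with $\grad[c_D\circ\distsym_\shapesym]$, and similarly for $\Sigma$. Writing $f_\shapesym := c_D\circ\distsym_\shapesym$ and $f_\Sigma := c_D\circ\distsym_\Sigma$, an areal-normalised application of \refthm{thm:divergence} (equivalently, $\int_R \grad^2 f\,dA = \int_{\del R}\inner{\grad f}{\mathcal N} ds$ divided by $\pi\eps^2$ as $\eps\to 0$) gives
$$
\lim_{\eps\to 0}\frac{1}{\pi\eps^2}\int_{\del R}\inner{\grad f_\shapesym}{\mathcal N} ds \;=\; \grad^2 f_\shapesym(P),
$$
and likewise for $f_\Sigma$. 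Because $c_D$ is smooth at $L=\distsym_\shapesym(P)=\distsym_\Sigma(P)$ and the distance functions have matching Hessians at $P$, so do their compositions with $c_D$; the two Laplacians, and hence the two limits, coincide.

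The main obstacle is the Hessian-agreement step: one must verify that the $O(x^3)$ discrepancy between $f$ and $g$ enters the inverse Fermi map only at third order in $(x,y)$, so that it cannot affect $\grad^2\distsym$ at $P$. This is the one place where a genuine calculation is unavoidable; everything else is routine Taylor expansion and a textbook application of the divergence theorem.
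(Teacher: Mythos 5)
Your proposal is correct and rests on the same core fact as the paper's proof---second-order contact of the osculating circle forces $\distsym_\shapesym$ and $\distsym_\Sigma$ to agree to second order at $P$, so the discrepancy is $O(\eps^3)$ in value and $O(\eps^2)$ in gradient---but you package the two halves differently, and in both places your version is the more complete one. For the first claim the paper simply writes down the expansions $\distsym_\shapesym=\distsym_\Sigma+O(\eps^3)$ and $\grad\distsym_\shapesym=\grad\distsym_\Sigma+O(\eps^2)$ without justification; your Fermi-coordinate argument showing that the Hessian of the distance function at distance $L$ along the normal depends only on $\kappa(Q)$ and $L$ (not on higher jets of the boundary) is exactly the missing verification, and it is correct: the level sets are parallel curves whose curvature at offset $L$ is $\kappa/(1-L\kappa)$, so the tangential Hessian eigenvalue is determined by $\kappa$ and $L$ alone, while the normal eigenvalue vanishes. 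For the second claim the paper substitutes the expansions into the integrand, divides by $\pi\eps$ after rescaling the arc-length variable, and checks that the cross terms $c_D'(\distsym_\Sigma)\,O(\eps^2)$ and $O(\eps^3)\grad\distsym_\Sigma$ contribute $O(\eps)$ and $O(\eps^2)$ respectively and hence vanish in the limit; you instead recognize $c_D'(\distsym)\grad\distsym$ as $\grad(c_D\circ\distsym)$ and invoke the area-normalized flux identity of \refthm{thm:divergence} to identify each limit with $\grad^2(c_D\circ\distsym)(P)$, after which matching Hessians finishes the job. The two are logically equivalent, but your route makes explicit what each side of the asserted equality actually equals, which is the form in which the lemma is consumed inside \reflem{lemma:ubound}. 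Your caveat about $P$ having a unique foot point on each curve is also appropriate, since the lemma is later applied separately to concave and convex boundary pieces where this holds.
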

  \begin{proof}
    Write the second order approximation
    $\distsym_\shapesym=\distsym_\Sigma+O(\eps^3)$ and $\grad
    \distsym_\shapesym = \grad \distsym_\Sigma + O(\eps^2)$. In a
    circular neighborhood $R$, the limit becomes:
    $$
    \lim_{\eps\to 0} \frac{1}{\pi\eps^2} \int_{0}^{2\pi\eps}
    \inner{\frac{\del}{\del r}c_D(\distsym_\shapesym) \grad \distsym_\shapesym }{\mathcal{N}} ds =
    \lim_{\eps\to 0}  \int_{0}^{2\pi} \frac{1}{\pi\eps}
    \inner{\frac{\del}{\del r}c_D(\distsym_\shapesym) \grad \distsym_\shapesym }{\mathcal{N}} ds
    $$
    
    \noindent
    Now, $\frac{\del}{\del r}c_D(\distsym_\Sigma + O(\eps^3)) = \frac{\del}{\del
      r}c_D(\distsym_\Sigma) + O(\eps^3)$ by inspection of $\frac{\del}{\del
      r}c_D(r)$, which gives:
    
    $$
    \begin{array}{rcl}
      \frac{1}{\pi\eps} {\frac{\del}{\del r}c_D(\distsym_\shapesym) \grad \distsym_\shapesym
      } &=&
      \frac{1}{\pi\eps}
      {\frac{\del}{\del r}c_D(\distsym_\Sigma+O(\eps^3)) \paren{\grad
          \distsym_\Sigma+O(\eps^2) }  }  
      \spcing \\

      &=&
      \frac{1}{\pi\eps}
      \paren{\frac{\del}{\del r}c_D(\distsym_\Sigma)\grad \distsym_\Sigma + \frac{\del}{\del r}
      c_D(\distsym_\Sigma)O(\eps^2) + O(\eps^3)\grad \distsym_\Sigma + O(\eps^5)} 
    \spcing \\ 
    &\to& \displaystyle
    \lim_{\eps\to 0} \frac{1}{\pi\eps}\frac{\del}{\del r}c_D(\distsym_\Sigma)\grad \distsym_\Sigma
    \end{array}
    $$
  \end{proof}

%%%% Lemma
\begin{lemma}\label{lemma:convc-ondisc}
  Suppose the shape $\shapesym$ is the disc as in \reflem{lemma:disc}
  with radius $l+L$ and $c$ is the solution. Then,
  $$
  u(x,y) = c_D(l+L-\sqrt{x^2+y^2})
  $$
  satisfies $\grad^2 u < \grad^2 c = -K/D$ for all points except
  the center.
\end{lemma}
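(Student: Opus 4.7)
The plan is to exploit the radial symmetry of $u$ and reduce everything to a one-dimensional identity that $c_D$ already satisfies. Writing $\rho = \sqrt{x^2+y^2}$ and noting that $u(x,y)$ depends only on $\rho$, the standard radial-Laplacian formula together with the chain rule applied to $u = c_D(l+L-\rho)$ collapses the computation to
\begin{equation*}
\grad^2 u \;=\; c_D''(l+L-\rho) \;-\; \frac{1}{\rho}\,c_D'(l+L-\rho).
\end{equation*}

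Next, I would leverage the fact that $c_D$ is itself a solution of the Poisson equation in radial form: \reflem{lemma:odoughnut} gives $c_D''(r) + c_D'(r)/r = -K/D$. Crucially, inspection of the closed form in \refeq{eq:c_D} shows this is a purely algebraic identity valid for every $r>0$ in the domain of the expression, not just on $[l,l+L]$. Applying it at $r=l+L-\rho$ and substituting into the previous display yields
\begin{equation*}
\grad^2 u \;=\; -\frac{K}{D} \;-\; c_D'(l+L-\rho)\left[\frac{1}{l+L-\rho} + \frac{1}{\rho}\right].
\end{equation*}

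It remains to show the correction term is strictly positive whenever $\rho \in (0, l+L)$. Differentiating \refeq{eq:c_D} directly gives $c_D'(r) = \frac{K}{2Dr}\paren{(l+L)^2 - r^2}$, which is strictly positive on $(0, l+L)$; hence for any such $\rho$ we have $c_D'(l+L-\rho) > 0$, the bracketed factor is positive, and the strict inequality $\grad^2 u < -K/D$ follows. The equality $\grad^2 c = -K/D$ is just the defining equation of $c$ via \reflem{lemma:disc}. The center $\rho=0$ is naturally excluded by the $1/\rho$ in the radial formula rather than by any pathology in $u$ itself, and in fact the inequality extends to the origin by a limiting argument (since $c_D'$ vanishes linearly at $r=l+L$). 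I do not anticipate any real obstacle; the only subtlety worth flagging is that the Poisson identity for $c_D$ must be used outside its nominal domain $[l,l+L]$, which is justified because it is an algebraic identity satisfied by the closed-form expression wherever the expression is defined.
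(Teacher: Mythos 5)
Your proof is correct; it is the same elementary computation as the paper's, packaged through a different identity. The paper sets $f=u-c$ with $c$ the disc solution of \reflem{lemma:disc}, computes $\grad^2 f$ from the closed form, and concludes $\grad^2 u=\grad^2 f+\grad^2 c<-K/D$ from $\grad^2 f<0$; you instead compute $\grad^2 u$ directly from the radial Laplacian and eliminate $c_D''$ via the identity $c_D''(r)+c_D'(r)/r=-K/D$, correctly observing that this is an algebraic property of the closed form \refeq{eq:c_D} valid for all $r>0$, so it may be applied at the reflected argument $l+L-\rho$. The two routes are equivalent: your correction term $-c_D'(l+L-\rho)\paren{\frac{1}{l+L-\rho}+\frac{1}{\rho}}$ is exactly $\grad^2 f$. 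Your bookkeeping is in fact the more trustworthy of the two: the paper's displayed value $\grad^2 f=-\frac{K}{2D}(l+L)/\sqrt{x^2+y^2}$ is not the Laplacian of the $f$ it defines (the correct expression is $-\frac{K}{2D}(l+L)\paren{2(l+L)-R}/(l+L-R)^2$ with $R=\sqrt{x^2+y^2}$, which is singular at the boundary rather than at the center), although both expressions are negative on the punctured disc, so the conclusion survives. One minor quibble on your side: the closing remark that the inequality ``extends to the origin'' is both unnecessary (the lemma explicitly excludes the center) and slightly misstated --- the relevant limit is $\grad^2 u\to-2K/D$ at $\rho=0$, obtained because $c_D'$ vanishes linearly at $r=l+L$ while the bracketed factor grows like $1/\rho$; nothing in the subsequent use of this lemma in \reflem{lemma:ubound} depends on it.
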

  \begin{proof}
    Write $f = u - c$ where $c$ is the solution for the disc from
    \reflem{lemma:disc}. Letting $R(x,y) = \sqrt{x^2+y^2}$
    $$
    \begin{array}{rcl}
    f(R(x,y)) &=& c_D(l+L-R) - \frac{1}{2}\c1D(l+L-R) 
    \spcing\\
%%     &=& 
%%     \frac{K}{D}\paren{\frac{1}{4}\paren{l^2-r^2} +
%%     \frac{1}{2}{\ln\paren{\frac{l+L-r}{l}}}(l+L)^2} - \frac{1}{2}
%%     \frac{K}{D} \paren{ -\frac{r^2}{2} + R (l+L) }
%%     \spcing\\
    &=& 
    \frac{K}{D} \paren{\frac{l^2}{4}
    +\frac{1}{2}\paren{\ln\paren{\frac{l+L-R}{l}} 
    (l+L)^2 - (l+L-R) (l+L)} }
    \end{array}
    $$
    
    \noindent
    A direct calculation shows that 
    $$
    \grad^2 f(x,y) = -\frac{1}{2}\frac{K}{D}
    \frac{l+L}{\sqrt{x^2+y^2}}
    $$

    \noindent
    which demonstrates that $f(r) < 0$ everywhere and
    $\lim_{r\to0}f(r)=-\infty$ in the center of the disc.  Therefore,
    $\grad^2~u~=~\grad^2~f~+~\grad^2 c < -K/D$ because $\grad^2 c =
    -K/D$.
  \end{proof}

%%%% Lemma
%% \begin{lemma} \label{lemma:c_Dapprox}
%% Let $0<h<1$ and $c_D$ as in \refeq{eq:c_D}. Then
%% $$
%% \abs{c_D(r ; l,L) - c_D (r; l,L+h)} <
%% \ln\paren{\frac{l+L}{l}}\paren{ 3(l+L)} h$$

%% \begin{proof}
%%   Using \refeq{eq:c_D} write
%%   $$
%%   \begin{array}{rcl}
%%     c_D(r;l,L) - c_D (r; l,L+h) &=& \frac{K}{2D}
%%     \ln\paren{\frac{r}{l}}\paren{(l+L)^2 - (l+L+h)^2} \spcing\\
%%     &=&
%%     \ln\paren{\frac{r}{l}}\paren{(l+L)^2 - (l+L)^2 - 2(l+L)h -h^2} 
%%     \spcing\\
%%     &>& 
%%     \ln\paren{\frac{r}{l}}\paren{ - 3(l+L)h} 
%%    \spcing\\
%%     &>& 
%%     \ln\paren{\frac{l+L}{l}}\paren{ - 3(l+L)h} 
%%    \spcing\\

%%   \end{array}
%%   $$
%% \end{proof}
%% \end{lemma}

%%%% Lemma
\begin{lemma}[Upper Bound] \label{lemma:ubound}
  Let the conditions of \refthm{theorem:maintheorem} hold. Define
  $u(x,y) = 2c_D(l+\distsym_\shapesym(x,y))$ with $l=p L$. Then $u\geq
  c$.
\end{lemma}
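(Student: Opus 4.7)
The plan is to exhibit $u$ as a super-solution to the Poisson problem with the same boundary data as $c$, and then deduce $u\geq c$ via a comparison argument. Boundary matching is immediate: at any $P\in\del\shapesym$ we have $\distsym_\shapesym(P)=0$, and inspection of \refeq{eq:c_D} gives $c_D(l;l,L)=0$, so $u=c=0$ on $\del\shapesym$. For $P\in\shapesym-\MA(\shapesym)$, Corollary~\ref{cor:dist-is-smooth} and \refthm{theorem:distfunc} make $\distsym_\shapesym$ smooth at $P$ with unit gradient, so writing $d=\distsym_\shapesym(P)$ the chain rule yields $\grad^2 u = 2c_D''(l+d) + 2c_D'(l+d)\grad^2\distsym_\shapesym$. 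Because $c_D$ solves the radial Poisson equation of \reflem{lemma:odoughnut}, $c_D''(r) + c_D'(r)/r = -K/D$, which I would rewrite as
\[
\grad^2 u + \frac{2K}{D} = 2c_D'(l+d)\paren{\grad^2\distsym_\shapesym - \frac{1}{l+d}}.
\]
Direct differentiation of \refeq{eq:c_D} shows $c_D'(r)\geq 0$ on $[l,l+L]$, so the super-solution inequality $\grad^2 u\leq -K/D$ reduces to the purely geometric bound $\grad^2\distsym_\shapesym\leq 1/(l+d)$.

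This is exactly where \reflem{lemma:error} pulls its weight. It permits me to replace, within a small circular neighborhood of $P$ and up to second order, $\distsym_\shapesym$ by the distance $\distsym_\Sigma$ to the osculating circle $\Sigma$ of $\del\shapesym$ at the unique closest boundary point $Q$; combined with the divergence \refthm{thm:divergence}, the Laplacian of $c_D(l+\distsym_\shapesym)$ at $P$ coincides with that of $c_D(l+\distsym_\Sigma)$. A one-line polar-coordinate computation on $\Sigma$ gives $\grad^2\distsym_\Sigma = -1/(r_c-d)\leq 0$ when $Q$ is convex (so $\shapesym$ lies locally inside the osculating disc), in which case the bound is automatic; and $\grad^2\distsym_\Sigma = 1/(r_c+d)$ when $Q$ is concave. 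In the concave case the standing hypothesis that the smallest concave curvature radius equals $pL = l$ gives $r_c\geq l$, so $1/(r_c+d)\leq 1/(l+d)$, closing the geometric step.

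To conclude, I would route the comparison through \reflem{lemma:mondyn}: initialize the dynamics $c_t = D\grad^2 c + K$ with $u$; the analysis above gives $\approxct\leq 0$ off the medial axis, and across $\MA(\shapesym)$ the ridge inherited by $u$ from $\distsym_\shapesym$ is concave downward, so the discrete Laplacian there only becomes more negative and $\approxct\leq 0$ persists. By \reflem{lemma:mondyn}(b) the evolution is monotonically non-increasing, its boundary values stay at $0$, and it converges to the unique equilibrium $c$ of \refthm{thm:unique-existence}, forcing $u\geq c$. The main obstacle is precisely this medial-axis issue: $u$ is only Lipschitz on $\MA(\shapesym)$ and the pointwise super-solution inequality literally fails there. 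Fortunately $\MA(\shapesym)$ is thin by \refthm{theorem:me-is-thin}(i), the singular part of $\grad^2 u$ points in the favorable (negative) direction, and \reflem{lemma:mondyn} requires no pointwise smoothness, so the discrete-dynamics route sidesteps the need for any distributional super-solution theory.
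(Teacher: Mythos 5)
Your proposal is correct and takes essentially the same route as the paper: establish the discrete super-solution inequality $D\grad^2 u + K \le 0$ off the medial axis via the osculating-circle reduction (\reflem{lemma:error} plus the divergence theorem), note that the concave kink of $\distsym_\shapesym$ on $\MA(\shapesym)$ only makes the discrete Laplacian more negative, and conclude by running the monotone dynamics of \reflem{lemma:mondyn} from $u$ down to the unique equilibrium $c$. The only differences are presentational---you fold the paper's three curvature cases (open doughnut, disc, tangent line) into the single identity $\grad^2 u = -2K/D + 2c_D'(l+d)\paren{\grad^2\distsym_\shapesym - 1/(l+d)}$---plus one elided detail: near the deepest medial-axis points the neighbors' $\distsym$-branch values can slightly exceed $L$, which is why the paper works with $c_D(\,\cdot\,;l,L+2h)$ (so $c_D$ stays increasing on the needed range) before letting $h\to 0$.
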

  \begin{proof}
    We shall show that any discretization $\tilde u$ with spacing
    $h<h_0$ (for some $h_0>0$) of $u$ will satisfy $D\grad^2 \tilde{u} + K
    \leq 0$. Thus, \reflem{lemma:mondyn} shows that a dynamical
    process initialized with $u$ will decrease $u$ everywhere with
    each time step and, by \refthm{thm:unique-existence}, it should
    converge to $c$. Hence, $c\leq u$.
    
    First, we treat non-medial axis points.  Let $P=(x,y)\in
    \shapesym$ (not on the medial axis) and $Q\in\del\shapesym$ which
    is closest to $P$, i.e. $\norm{P-Q} = \distsym(P)$. Suppose
    $\del\shapesym$ near $Q$ is approximated by the circle of
    curvature at $Q$. Thus, \reflem{lemma:error} applies and, by the
    Divergence \refthm{thm:divergence}, $\grad^2 u(P)$ is the same as
    if the boundary were a circle at $Q$. Following
    \refthm{theorem:me-is-thin} MA2, there are three cases: (a) $P$
    outside the circle, (b) $P$ inside, and (c) the circle has
    infinite radius -- it is a line segment.  \reflem{lemma:odoughnut}
    shows that $\grad^2 u(P) = -2K/D <-K/D$ which takes care of (a),
    and (b) is covered by \reflem{lemma:convc-ondisc}. If the boundary
    is locally a straight line, then 
    \begin{equation}
      \label{eq:lambda_cD}
    \grad^2 u = \frac{\del^2}{\del r^2} 2c_D(l + r) = -\frac{K}{D}
    \paren{ 1+ (l+L)^2/r^2}
    \end{equation}
    with $l < r < l+L$ and $r$ is in the
    direction of the gradient. So, $\grad^2 u(P) < -K/D$ for all
    $P\in\shapesym-\MA(\shapesym)$.
    
    Now suppose that the of $u$ on $\shapesym$ are sampled on a
    discrete square lattice with spacing $h>0$. There, $\grad^2 u(p)$
    is approximated by the formula for $\Lambda_p$ in the proof of
    \reflem{lemma:mondyn}. The error of the approximation is $O(h^2)$
    (see \cite{as:mathfuncs-handbook}). Thus,
    from the above, $h$ may be chosen so that $\Lambda^{u}_P < -K/D$
    for $P\in\shapesym$ further than $h$ from $\MA(\shapesym)$.
    
    Let $\distsym_{\Sigma_P}$ to be the distance function from the
    circle of curvature at the boundary point corresponding to
    $P\not\in \MA(\shapesym)$. Hence, if $\norm{P'-P}=h$, then
    $\distsym_{\Sigma_P}(P') = \distsym_\shapesym(P') + \eps_{P'}$
    where $\eps_{P'} = O(h^3)$. Set 
    $$
    \eps_P = \sup_{\norm{P'-P}=h}\abs{\eps_{P'}} \quad
    \mathrm{and}  \quad
    \eps = \sup_{P\in\shapesym-\MA(\shapesym)} \eps_{P'}
    $$
    where $0\leq\eps < h$ for small enough $h$. So, choose such an $h$
    and define
    $$
    u_h (x,y) = 2 c_D (l + \distsym_\shapesym(x,y) ; l, L + 2 h)
    $$
    and notice that we may refine the grid (i.e. choose $h$
    smaller) and the above properties will still hold. Thus, refine
    $h$ if necessary to make $\Lambda^{u_h}_P < -K/D $ on shape points
    further than $h$ from the medial axis. Refine it further to
    $\Lambda^{u_h}_P < -K/D $ on an open doughnut with $l$ inner radius
    and $L+2 h$ outer radius. Make sure that $h$ is small enough so
    that $c_D(r+h) - 2c_D(r) + c_D(r-h) < -K/D$ (this is needed in the
    tangent line construction below), which is possible because of
    \refeq{eq:lambda_cD}.
    
    Now we show that this also makes $\Lambda^{u_h}_P < -K/D $ for
    points on the medial axis and those closer than $h$ from it. Pick
    such a $P$ and let consider $Q\in\bsupp{P}{\shapesym}$. If $Q$ is
    concave, then approximate the boundary by its circle of curvature
    and look at $u_h^\Sigma = 2 c_D (l + \distsym_\Sigma(x,y) ; l, L +
    2 h)$ . A neighbor $N$ of $P$ used in $\Lambda^{u_h}_P$ satisfies
    $\distsym_\Sigma(N) \geq \distsym_\shapesym(N)$ (because $Q$ is
    concave and the difference is no more than $\eps$. Hence,
    $u_h^\Sigma (N) \geq u_h(N)$ since $c_D$ is increasing until
    $l+L+2h$. Further, $\distsym_\Sigma(P) = \distsym_\shapesym(P)$
    because the circle of curvature touches $\del\shapesym$ at
    $Q$. Hence, $\Lambda^{u_h}_P < \Lambda^{u^\Sigma_h}_P < -K/D $
    because $P$ is not a medial axis point for $\Sigma$.
    
    If, on the other hand, there is a concave
    $Q\in\bsupp{P}{\shapesym}$, then instead of the circle of
    curvature we may take the tangent line $\ell_P$ at $Q$ define
    $\distsym_{\ell_P}$ exactly similarly to $\distsym_{\Sigma_P}$
    above. Refine $h$ so that any point $P'$ for which $\norm{P-P'}=h$
    is closest to a point $Q'\in\ell_P$ that lies outside the
    $\shapesym$ or on $\del\shapesym$.\footnote{This must be possible
      since $Q$ is convex.}. Thus, as before,
    $\distsym_{\ell_P}(N)\geq \distsym_\shapesym(N)$ for any $N$ near
    $P$, i.e. $\norm{N-P}=h$. Therefore, $\Lambda^{u_h}_P <
    \Lambda^{u^\Sigma_h}_P < -K/D $.

    Finally, \reflem{lemma:containment} shows that $u_h > c$ from
    which we conclude that $u \geq c$ since $\lim_{h\to 0} u_h \to u$.
 %%    Now let $P\in\MA(\shapesym)$ pick some $Q\in\del\shapesym$ such
%%     that $\distsym(P) = \norm{P-Q}$. Due to \refthm{theorem:me-is-thin}
%%     MA2 there are two cases: (a) $P$ is not a center of curvature and
%%     (b) $P$ is. In the first case, it is possible to pick a small
%%     enough open curve $\gamma = \del\shapesym\cap B_\eps(Q)$ around
%%     $Q$ such that $P\not\in\MA(\gamma)$. Let $\distsym_\gamma(x,y)$
%%     denote the closest distance from $(x,y)$ to $\gamma$ and define
%%     $v(x,y) = 2c_D(l+\distsym_\gamma(x,y))$. Then $v \geq u$ in a
%%     neighborhood of $P$ and $\grad^2 v < -K/D$ there. Since
%%     $v(P)=u(P)$ any approximation to $\grad^2 u < -K/D$ with spacing
%%     $h<h_0$ for some $h_0>0$. For case (b) we may pick a portion of
%%     the tangent line at $Q$ as $\gamma$ and apply the same idea.
  \end{proof}

\begin{remark}
  The function $u(x,y)$ need not be smooth on $\shapesym$. In fact, it
  will fail to have fist derivatives on the medial axis of most shapes.
\end{remark}

\begin{lemma}\label{lemma:bounds}
  Let the conditions of \refthm{theorem:maintheorem} hold.  Then $c =
  \Theta(L^2)$. Further, if $P$ is the center of the largest inscribed
  circle and $Q$ a point on the boundary of the shape and the circle,
  then $\grad c (Q)$ points toward $P$ and
  $$
  \frac{K}{2D}L\leq\abs{\grad c}\leq \frac{K}{D} L \frac{2 p +
    1}{p} .
  $$
\end{lemma}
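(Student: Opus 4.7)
My plan is to sandwich the unique Poisson solution $c$ between a lower bound coming from the largest inscribed disc and the upper bound $u$ supplied by \reflem{lemma:ubound}, and then extract both the $\Theta(L^2)$ claim for $c$ and the gradient bound/direction claim at $Q$ by comparing the three functions at $Q$.

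For the lower bound I take $\shapesym' = B_L(P)$, the open inscribed disc of radius $L=\distsym_\shapesym(P)$ centered at $P$, so $B_L(P)\subset\shapesym$ and $Q\in\del B_L(P)\cap\del\shapesym$. By \reflem{lemma:disc} the Poisson solution on $B_L(P)$ with zero boundary data is $v(x,y) = \frac{K}{4D}\paren{L^2 - \norm{(x,y)-P}^2}$, and \reflem{lemma:containment} then yields $v\leq c$ on $B_L(P)$. In particular $c(P)\geq v(P) = \frac{K L^2}{4D}$, the $\Omega(L^2)$ half of the $\Theta(L^2)$ claim. Direct differentiation gives $\grad v(Q) = \frac{K}{2D}(P-Q)$, of magnitude $\frac{KL}{2D}$, directed from $Q$ to $P$.

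For the upper bound, \reflem{lemma:ubound} supplies $u(x,y) = 2 c_D(l+\distsym_\shapesym(x,y))$ with $l = pL$, satisfying $u\geq c$. Since $c_D(l)=0$, both $u$ and $v$ match $c$ at $Q$ with value zero. Using $\frac{\del c_D}{\del r}(r) = \frac{K}{2D}\paren{(l+L)^2-r^2}/r$ and evaluating at $r=l$ yields $|\grad u(Q)| = 2\frac{\del c_D}{\del r}(l) = \frac{KL}{D}\cdot\frac{2p+1}{p}$, in the direction of $\grad\distsym_\shapesym$ at $Q$, which by \refthm{theorem:distfunc}(a) is the unit vector from $Q$ toward $P$. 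Evaluating at $P$ gives $c(P)\leq u(P) = 2c_D(l+L)$, a quantity of order $L^2$, closing the $\Theta(L^2)$ estimate.

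To finish, $c\equiv 0$ on $\del\shapesym$ forces $\grad c(Q)$ to be normal to $\del\shapesym$ at $Q$, and the maximum principle (since $\grad^2 c = -K/D<0$) makes $c>0$ in the interior, so this normal must point inward. Because $\del B_L(P)$ is tangent to $\del\shapesym$ at $Q$, all three inward normals coincide and lie along $QP$; hence $\grad c(Q)$ is along $P-Q$. The pointwise ordering $v\leq c\leq u$ with common zero value at $Q$ then upgrades to an ordering of the inward-normal derivatives at $Q$, yielding $\frac{KL}{2D}\leq|\grad c(Q)|\leq \frac{KL}{D}\cdot\frac{2p+1}{p}$. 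The main technical step is precisely this upgrade from pointwise dominance to dominance of the one-sided normal derivatives, which relies on smoothness of $c$, $u$, $v$ along the inward normal near $Q$ together with the second-order contact between $\del B_L(P)$ and $\del\shapesym$ at $Q$ that is built into the geometry of a largest inscribed disc.
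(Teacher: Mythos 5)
Your proposal is correct and follows essentially the same route as the paper: the inscribed disc of Lemma~\ref{lemma:disc} together with Lemma~\ref{lemma:containment} gives the lower bound, Lemma~\ref{lemma:ubound} gives the upper bound, and the gradient estimate comes from comparing inward normal derivatives of $v\leq c\leq u$ at the common zero $Q$, with the magnitude obtained by evaluating $\del c_D/\del r$ at $r=pL$. The paper's own proof is just a terser statement of exactly these steps (the normal-derivative comparison is spelled out in its ``Organization of the Proof'' subsection), so you have merely supplied the details it leaves implicit.
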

\begin{proof}
  The Disc \reflem{lemma:disc} gives the lower bound function and the
  Upper Bound \reflem{lemma:ubound} the rest. The gradient points in
  the direction of the normal to the boundary because $c=0$ on
  $\del\shapesym$. The magnitude follows from a simple calculation of
  $\del/\del r [c_D(r)]$ at $r=pL$ (see the Open Doughnut Lemma for
  the definition of $c_D(r)$).
\end{proof}

%%%% Remark
%% \begin{remark}
%%   Notice that the bounds given in the above result are tight. The
%%   lower bound is realized by the disc, and the upper bound is realized
%%   by the closed doughnut. Consequently, very high concave curvature
%%   may have an important impact on the difference of concentration. The
%%   areoles must, therefore, be assumed to have high enough concave
%%   curvature. This translates into strand thickness.

%% \end{remark}

Finally, the following result completes the proof of
\refthm{theorem:maintheorem}.

%%%% Lemma
\begin{lemma}[Decreasing Gradient] \label{lemma:decgrad}
  Let $c$ satisfy the Poisson equation (\refeq{eq:poisson}) on
  $\shapesym$ and $c=0$ on $\del\shapesym$. Then 
  $$
  M=\sup_{\shapesym}\abs{\grad c}=\sup_{\del\shapesym}\abs{\grad c}
  $$
  and
  $$
  \abs{\grad c(x,y)} < M, \quad (x,y)\in\shapesym-\del\shapesym~.
  $$
\end{lemma}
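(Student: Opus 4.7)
The plan is to deduce \reflem{lemma:decgrad} from the observation that $\abs{\grad c}^2$ is subharmonic on the interior $\shapesym-\del\shapesym$, and then invoke the strong maximum principle. The essential algebraic input is that $\grad^2 c = -K/D$ is a \emph{constant}, so one differentiation of \refeq{eq:poisson} makes each partial $c_x$ and $c_y$ harmonic on the interior.

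First I would record interior smoothness: because $c + \frac{K}{4D}(x^2+y^2)$ is harmonic on $\shapesym-\del\shapesym$, the function $c$ is real-analytic there and may be differentiated freely. Setting $v = c_x^2 + c_y^2 = \abs{\grad c}^2$, a short computation gives
$$
\grad^2 v = 2\paren{c_x \grad^2 c_x + c_y \grad^2 c_y} + 2\paren{c_{xx}^2 + 2 c_{xy}^2 + c_{yy}^2} = 2\abs{\mathrm{Hess}(c)}^2 \geq 0,
$$
so $v$ is subharmonic on the interior. Next I would apply the strong maximum principle for subharmonic functions: if $v$ attained its supremum at some interior point, then $v$ would be constant on the connected interior of $\shapesym$. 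But $v \equiv \mathrm{const}$ forces $\abs{\mathrm{Hess}(c)}^2 \equiv 0$, making $c$ affine on $\shapesym$, which contradicts $\grad^2 c = -K/D \neq 0$. Hence $v$ attains no interior maximum, from which $\sup_{\shapesym}\abs{\grad c} = \sup_{\del\shapesym}\abs{\grad c} = M$ and $\abs{\grad c(P)} < M$ for every $P \in \shapesym-\del\shapesym$, which is exactly the claim.

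The main obstacle, in my view, is largely expository: carefully justifying the interior regularity step for a shape $\shapesym$ whose boundary is only piecewise smooth, and noting that the strong maximum principle is an \emph{interior} statement that does not depend on the boundary behavior of $\abs{\grad c}$ at concave or convex corner points. The author's outline proposes a more geometric alternative, in which one translates a level curve $\gamma_{c_0}$ so that a gradient-flow point meets the boundary maximizer $Q_0$ and then invokes \reflem{lemma:containment} to force the gradient on the smaller shape to be strictly smaller. That route is viable but requires a delicate verification that the translated shape remains inside $\shapesym$ for small $c_0$, together with a careful comparison of the two nearly tangent boundary curves at $Q_0$; for this reason I prefer the subharmonic argument above, which cleanly separates the (easy) interior analysis from all boundary geometry.
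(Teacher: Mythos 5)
Your argument is correct, and it is a genuinely different proof from the one in the paper. You use the classical observation that for a solution of $\grad^2 c = -K/D$ each first partial $c_x$, $c_y$ is harmonic in the interior, so $v=\abs{\grad c}^2$ satisfies $\grad^2 v = 2\abs{\mathrm{Hess}(c)}^2\geq 0$ and is subharmonic; the strong maximum principle then rules out an interior maximum, since constancy of $v$ would force $\mathrm{Hess}(c)\equiv 0$, contradicting $\grad^2 c=-K/D\neq 0$. (In fact Cauchy--Schwarz gives $\abs{\mathrm{Hess}(c)}^2\geq \tfrac12 (K/D)^2>0$, so $v$ is strictly subharmonic and the constancy dichotomy is not even needed.) The paper instead argues geometrically: it takes the level curve $\gamma_{c_0}$, notes that $c-c_0$ solves the same Poisson problem on the subdomain $\st{c\geq c_0}$, translates that subdomain so that the gradient-flow point from the boundary maximizer $Q$ touches $Q$, and invokes \reflem{lemma:containment} to force a strict drop in the solution and hence in the gradient along the flow. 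Your route buys rigor and brevity --- it is the standard elliptic argument and cleanly separates interior analysis (where $c$ is real-analytic by elliptic regularity regardless of boundary corners) from boundary geometry --- whereas the paper's route stays within the comparison machinery it has already built (\reflem{lemma:containment}) and avoids citing the maximum principle for subharmonic functions, at the cost of the delicate translation-and-containment verification you correctly identify as the weak point. The one step you should spell out in either approach is the passage from ``$v$ has no interior maximum'' to ``$\sup_{\shapesym}\abs{\grad c}=\sup_{\del\shapesym}\abs{\grad c}$'': this requires $\grad c$ to extend continuously to $\del\shapesym$, which holds at smooth boundary points by boundary Schauder estimates but can fail at re-entrant corners of a piecewise-smooth $\del\shapesym$, an edge case the paper's statement and proof are equally silent about.
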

  \begin{proof}
    Let $\gamma_{c_0} = \st{(x,y)\in\shapesym : c(x,y)=c_0}$. A number
    $0 < c_0 < \sup_\shapesym c$ must exist since $c>0$ on $\shapesym
    - \del\shapesym$ by \reflem{lemma:mondyn}b. Let the shape
    $\shapesym'$ be defined by $(x,y)\in\shapesym$ such that
    $c(x,y)\geq c_0$. Hence, $\shapesym' \subset \shapesym$ and
    $\shapesym \not\subset \shapesym'$. The boundary
    $\del\shapesym'=\gamma_{c_0}$ is regular, so there is a unique $v$
    satisfying \refeq{eq:poisson} on $\shapesym'$ with $v=0$ on
    $\del\shapesym'$. Thus, $v = c-c_0$ and $\grad v = \grad c$ on
    $\shapesym'$.
    
    Thus, $\gamma_0 = \del\shapesym$ and $\gamma_{c_0}$ is connected
    for small enough $c_0$ (because $\shapesym$ is the closure of an
    open set). Further, if $c_0<\eps_0$ for some $\eps_0>0$, then
    $\gamma_{c_0}$ is a smooth curve because $\grad_{\gamma_{c_0}}
    c=0$ on $\gamma_{c_0}$, $c$ is at least twice differentiable, and
    $\grad c\neq 0$ when taken over $\shapesym$ on points of
    $\gamma_0$. In fact, $\grad c$ is perpendicular to the curve
    $\gamma_0=\del\shapesym$. Let $Q\in\gamma_0$ be such that $\grad
    c(Q) = \sup_{\del\shapesym} \abs{\grad c}$. Let $\beta$ be the
    integral curve segment starting at $\beta(0)=Q$ with tangents in
    the direction of $\grad c$ and such that
    $\beta(1)\in\gamma_{c_0}$.  Since $\grad c$ is perpendicular to
    $\gamma_0$, $\beta(1)$ will be the closest point to $\gamma_0$
    from $\gamma_{c_0}$ for small enough $c_0$. Thus, $\gamma_{c_0}$
    may be translated so that $\beta(1)$ touches $Q$ ensuring that
    $\gamma_{c_0}$ is completely contained in $\shapesym$; denote this
    translated curve by $\gamma_{c_0}'$.
        
    The solution $v'$ to \refeq{eq:poisson} on $\gamma_{c_0}'$ and its
    interior must be the translated $v$. By \reflem{lemma:containment}
    $v' < c$ everywhere except on $\gamma_0\cap\gamma_{c_0}'$ (e.g. at
    $Q$) where $v'=c$. Hence, $\abs{\grad v'(Q)} < \abs{\grad u(Q)}$.
    Therefore, $\abs{\grad u (Q)}$ is strictly decreasing in the
    direction of $\grad u(Q)$.
  
    \begin{figure*}[htbp]
      \centering
      \begin{tabular}{ccc}
        \includegraphics[width=0.3\textwidth]{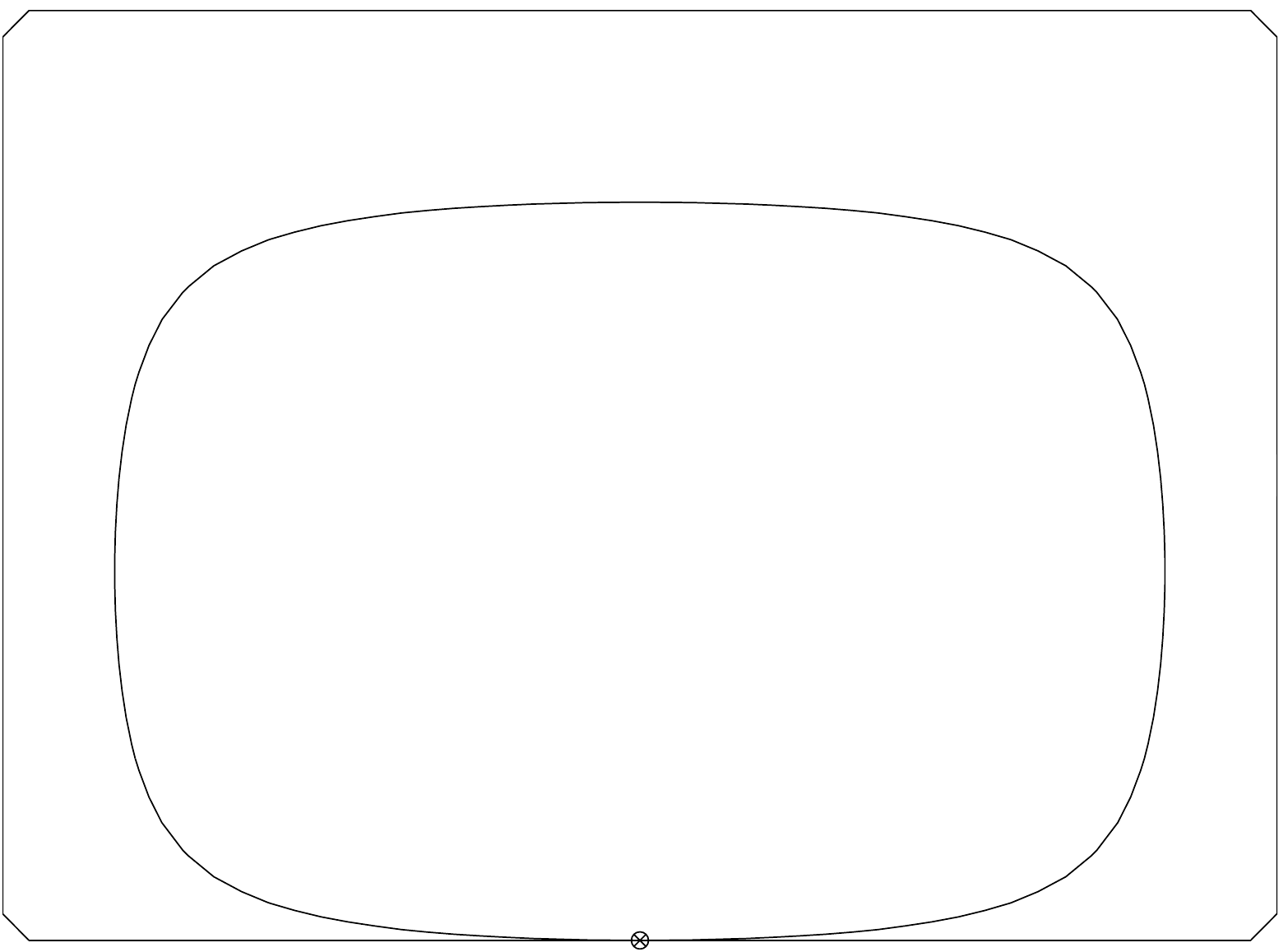}
        &
        \includegraphics[width=0.3\textwidth]{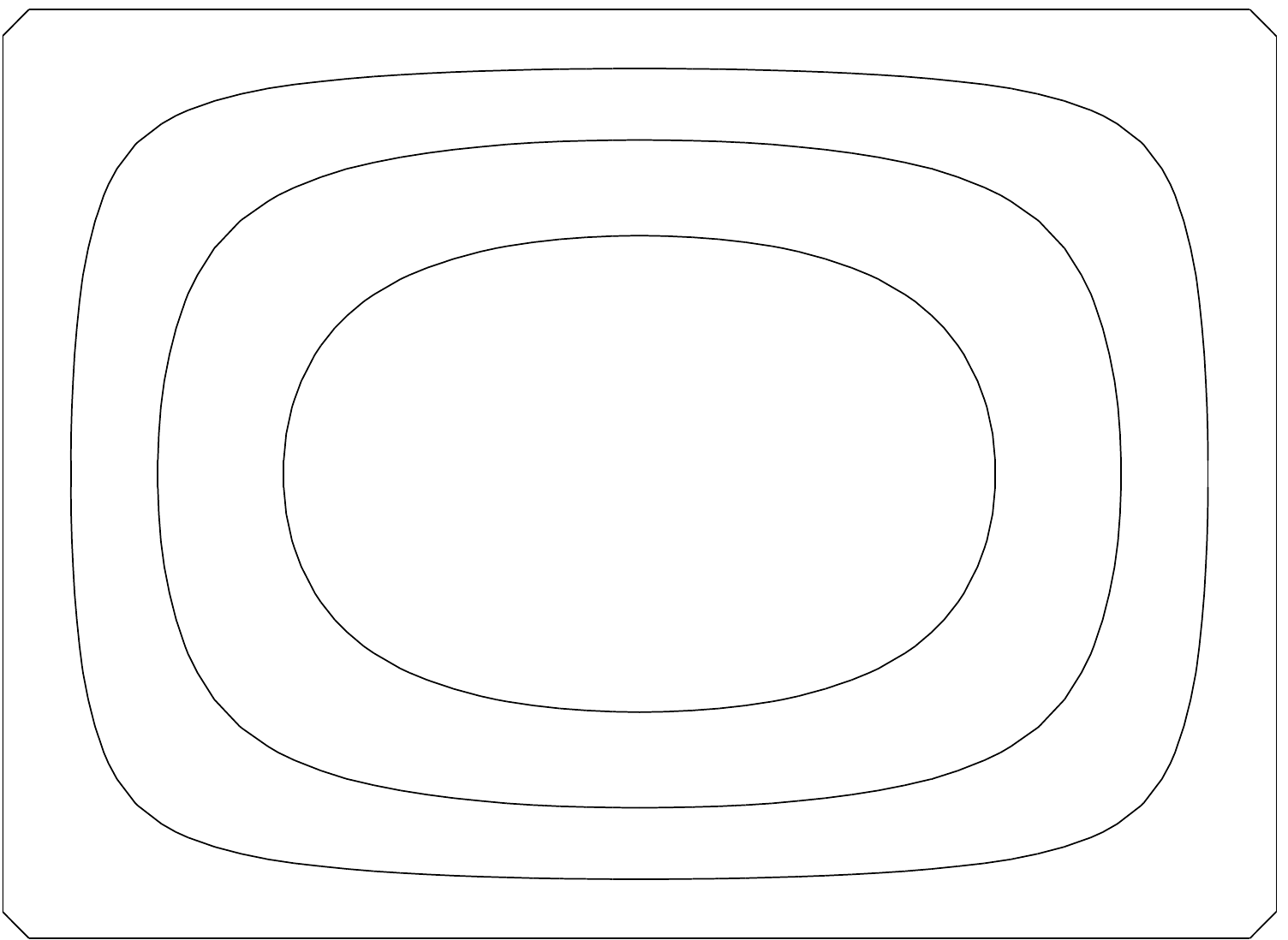}
        &
        \includegraphics[width=0.3\textwidth]{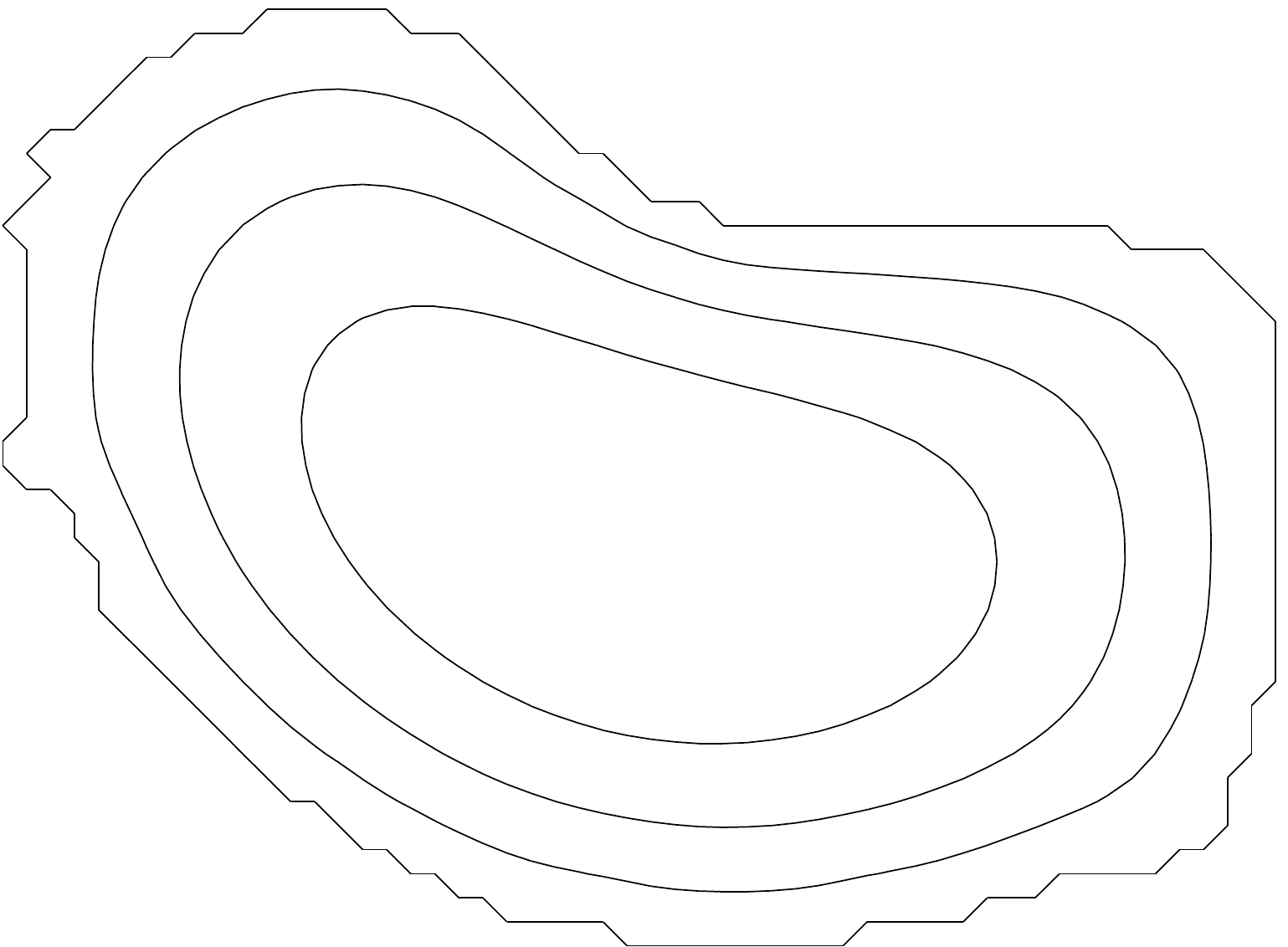}
        \\ 
        (a) & (b) & (c)
    \end{tabular}
    
      \caption{Level sets of $c$. \FP{a,b} Rectangular shape. \FP{c}
        Areole from \cite{scarpella:genes-for-procambium}.}
      \label{fig:isocurves}
    \end{figure*}

  \end{proof}

%%%%%%%%%%%%%%%%%%%%%%%% 
% Section
%%%%%%%%%%%%%%%%%%%%%%%% 
\section{Proportional Destruction Hypothesis: Helmholtz Model}

% --------------
\subsection{Background}

In \cite{pdswz:paper} we noted that new veins may emerge
simultaneously in areoles of drastically different sizes. We argued
that the most parsimonious explanation of this phenomenon is that
auxin is produced at a rate that is constant for each cell, regardless
of that cell's size. But we did not develop the question of auxin
destruction beyond assuming that veins drain the hormone in such a way
that it is effectively depleted from the areole. A more natural
assumption emerges from considering recent molecular work.

Auxin is involved in a variety of processes that take place in
plant cells at all times, so a portion of the free IAA is constantly
being depleted. For example, it has recently been shown
\cite{kepinski-leyser-2005,dharmasiri-etal-2005} that the early
response genes are activated by auxin through an increased degradation
of promoter inhibitors. Thus, auxin binds to a TIR1 protein which is
instrumental in tagging inhibitor proteins (Aux/IAA) for degradation:
the larger the concentration of auxin, the more effective the
degradation.
%This is a continual process that requires a portion of the
%available auxin at all times. 
In addition, the hormone ``is readily
conjugated to a wide variety of larger molecules, rendering it
inactive. Indeed, the majority of IAA in the plant is in the form of
inactive conjugates. Auxin conjugation and catabolism can therefore
decrease active auxin levels.''  \cite[p.~853]{teale-etal-2006}. For
these reasons, we propose the Proportional Destruction Hypothesis:
\begin{hypothesis}[Proportional Destruction]
  \label{hypothesis:pdh}
  Free auxin levels are constantly being depleted at a rate
  proportional to the available hormone levels.
\end{hypothesis}

\noindent
The constant of proportionality will be denoted by $\alpha$ and
referred to as the {\em destruction constant}. 

% --------------
\subsection{Helmholtz Model Definition}

%The acceptable values
%for $\alpha$ are currently unknown so we shall study the behavior of
%our model for the full range of positive values and assume that
%$\alpha$ is the same for all cells.

\begin{table}[t]
  \centering

\begin{tabular}{cp{1ex}c}
  % first model
  \begin{tabular}[t]{|p{0.9\textwidth}|}
    \hline
%    {\bf Constant Production Model with -$\alpha c$}\\
    {\bf Helmholtz Model}\\
    \hline
    {\bf Definitions}\\
    
    {\em Ground Cell:}  Diffusion coefficient $D_{g}$; production
    $\rho = K / S$. \\
    
    {\em C-Vascular Cell:} At least one interface has diff. coef. $D_{v} >
    D_{g}$;  production $\rho = K / S$.\\
    
    \hline
    {\bf Cell Functions} (Program) \\

    {\sc CF1:} Produce substance $s$ at constant rate $K$ and destroy
    it at a rate $\alpha c$. \\
    
    {\sc CF2:} Measure $c$ and $\Delta c$ through interfaces.\\

    {\sc CF3:} Diffuse $s$ through interfaces.\\
    
    {\sc CF4:} When $\Delta c > \tau$ through interface $I$, change its
    diffusion coefficient to $D_v$. \\
    
    \hline
  \end{tabular}
  
\end{tabular}
  \caption[Helmholtz model definitions.]{Helmholtz model.  Terms
    definitions: $K$ is the per-cell substance production
    rate (mass/time); $S$ is the size of the cell (volume); $\rho$ is
    the per-volume production rate; $D_x$ are diffusion coefficients
    related to the permeability of cell interfaces; $c$ is the
    concentration of the substance inside the cell, $\alpha$ is the
    destruction constant, and $\Delta c(I)$ is the difference of
    concentration through the interface $I$; $\tau$ is a threshold.}
  \label{tab:helmholtz-model}
\end{table}

The Proportional Destruction Hypothesis leads to the updated
formulation of our model shown in \reftab{tab:helmholtz-model}.  We
shall assume that the only mode of auxin transport is diffusion
according to Fick's Law. Our constant production of $K$
(kg$\cdot$s$^{-1}$) is replaced by a speed of increase in
concentration equal to $K / S(i)$ where $S(i)$ is the size of cell
$i$. The proportional destruction of auxin implies a decrease in
concentration given by $-\alpha m(i)/S(i) = -\alpha c(i)$ with $c(i)$
denoting concentration. Therefore, the manner in which the auxin
concentration changes with time within each cell can be written as:
\begin{equation}
  \label{eq:production-destruction}
  \frac{\del c}{\del t} = D \grad^2 c + \frac{K}{S} - \alpha c ~.
\end{equation}

The equilibrium of this dynamical system, when $c_t=0$, is therefore
described by an inhomogeneous Helmholtz equation, which is why we refer
to this formulation as the Helmholtz Model. 

% --------------
\subsection{Analysis of the Helmholtz Model}
\label{sec:zero-flux}

It turns out that this relatively small modification to the Poisson
model greatly improves the model's descriptive power. We now establish
the mathematical results that make this claim concrete before we turn
to the discussion of biological experiments in the next section. We
demonstrate that the same type of distance information is captured by
the Helmholtz model in \refprop{prop:1} and
\refprop{prop:asin-cstbdry}, but that it can be obtained in more than
one fashion if the destruction constant is small. If this constant is
large, we show that a different kind of distance becomes available to
the discrete system -- one given by the logarithm of concentration.

%% accounts for vein formation in the same fashion, but the new
%% model explains additional experimental observations that the original
%% formulation could not capture. In this subsection, we shall prove
%% these claims mathematically. The following subsections will then
%% discuss the relevant biological experiments in light of the results
%% established here.

%% The relevant equation here becomes the following one.

%% \begin{equation}
%%   \label{eq:1}
%%   c_t = D \grad^2 c + \rho_\Omega - \alpha c
%% \end{equation}

%% This is an inhomogenious Helmholtz equation when $c_t=0$. We introduce
%% the notion of a per-volume production with the $\rho_\Omega$ term, and
%% proportional destruction with the term $-\alpha c$.  It is easily seen
%% that no equilibrium state exists for a nontrivial $\rho_\Omega$ and
%% $\alpha \leq 0$.

%%%%%%% SubSubSection
%\subsubsection{Development}

\begin{prop} \label{prop:1}
Consider the dynamical system
\begin{equation}
  \label{eq:alphadisc} \label{eq:1}
  \partderiv{c}{t} = D \grad^2 c + \rho_\shapesym - \alpha c .
\end{equation}

Suppose that it acts over a domain $\shapesym$ which a shape as in
\refthm{theorem:maintheorem} and on which we impose a zero-flux
boundary condition (Neumann). Let $\rho_\shapesym: \shapesym \to
\Reals$. Then the following holds.

\begin{enumerate}
\item[(a)] If $\alpha>0$, then $\lim_{t\to \infty} c = c_\alpha$ for a
  unique steady-state $c_\alpha$.
  
\item[(b)] Let $\alpha=0$ and $R = \int \rho_\shapesym \d{\shapesym} /
  \int_{}\d{\shapesym}$ be the average production. Then
  $\lim_{t\to\infty} c_t = R$ and $c$ converges to $c_\alpha +
  \mathrm{cst.}$ whenever $R=0$.  Further, $\grad c_\alpha$ is unique
  even when $R\neq 0$.
  
\item[(c)] If $A,B\in \Reals$, then the transformation $\rho_\shapesym
  \mapsto A \rho_\shapesym + \alpha B$ induces a unique transformation
  of the steady state $c_\alpha \mapsto A c_\alpha + B$ and vice versa.
  It follows that the gradient of $c_\alpha$ is only affected if
  $A\neq 1$: $\grad c_\alpha \mapsto A \grad c_\alpha$.
\end{enumerate}
  
\begin{remark}
  In part (c), if the destruction term is not linear, e.g. $\alpha c +
  \beta c^2$, then the gradient might be affected by $B$ as well.
\end{remark}

\end{prop}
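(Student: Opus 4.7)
The plan is to address the three parts separately, since they correspond to three quite different regimes: (a) the dissipative $\alpha>0$ Helmholtz case where standard elliptic/parabolic theory applies; (b) the degenerate $\alpha=0$ case where $c$ itself may grow unboundedly but $\grad c$ still converges; and (c) a pure linearity check. For part (a), I would set up the steady state equation $D\grad^2 c_\alpha - \alpha c_\alpha = -\rho_\shapesym$ under the Neumann condition $\del c_\alpha/\del n=0$, and observe that the self-adjoint operator $L := -D\grad^2 + \alpha I$ is strictly positive: the Neumann Laplacian has nonnegative spectrum, and adding $\alpha I$ shifts every eigenvalue up by $\alpha$, so the smallest eigenvalue (corresponding to the constant mode) becomes $\alpha$ itself. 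Existence and uniqueness of $c_\alpha$ then follow by Lax--Milgram (or by adapting \refthm{thm:unique-existence} to Neumann conditions plus a reaction term). For convergence, let $u(t,x)=c(t,x)-c_\alpha(x)$ so that $u_t=-Lu$; expanding $u$ in the eigenbasis of $L$ yields $\norm{u(t)}_{L^2}\leq e^{-\alpha t}\norm{u(0)}_{L^2}$, hence $c\to c_\alpha$ exponentially.

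For part (b), I would first integrate \refeq{eq:alphadisc} over $\shapesym$; the divergence theorem together with the zero-flux boundary condition eliminates the Laplacian term, giving $\frac{d}{dt}\int_\shapesym c\,\d{\shapesym}=R\abs{\shapesym}$. Hence the spatial average $\bar c(t)$ grows linearly at rate $R$, which establishes $c_t\to R$. Next I would decompose $c(t,x)=\bar c(t)+w(t,x)$ with $w$ of zero spatial mean; then $w$ satisfies $w_t=D\grad^2 w+(\rho_\shapesym-R)$ under the same Neumann condition, and now the source $\rho_\shapesym-R$ has zero mean by construction, so the Fredholm alternative for the Neumann Poisson problem guarantees a unique zero-mean steady state $w_\infty$. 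Exponential decay $w\to w_\infty$ then follows from the spectral gap of $-D\grad^2$ on the mean-free subspace. Thus $\grad c\to\grad w_\infty$ is unique regardless of $R$, and when $R=0$ the original $c$ converges to $w_\infty$ plus the constant $\bar c(0)$.

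Part (c) is a direct verification by linearity. Substituting $c_\alpha' := Ac_\alpha+B$ into the steady-state equation yields $D\grad^2 c_\alpha' - \alpha c_\alpha' = A(-\rho_\shapesym)-\alpha B = -(A\rho_\shapesym+\alpha B)$; the Neumann condition is preserved because $\grad B = 0$. Uniqueness from (a) forces $c_\alpha'$ to be the steady state associated with the transformed production, and $\grad c_\alpha'=A\grad c_\alpha$ is immediate, so the gradient is unchanged precisely when $A=1$. Applying the same substitution with parameters $1/A$ and $-B/A$ gives the converse direction.

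The hard part will be part (b) when $R\neq 0$: $c$ has no bounded steady state, but its growth is spatially uniform so $\grad c$ still settles, and the claim that $\grad c_\alpha$ is ``unique even when $R\neq 0$'' must be interpreted as uniqueness of $\grad w_\infty$, the gradient of the mean-free residual. The crux is to verify the solvability condition $\int_\shapesym(\rho_\shapesym-R)\,\d{\shapesym}=0$ for the Neumann Poisson problem, to identify the limiting $\grad c$ with $\grad w_\infty$, and to justify the spectral-gap argument on the mean-free subspace --- this last step needs $\shapesym$ regular enough that $-D\grad^2$ with Neumann conditions has discrete spectrum, which is ensured by our shape hypothesis.
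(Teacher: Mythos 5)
Your proof is correct and, for parts (a) and (b), takes a genuinely different route from the paper's. The paper establishes convergence by differentiating the equation in time: it considers the auxiliary system $c_{tt} = D\grad^2 c_t - \alpha c_t$ with inherited Neumann data, asserts that its unique solution is $c_t=0$, and then proves uniqueness of the equilibrium by subtracting two candidate solutions and citing Courant--Hilbert for the vanishing of the resulting homogeneous elliptic solution; the $\alpha=0$ case and the identification $c_t\to R$ are handled by the same integral identity you use, $\int_\shapesym c_t = \int_\shapesym \rho_\shapesym\,\d{\shapesym}$. Your spectral argument --- strict positivity of $-D\grad^2+\alpha I$ under Neumann conditions, Lax--Milgram for existence, and exponential decay $e^{-\alpha t}$ of $c-c_\alpha$ in the eigenbasis --- is more quantitative and sidesteps a weak point in the paper's argument, since the claim that the $c_t$-system relaxes to its zero steady state is itself a convergence assertion of exactly the type being proved. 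Your part (b) is also more complete: the decomposition $c=\bar c(t)+w$ with $w$ mean-free, the Fredholm solvability condition for the Neumann Poisson problem, and the spectral gap on the mean-free subspace together give a precise meaning to the claim that $\grad c_\alpha$ is ``unique even when $R\neq 0$,'' which the paper leaves implicit (its uniqueness-up-to-a-constant remark really only covers the case where a steady state exists). Part (c) is the same direct substitution in both; note only that the converse direction requires $A\neq 0$, a hypothesis neither you nor the paper states explicitly.
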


\begin{proof}
  Parts (a) and (b). To show existence we prove that the dynamical
  system achieves $c_t=0$. Consider the dynamical system $c_{tt} =
  D\grad^2 c_t - \alpha c_t$. The boundary conditions are inherited:
  since no flux goes through the boundary, there must be no change of
  concentration in time, i.e.  $\grad c_t\cdot \v{n}=0$ on $\del
  \shapesym$. The unique solution of this system is $c_t=0$.
  
  To prove uniqueness, suppose $u_1$ and $u_2$ both satisfy the
  equation given the boundary conditions and $c_t=0$.  Thus
  $$
  D\grad^2 u_1 + \rho_\shapesym -\alpha u_1 =   D\grad^2 u_2 +
  \rho_\shapesym -\alpha u_2 
  $$
  which gives rise to $D\grad^2 v - \alpha v = 0$ where $v=u_1-u_2$
  and $\grad v\cdot \v{n}=0$ where $\v{n}$ is the normal to the
  boundary.  Since $v$ is elliptic and $\alpha>0$, $v$ vanishes
  everywhere and uniqueness follows (see \cite[p.~329 and
  321]{courant-hilbert62}). The same reference shows that if
  $\alpha=0$, then this uniqueness is up to an additive constant
  $u=u_1+\mathrm{cst}$; that is, only $\grad u$ is unique.
  
  Now to show the convergence in (b) whenever $R=0$, note that $c_{tt}
  = D\grad^2 c_t$ assuming $\alpha=0$. This has a steady-state s.t.
  $c_t=\mathrm{cst.}$ everywhere. Also, $\int c_t = \int
  \rho_\shapesym\d\shapesym$ which shows that $c_t=R$.

  Part (c). Let $c_\alpha$ satisfy \refeq{eq:alphadisc} for $c_t=0$
  and a production function $\rho^{(\alpha)}_\shapesym$. Then, $D
  \grad^2 c_\alpha - \alpha c_\alpha = - \rho^{(\alpha)}_\shapesym$.
  Suppose $c = A c_\alpha +B$ satisfies the equation for some
  $\rho_\shapesym$. Since this $c$ is unique, the following
  verification proves the claim.
  $$
  \begin{array}{cl}
    & D \grad^2 c - \alpha c = - \rho_\shapesym
    \\
    \therefore&D \grad^2 (A c_\alpha + B)  - \alpha (A c_\alpha +B) =
    -  \rho_\shapesym 
    \\
    \therefore &A D \grad^2 c_\alpha  - A \alpha c_\alpha -\alpha B = -  \rho_\shapesym 
    \\
    \therefore& A (D\grad^2 c_\alpha - \alpha c_\alpha) = -
    \rho_\shapesym + \alpha B
    \\
    \therefore& A( -\rho^{(\alpha)}_\shapesym) = -\rho_\shapesym + \alpha B
    \\
    \therefore& \rho_\shapesym = A(\rho^{(\alpha)}_\shapesym) + \alpha B
  \end{array}
  $$
  
  The other direction is derived similarly and the result follows.
\end{proof}

We can relate the steady-state solution of \refeq{eq:alphadisc} with
small $\alpha$ to the steady-state solution of the dynamical system in
the previous section. In fact, we now show that there are conditions
under which the two systems are similar even though the boundary
conditions are different. The key difference is that before we assumed
a constant value for $c$ at the boundary whereas now we only assume no
flow through the boundary.  \reffig{fig:rho-1D} illustrates the
correspondence in 1-D, and the following proposition makes the claim
in 2-D precise.

\begin{figure}[ht]
  \centering
  \begin{tabular}[b]{cc}
    
    \includegraphics[height=3.5in]{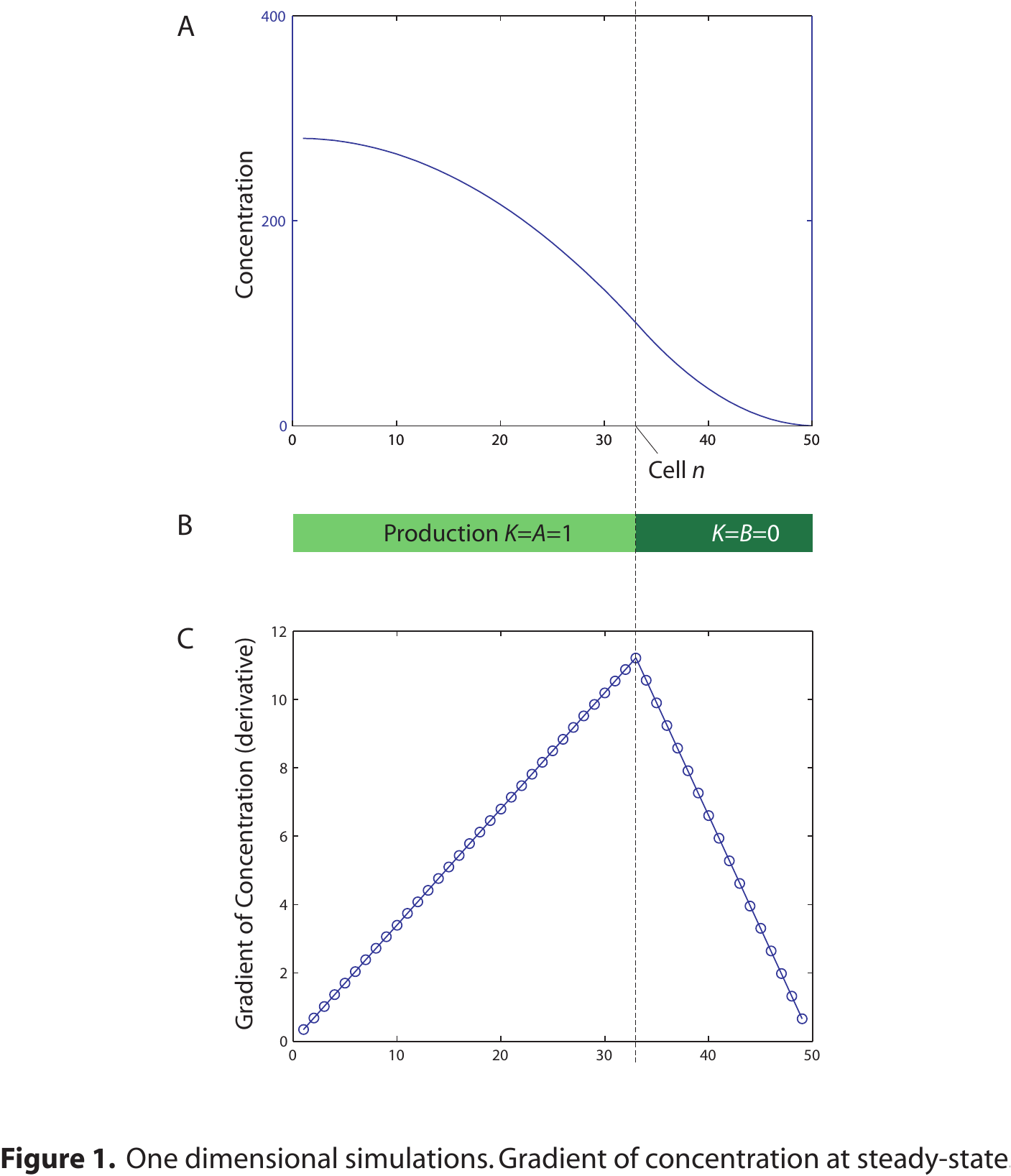} &
    
    \begin{tabular}[b]{c}
      \includegraphics[height=1.75in]{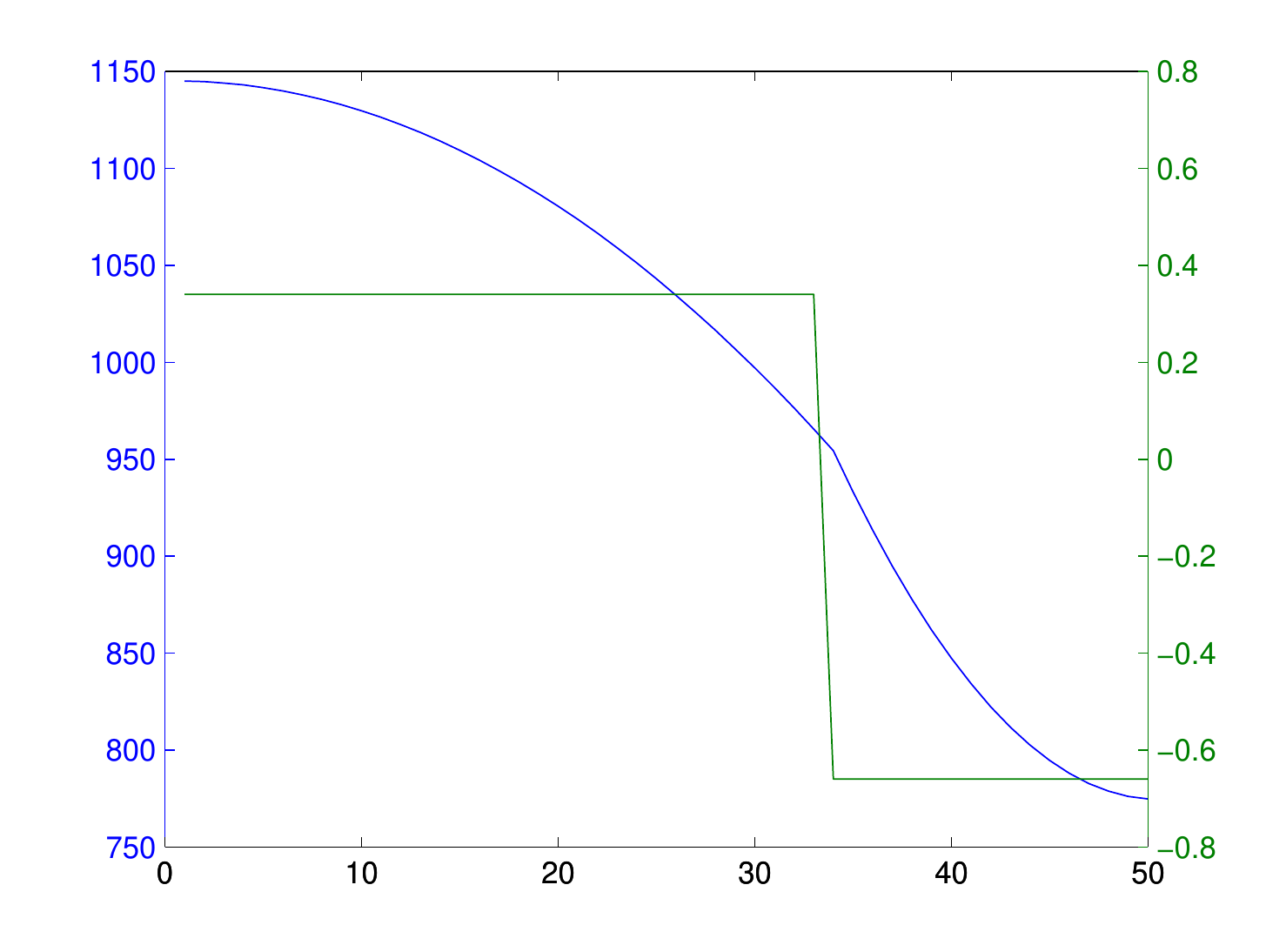} \\
      \includegraphics[height=1.75in]{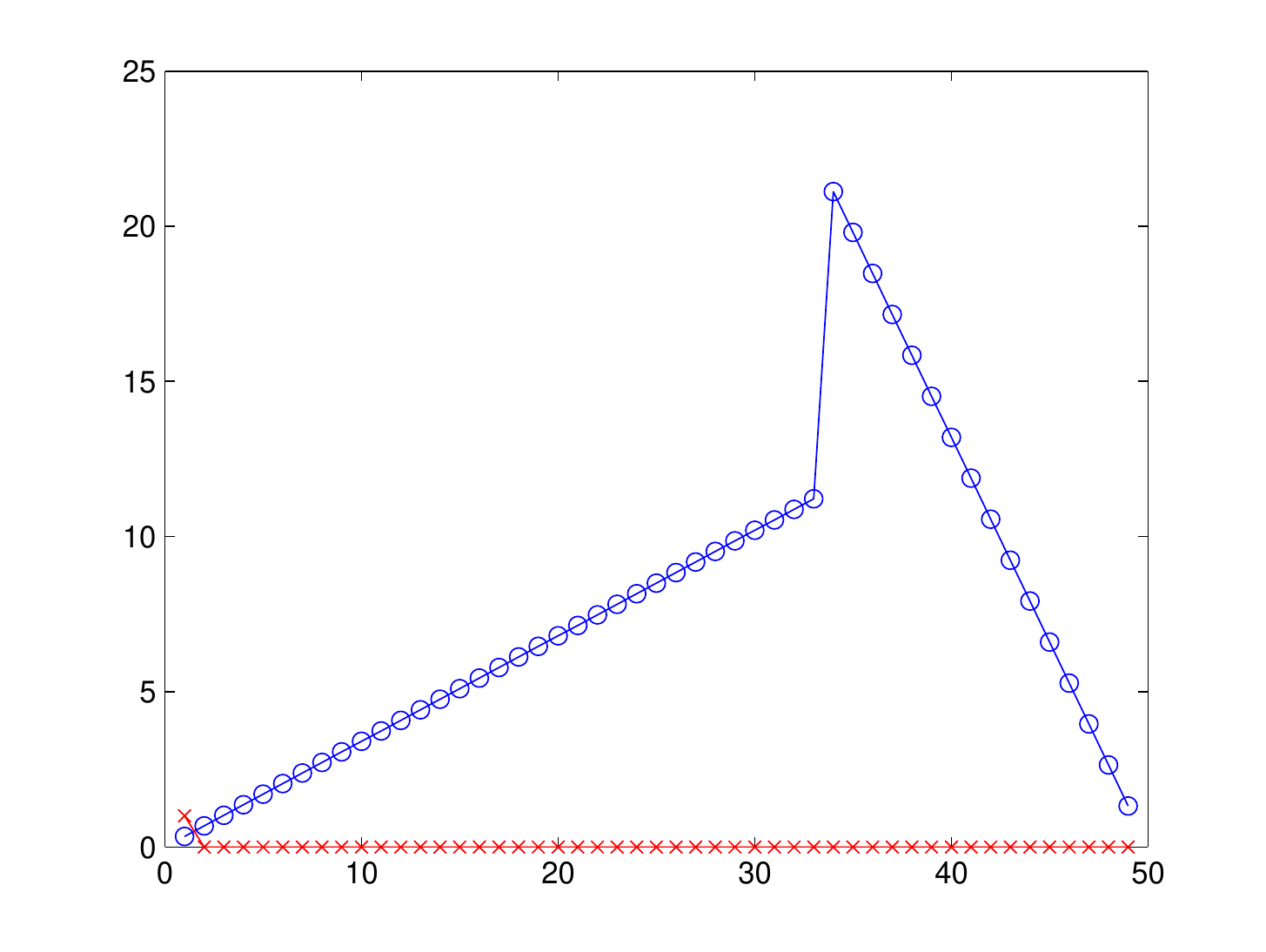} 
    \end{tabular}
    
    \\
    (a) & (b)
  \end{tabular}

  \caption[Illustration of the steady-state in a 1-D system with two
    domains of production values and small $\alpha$.]{Illustration of
    the steady-state in a 1-D system with two domains of production
    values and small $\alpha$. \FP{a} Uniform diffusion
    coefficients. \FP{b} Smaller diffusion coefficients on the
    left. {\sc top:} Concentration at equilibrium (blue curve) and
    production curve (green curve) on y-axis against cell position on
    x-axis. {\sc bottom:} Gradient of concentration (or difference in
    concentration $\Delta c$). Notice that $\Delta c$ attains a
    maximum where the production (equivalently, cell size) changes
    abruptly. Observe also that the location of this maximum has
    similar properties to the location of a sink in 1-D: e.g., the
    portion in (a) where $K=A=1$ has a parabolic concentration profile
    with a maximum where the $\Delta c$ is lowest and a minimum where
    the $\Delta c$ is largest.}
  \label{fig:rho-1D}
\end{figure}

\begin{prop}\label{prop:asin-cstbdry}
  Let $\shapesym$ be a shape with two components $\shapesym =
  \shapesym_0 \cup \shapesym_1$ such that $\shapesym_0 \cap
  \shapesym_1 = \del \shapesym_0$. Let $D_0$ and $D_1$ be the
  diffusion coefficients inside $\shapesym_0$ and $\shapesym_1$
  respectively. If $\int_{\shapesym_0} \rho_\shapesym \d{v} +
  \int_{\shapesym_1} \rho_\shapesym \d{v}= 0$ and
  $\rho_\shapesym(\shapesym_0)=K\int_{\shapesym_0}\d{v}>0$, then
  $$
  \lim_{{D_0}/{D_1} \to 0} c_\alpha = c_K
  $$
  where $c_K$ satisfies \refthm{theorem:maintheorem} for the
  shape $\shapesym_0$ by setting $c_K(\del\shapesym_0)=0$.

  \begin{proof}
    The convergence of the system derives from \refprop{prop:1}(b).
    As $D_0/D_1 \to 0$ the relative speed of diffusion in
    $\shapesym_1$ increases to infinity. Thus, the concentration over
    $\shapesym_1$ will tend to a constant and, consequently, so will
    $c(\del\shapesym_0)=c(\shapesym_0\cap\shapesym_1)$. The conditions
    of \refthm{theorem:maintheorem} are therefore satisfied and the
    claim follows.
  \end{proof}
  
\end{prop}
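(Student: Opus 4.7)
The plan is to carry out the brief sketch in the authors' proof quantitatively. First I would invoke \refprop{prop:1} to confirm that for every fixed ratio $D_0/D_1$ the Helmholtz system \refeq{eq:alphadisc} with zero-flux conditions on $\del\shapesym$ admits a steady state $c_\alpha^{(D_1)}$, unique up to an additive constant when $\alpha=0$; the balance hypothesis $\int_{\shapesym_0}\rho_\shapesym\,\d{v}+\int_{\shapesym_1}\rho_\shapesym\,\d{v}=0$ is precisely the compatibility condition that makes part (b) applicable.

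Next I would extract the limiting behavior on $\shapesym_1$. The equilibrium equation there is $D_1\grad^2 c + \rho_\shapesym - \alpha c = 0$; after dividing by $D_1$ and sending $D_0/D_1\to 0$ (with $D_0$ fixed), the right-hand side contribution $(\rho_\shapesym - \alpha c)/D_1$ vanishes, so $\grad^2 c \to 0$ on $\shapesym_1$. Combined with the zero-flux condition inherited from the outer boundary $\del\shapesym\cap\del\shapesym_1$, Neumann uniqueness forces the limit to be a single constant $c_\infty$ on each connected component of $\shapesym_1$. Continuity of $c$ across the interface $\del\shapesym_0=\shapesym_0\cap\shapesym_1$ then transfers this constant to $c\big|_{\del\shapesym_0}$. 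Restricting the steady equation to $\shapesym_0$ gives $D_0\grad^2 c - \alpha c = -\rho_\shapesym$ with Dirichlet datum $c=c_\infty$; by \refprop{prop:1}(c) subtracting the constant $c_\infty$ leaves the gradient unchanged while enforcing the homogeneous boundary condition of \refthm{theorem:maintheorem}, and uniqueness from \refthm{thm:unique-existence} identifies the resulting solution with $c_K$.

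The main obstacle is to turn the informal claim ``as $D_1\to\infty$ the concentration flattens on $\shapesym_1$'' into honest convergence. The cleanest route is an energy estimate: letting $\bar c$ be the mean of $c_\alpha^{(D_1)}$ over $\shapesym_1$, testing the steady equation against $c_\alpha^{(D_1)}-\bar c$ and integrating by parts, the external boundary term vanishes by zero-flux, while the interface flux is controlled by the $\shapesym_0$-side of the problem (which is uniformly bounded because $\int\rho_\shapesym\,\d{v}=0$ forces the total current through $\del\shapesym_0$ to balance $\int_{\shapesym_1}(\rho_\shapesym-\alpha c)\,\d{v}$). One then obtains $D_1\int_{\shapesym_1}\abs{\grad c_\alpha^{(D_1)}}^2 \leq C$ independently of $D_1$, which gives $\grad c_\alpha^{(D_1)}\to 0$ in $L^2(\shapesym_1)$; a standard trace inequality on $\del\shapesym_0$ lets the boundary value pass to the limit, and a Poincaré estimate inside $\shapesym_0$ together with elliptic regularity promotes the trace convergence to convergence of $c_\alpha^{(D_1)}$ to $c_K$ on the closure of $\shapesym_0$.
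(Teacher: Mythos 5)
Your argument follows the same route as the paper's (very terse) proof: the fast diffusion in $\shapesym_1$ flattens the concentration there to a constant, which becomes constant Dirichlet data on $\del\shapesym_0$, after which subtracting that constant (via \refprop{prop:1}(c)) reduces the problem to \refthm{theorem:maintheorem} on $\shapesym_0$. Your additions --- the energy estimate giving $D_1\int_{\shapesym_1}\abs{\grad c}^2\leq C$, the trace and Poincar\'e steps, and the explicit handling of the additive constant that the paper's statement $c_K(\del\shapesym_0)=0$ silently absorbs --- are exactly the details the paper omits, so this is a correct and more careful version of the same proof.
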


Note, however, that the destruction constant must be sufficiently
small in order to obtain a good correspondence. But the value of
$\alpha$ has a more important role. A strictly positive
$\alpha$ implies that there is a maximal distance beyond which
information cannot travel. Assuming that measurements that cells can
perform have a limited precision, our next result shows that the
contribution of cell A to the concentration at cell B will be
negligible whenever A is sufficiently far from B. The larger the value
of $\alpha$, the shorter this distance needs to be.

\begin{theorem}\label{thm:highalpha}
  Consider the dynamical system in \refeq{eq:alphadisc} and assume the
  conditions on the domain as in \refprop{prop:1}. Let $G(r;\sigma) =
  \exp (-r^2 / (2 \sigma^2))$ and consider the Gaussian convolution
  kernel $G_1= (\sigma^3/{2\pi}) G(\sqrt{x^2 + y^2};\sigma)$, $\sigma =
  1/\alpha$. Then
  $$
  \lim_{\alpha\to \infty} (G_1 \conv \rho_\shapesym) = c_\alpha
  $$
  where $\conv$ denotes convolution and $c_\alpha$ is as in
  \refprop{prop:1}.

  \begin{proof}
    By inspection, $c^* = G_1 \conv \rho_\shapesym$ satisfies
    \refeq{eq:alphadisc} as $\alpha \to \infty$.
    \def\S{\rho_\shapesym}

    Suppose that we have a convolution kernel $G_2$ for which $\sigma$ is a
    function of $\alpha$ and such that $\int_\Omega \alpha G_2 = 1$
    and $\sigma(\alpha) \to 0$ as $\alpha\to\infty$. Therefore, $G_2
    \conv \S \to S$ so $S - \alpha (G_2 \conv S) \to 0$. We now show
    that $G_1$ has this property, and that $\nabla^2(G_2 \conv S) \to
    0$ which shows that
    $$
    \lim_{\alpha\to\infty} D \nabla^2 c^* + \S - \alpha c^* = 0
    $$

    \noindent
    and proves the claim.

    The extrema of $G_{xx}=\del G^2 / \del x^2$ are at $r =
    \sqrt{x^2+y^2} = 0, -\sigma\sqrt{3}, \sigma\sqrt{3}$. The values
    are $G_{xx}(0) = -1/\sigma^2 G(0) = -1/\sigma^2$, and
    $G_{xx}(\sigma\sqrt{3}) = 2/\sigma^2\exp(-3/2)$. Hence, choosing
    $G_1 = \sigma^3/\sqrt{2\pi} G$ implies that $\sup \abs{\del
      G_1^2/\del x^2} = O(\sigma)$ and that $\int_\Omega G_1 = O(\sigma)$
    (because $\int_{\Reals^2} G(\sqrt{x^2 + y^2};\sigma)= 1/(\sigma^2
    {2\pi}$).  Thus, setting $\alpha = 1/\sigma$, we have that
    $\nabla^2 (G_1\conv \S) = \sigma \to 0$ and that $\int \alpha G_1
    = 1$. The claim now follows.
  \end{proof}
\end{theorem}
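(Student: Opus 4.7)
The plan is to argue by verification rather than by constructing $c_\alpha$ from scratch. By \refprop{prop:1}(a), the steady state $c_\alpha$ is the \emph{unique} function satisfying $-D\grad^2 c_\alpha + \alpha c_\alpha = \rho_\shapesym$ with zero-flux boundary data. The idea is to exhibit the explicit candidate $c^{*} = G_1 \conv \rho_\shapesym$ and show that $c^{*}$ satisfies the steady-state equation in the limit $\alpha \to \infty$; uniqueness then forces $c^{*} \to c_\alpha$. The underlying intuition is that for large $\alpha$ the Helmholtz operator $-D\grad^2 + \alpha$ is dominated by the zeroth-order term, so its inverse acts like a narrowly-peaked kernel and the solution is locally a smoothed, attenuated copy of the source.

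Concretely, I would carry out two asymptotic estimates. First, the normalization of $G_1$ is fixed so that $\alpha G_1$ becomes an approximation to the identity: using $\int_{\Reals^2} G(r;\sigma)\, dA = 2\pi\sigma^2$ one checks that, with $\sigma = 1/\alpha$, $\int \alpha G_1 \to 1$ while the kernel concentrates at the origin. Since $\rho_\shapesym$ is piecewise continuous on $\shapesym$, this yields $\alpha\,(G_1\conv \rho_\shapesym) \to \rho_\shapesym$ pointwise on the interior. Second, I would bound the Laplacian contribution by writing $D\grad^2 c^{*} = D(\grad^2 G_1)\conv \rho_\shapesym$ and locating the extrema of $G_{xx}$ at $r=0$ and $r=\pm\sigma\sqrt{3}$; with the chosen normalization this gives $\sup|\del^2 G_1/\del x^2| = O(\sigma)$ and hence $\grad^2 c^{*} = O(\sigma) \to 0$. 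Combining the two estimates produces $-D\grad^2 c^{*} + \alpha c^{*} - \rho_\shapesym \to 0$, which by the uniqueness half of \refprop{prop:1}(a) gives $c^{*}\to c_\alpha$.

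The main obstacle will be the boundary condition. A free-space Gaussian convolution does not automatically satisfy $\grad c^{*}\cdot \v{n}=0$ on $\del\shapesym$, so the verification above is only valid pointwise in the interior. I would handle this either by restating the convergence as interior convergence on compact subsets of $\shapesym - \del\shapesym$, or by first extending $\rho_\shapesym$ by reflection across $\del\shapesym$ so that the resulting convolution has vanishing normal derivative along the boundary; in either case boundary discrepancies are confined to a layer of thickness $O(\sigma)$ that shrinks with $\alpha$. A secondary subtlety is the book-keeping of powers of $\sigma$: having $\int \alpha G_1 \to 1$ and $\grad^2 G_1 \conv \rho_\shapesym \to 0$ hold simultaneously is a delicate balancing act, and one must confirm that the stated prefactor $\sigma^3/(2\pi)$ (once multiplied by $\alpha=1/\sigma$) is genuinely consistent with both requirements rather than merely one of them.
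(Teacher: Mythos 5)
Your overall strategy is the same as the paper's: exhibit $c^{*}=G_1\conv\rho_\shapesym$, show it satisfies the steady-state equation asymptotically by splitting into a reaction estimate ($\alpha G_1$ an approximate identity, so $\alpha c^{*}\to\rho_\shapesym$) and a diffusion estimate ($\grad^2 c^{*}\to 0$), and then appeal to the uniqueness in \refprop{prop:1}(a). Your two additions — worrying about the Neumann boundary condition, which a free-space convolution does not satisfy, and worrying about whether one prefactor can serve both estimates — are exactly the right places to push, and the first is handled adequately by your proposed restriction to interior convergence or reflection.

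The gap is that the second worry, which you flag but defer, is fatal to the argument as you have written it, so it cannot be left as a "confirmation." With $\int_{\Reals^2}G(r;\sigma)\,dA=2\pi\sigma^2$ and $\alpha=1/\sigma$, the requirement $\int\alpha G_1\to 1$ forces a prefactor of order $\sigma^{-1}$, whereas your second estimate, $\sup\abs{\del^2 G_1/\del x^2}=O(\sigma)$, requires a prefactor of order $\sigma^{3}$ (since $\sup\abs{G_{xx}}=1/\sigma^2$ at the origin). These differ by $\sigma^{4}$: under the approximate-identity normalization one in fact has $\sup\abs{\grad^2 G_1}\sim\sigma^{-3}\to\infty$, and even $\int\abs{\grad^2 G_1}\sim\sigma^{-1}\to\infty$, so bounding $(\grad^2 G_1)\conv\rho_\shapesym$ through the kernel's second derivatives cannot work. (The stated prefactor $\sigma^3/(2\pi)$ satisfies neither requirement, giving $\int\alpha G_1=\sigma^4\to 0$.) The missing idea is to re-route the Laplacian estimate through the regularity of the source rather than of the kernel: fix $G_1=G/(2\pi\sigma)$ so that $\int\alpha G_1=1$, assume $\rho_\shapesym$ is $C^2$ (or mollify it first), and write $\grad^2(G_1\conv\rho_\shapesym)=G_1\conv\grad^2\rho_\shapesym$, which is bounded by $\norm{\grad^2\rho_\shapesym}_{\infty}\int G_1=O(\sigma)\to 0$. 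With that single change both limits hold simultaneously and your uniqueness argument closes; without it, neither your proposal nor, for that matter, the paper's own computation is internally consistent.
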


\begin{corol}
  Suppose a shape $\shapesym$ has two components $\shapesym =
  \shapesym_0 \cup \shapesym_1$ such that $\shapesym_0 \cap
  \shapesym_1 = \del \shapesym_0$. If $\rho_\shapesym(\shapesym_0)=0$
  and $\rho_\shapesym(\shapesym_1)=1$, then 
  $$
  \lim_{\alpha\to\infty} \frac{ \log c_\alpha(Q) - \log c_\alpha}{
    \distsym_{\shapesym_0}} = cst. >0
  $$
  where $Q\in\del\shapesym_0$ and $\distsym_{\shapesym_0}$ is the
  distance function on $\shapesym_0$.
\end{corol}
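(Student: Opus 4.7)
The plan is to exploit the fact that on $\shapesym_0$, where $\rho_\shapesym = 0$, the equilibrium $c_\alpha$ solves the homogeneous modified Helmholtz equation $D\grad^2 c_\alpha = \alpha c_\alpha$, with positive boundary data on the interface $\del\shapesym_0 = \shapesym_0\cap\shapesym_1$ (inherited from the source side) and the zero-flux condition of \refprop{prop:1} on the outer part of $\del\shapesym$. For large $\alpha$ such solutions decay exponentially from the interface with characteristic length $\sqrt{D/\alpha}$, and a WKB (geometric-optics) analysis will identify the function inside the exponential as precisely the Euclidean distance $\distsym_{\shapesym_0}$, which is the content of the corollary.

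Concretely, I would substitute the ansatz $c_\alpha(P) = A_\alpha(P)\exp\bigl(-\sqrt{\alpha/D}\,\phi(P)\bigr)$ into the PDE and match powers of $\alpha$. The leading order yields the eikonal equation $\abs{\grad\phi}^2 = 1$ with $\phi = 0$ on $\del\shapesym_0$, whose unique viscosity solution is $\phi = \distsym_{\shapesym_0}$ (smooth off the medial axis by \refthm{theorem:distfunc} and \refresult{cor:dist-is-smooth}). The next order gives a linear transport equation for $A_\alpha$ along the gradient flow of $\distsym_{\shapesym_0}$, and a direct accounting of geometric spreading shows $A_\alpha$ grows at most polynomially in $\sqrt{\alpha}$. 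Taking logarithms and using $\distsym_{\shapesym_0}(Q)=0$ for $Q\in\del\shapesym_0$,
$$
\frac{\log c_\alpha(Q) - \log c_\alpha(P)}{\distsym_{\shapesym_0}(P)} \;=\; \sqrt{\alpha/D} \;+\; \frac{\log A_\alpha(Q) - \log A_\alpha(P)}{\distsym_{\shapesym_0}(P)},
$$
and the correction is $O\!\left(\log\alpha / \distsym_{\shapesym_0}(P)\right)$, which is negligible compared to the leading $\sqrt{\alpha/D}$ as $\alpha\to\infty$. The resulting ratio is therefore the announced positive constant (in $P$ and $Q$).

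The main obstacle is making the WKB expansion rigorous on a general shape, especially near $\MA(\shapesym_0)$ where the characteristic rays cross and $\distsym_{\shapesym_0}$ fails to be smooth; a uniform remainder estimate requires care precisely where Theorem \ref{theorem:distfunc}(a) breaks down. A more robust alternative would be to work directly with the modified-Helmholtz Green's function instead of the Gaussian proxy used in \refthm{thm:highalpha}: write $c_\alpha(P) = \int_{\shapesym_1} G_\alpha(\norm{P-y})\,dy$ with $G_\alpha(r)\propto K_0(r\sqrt{\alpha/D})$, apply Laplace's method using the classical asymptotics $K_0(z)\sim\sqrt{\pi/(2z)}\,e^{-z}$ for large $z$, and observe that the saddle point sits at the nearest boundary point, i.e.\ at an element of $\bsupp{P}{\shapesym_0}$. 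Either route recovers the coefficient $\sqrt{\alpha/D}$ multiplying $\distsym_{\shapesym_0}(P)$, giving the logarithmic-distance relation claimed.
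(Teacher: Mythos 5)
The paper states this corollary with no proof at all; it is evidently meant to be read off from \refthm{thm:highalpha}, so your proposal is supplying an argument where the paper has none. Your instinct to abandon the Gaussian kernel of \refthm{thm:highalpha} in favor of the true modified-Helmholtz Green's function is exactly right, and is worth emphasizing: with the Gaussian proxy one would get $\log c_\alpha(Q)-\log c_\alpha(P)\sim \distsym_{\shapesym_0}(P)^2/(2\sigma^2)$, i.e.\ a \emph{quadratic} dependence on distance, and the ratio in the corollary would not be constant in $P$ at all. Only the $K_0(r\sqrt{\alpha/D})\sim\sqrt{\pi/(2z)}\,e^{-z}$ asymptotics (equivalently, the WKB/eikonal analysis with $\abs{\grad\phi}=1$) produce the \emph{linear} decay of $\log c_\alpha$ in $\distsym_{\shapesym_0}$ that the corollary asserts. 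Your Laplace-method route, with the dominant contribution coming from the nearest point of $\shapesym_1$, i.e.\ from $\bsupp{P}{\shapesym_0}$, is the cleaner of your two options and sidesteps the medial-axis smoothness issue you rightly flag: one only needs the leading exponential order, for which a Varadhan-type two-sided estimate (compare against discs inscribed in and circumscribing $\shapesym_0$ about $P$) suffices without any smoothness of $\distsym_{\shapesym_0}$. The neglected Neumann reflections off $\del\shapesym$ contribute terms that are exponentially smaller still, so they do not disturb the leading rate.

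The one genuine discrepancy you should confront head-on rather than in passing: your own computation gives the ratio as $\sqrt{\alpha/D}+o(\sqrt{\alpha})$, which \emph{diverges} as $\alpha\to\infty$, so the displayed limit cannot literally equal a finite positive constant. Either the corollary must be read as ``for each large $\alpha$ the quantity is a positive constant independent of $P$ and $Q$'' (the reading your parenthetical ``(in $P$ and $Q$)'' adopts, and the one consistent with the paper's surrounding prose about the logarithm of concentration encoding distance), or the left-hand side needs to be normalized by $\sqrt{\alpha/D}$ before taking the limit. Your proof is correct for the normalized statement; as a proof of the statement as literally printed it would establish that the limit is $+\infty$. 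State which version you are proving. With that caveat, and with the amplitude bound $\log A_\alpha=O(\log\alpha)$ made uniform away from $\del\shapesym_0$ (it degenerates as $\distsym_{\shapesym_0}(P)\to 0$, so the convergence cannot be uniform up to the interface), the argument goes through.
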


% --------------
\subsection{Experimental Support of the Helmholtz Model}
\label{sec:cph-support}

The principal biological support of our Poisson model is an indirect
one: the model produces patterns that are similar to patterns observed
in nature. But it does not, for example, predict the concentrations of
auxin in any measurable fashion. By contrast, our Helmholtz model does
make such predictions and some experimental data is available.
%% Traditionally, auxin (and IAA in particular) was thought to be
%% synthesized by young leaves and transported to other parts of the
%% plant. But recent experimental evidence points to the contrary:
%% roots of maize can synthesize the hormone
%% \cite{feldman-1980,ribaut-etal-1993} and potentially all parts of
%% {\em Arabidopsis} have this capacity as well
%% \cite{ljung-etal-2001}.
Ljung~\etal~\cite[p.~466 and Fig.~1]{ljung-etal-2001} ``observed an
inverse correlation between leaf size and IAA concentration that was
independent of growth conditions and developmental stage.'' They
measured the proportion of hormone mass to total leaf mass $p_{IAA}$
(with units pg$\cdot$mg$^{-1}$ ) in leaves of different weight, $W$,
and obtained data that can be described well by a function $p_{IAA} =
A W^{-x}$ where $A$ is a constant and $x$ ranges between 0.72 and
0.98. This suggests that all cells may be producing auxin at the same
constant rate if the hormone is depleted proportionally
to its concentration. We reason as follows.

%%%%%%%% Figure
\begin{figure}[t]
\centering
\includegraphics{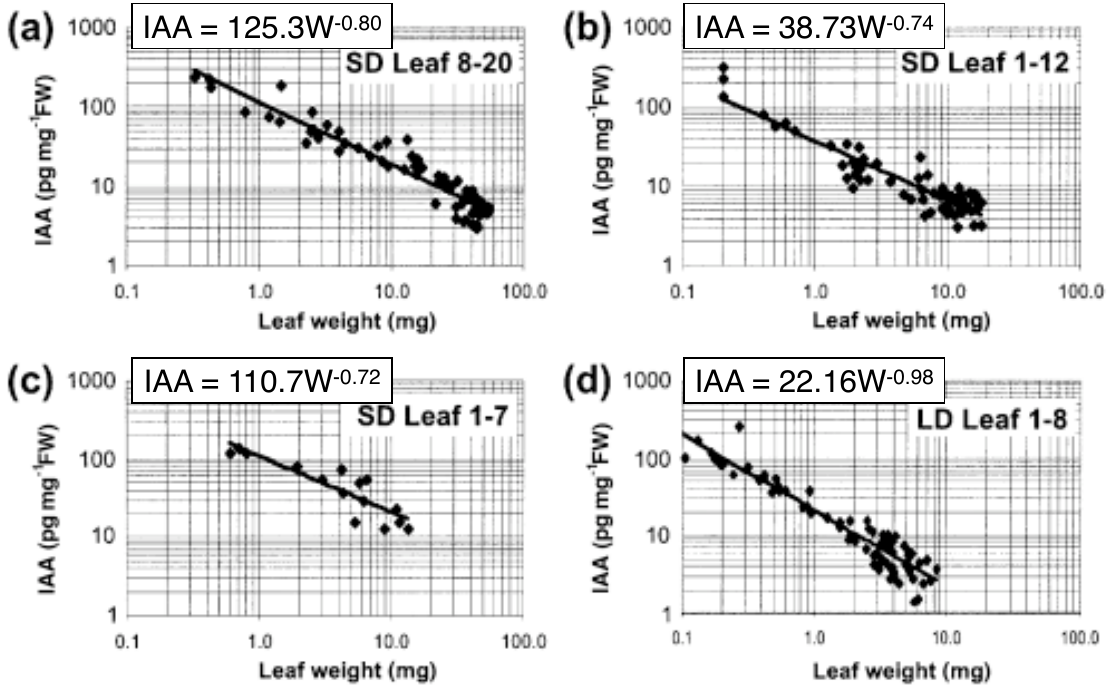}

\caption[Inverse correlation between leaf size and IAA
  concentration.]{\label{fig:leaf-iaa-powerlaw} Inverse correlation
  between leaf size and IAA concentration. After Figure 1 of
  \cite{ljung-etal-2001}. SD: short day; LD: long day. Original
  caption follows. IAA levels in Arabidopsis leaves.  The IAA
  concentration was measured in \FP{a} leaves 8-–20 from six plants
  grown for six weeks under SD \FP{b} leaves 1–-12 from six plants
  grown for 4.5 weeks under SD \FP{c} leaves 1–-7 from three plants
  grown for 3 weeks under SD and \FP{d} leaves 1-–8 from 10 plants
  grown for 16 days under LD.  The data are presented as $\log
  \log^{-1}$ plots of the IAA concentration in individual leaves
  versus leaf weight. }
\end{figure}
 %%%%%%%% End of Figure

Let $m(i)$ denote the mass of auxin in cell $i$ and consider how this
quantity changes with time as auxin is produced, destroyed and
transported to and from neighboring cells. The production is a
constant, $K$, and the destruction, as we argued above, is
proportional to the available amounts so this rate of change can be
expressed as $m_t(i)=K - \alpha m(i) + Transport$. Assuming that the
leaf is detached, as it is prior to measurement, the transport term
only moves the hormone inside the leaf but does not contribute to
either a total increase or a total decrease. Therefore, the rate of
change of auxin mass in the whole leaf is $M_t = \sum m_t(i) = \sum K
- \alpha \sum m(i)$. The equilibrium of this system, when $M_t=0$,
%will be approached the longer the leaf stays detached and 
describes well the state of the leaf during measurement because the
leaf is small (most leaves are less than 10 mg) and the time needed to
perform the manipulations---during which these dynamics apply
exactly---is therefore sufficiently long to shift the distribution of
the attached leaf to this equilibrium. Consequently, the constant
production hypothesis predicts a total auxin mass of around $M_{IAA} =
nK/\alpha$ for a leaf with $n$ cells. The quantity reported in the
literature, however, is an auxin-to-leaf weight ratio which we
calculate to be $p_{IAA} =
M_{IAA}/W$. Ljung~\etal~\cite{ljung-etal-2001} plot such ratios for
four groups of leaves of different sizes against leaf weight. Group
members are selected according to the order of leaf emergence
(phyllotaxis) and are organized as follows: the first group contains
samples from leaf numbers 8--20, the second from leaves 1--12, the
third from leaves 1--7, and the fourth from leaves 1--8. Leaves within
each of these ranges have a fairly similar final shape and size
(Ref. \cite{tsukaya-2002}) from which we infer that all leaves in the same
group have roughly the same final number of cells. Therefore, since
most data points are obtained after cell division has ceased and the
cell numbers have stabilized, our analysis suggests that $M_{IAA}$ is
roughly the same for all samples in the same group and that only $W$
differs. Theoretically, then, we expect the curves to be described by
a function $p_{IAA} = A W^{-x}$ with $A$ a constant and $x=1$ which is
in good agreement with the experimentally derived values of $x\approx
0.72, 0.74, 0.80, 0.98$, \reffig{fig:leaf-iaa-powerlaw}.

% --------------
\subsection{Further Predictions of the Helmholtz Model} 

\begin{figure}[t]
  \centering
  \includegraphics{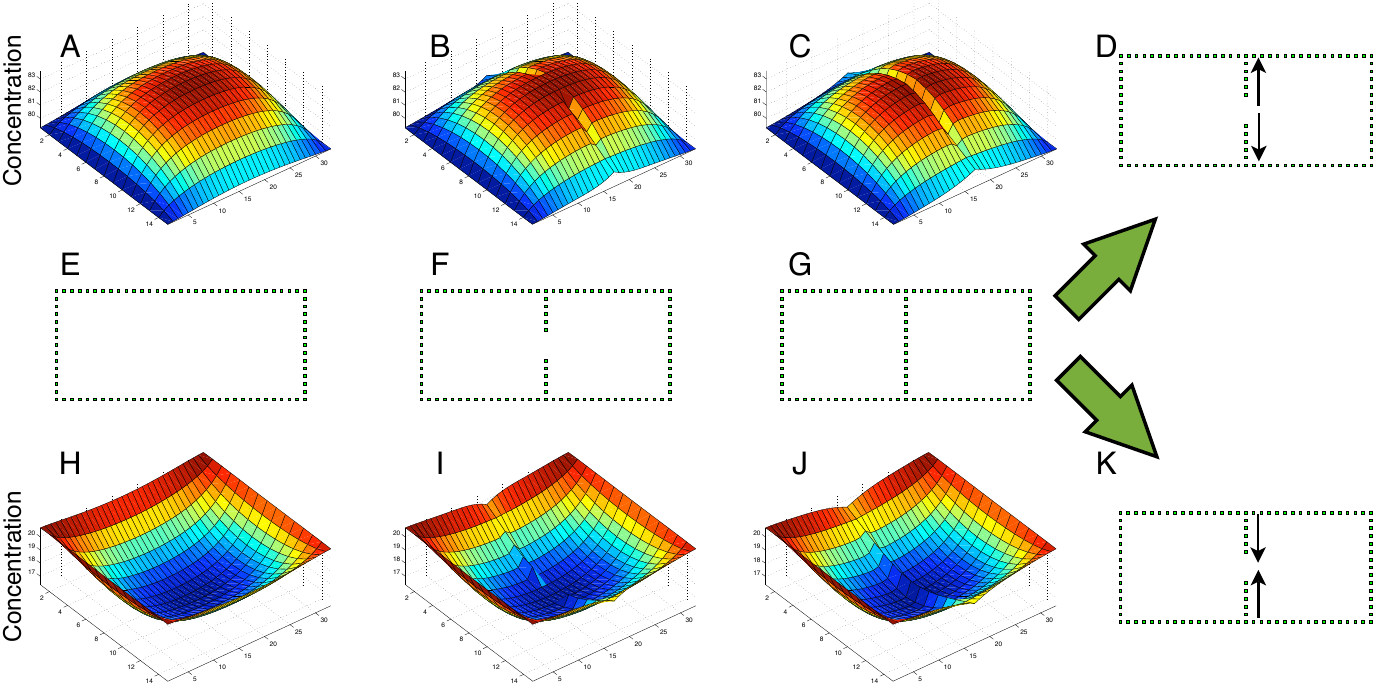}
  \caption[Our schema for the elaboration of new c-vascular strands is
    unaffected by the relative sizes of boundary cells and interior
    cells---a consistent difference is
    enough.]{\label{fig:strand-animation} Our schema for the
    elaboration of new c-vascular strands is unaffected by the
    relative sizes of boundary cells and interior cells---a consistent
    difference is enough. However, when all cells perform the
    functions CF1 and CF3 from Table~\ref{tab:helmholtz-model} the equilibrium
    concentration of $s$ is substantially different for the two types
    of configuration.  \FP{A--C} C-vascular cells are larger than
    ground cells; \FP{E--G} depiction of c-vascular cells (green
    squares) as new strands form; \FP{H--J} c-vascular cells smaller
    than ground cells. \FP{D,K} Arrows show direction of auxin
    flow. Here $\alpha=0.01$ in all simulations.}
  
\end{figure}

%%%-----
\subsubsection{Vein Formation}

%The program in Table~\ref{tab:model} can reliably produce new vascular
%strands and this section explains the mathematical results in
%Section~\refchap{chap:alpha-math} which prove this claim.

The new (Helmholtz) formulation of the model preserves the distance
information available locally to cells under appropriate
conditions. Thus, if $\alpha$ is small and there are consistent
differences in cell size between ground cells and c-vascular cells,
then the distribution of auxin in an areole encodes size information
just as it did in the Poisson case (details in
Proposition~\ref{prop:asin-cstbdry}). For example, if ground cells are
smaller than c-vascular cells (as in
\reffig[A]{fig:strand-animation}), then the same qualitative
distribution of auxin is obtained as in \cite{pdswz:paper}.  The
program in Table~\ref{tab:helmholtz-model} then creates new strands as before.
Such relative cell sizes are observed in the early emergence of
c-vascular networks (see p.~21 in \cite{pray-1955b}), but in more
mature tissues it is the ground cells that are larger (e.g. the
procambium in Fig.~2 of \cite{nelson97:review} or the mature vein
cells compared to others in \cite[p.~234]{salisbury-ross-1992} or
\cite[p.~460]{raven81:biology-of-plants}). Our model accommodates this
second possibility as well.  If boundary cells are smaller than
interior cells (as in \reffig[I]{fig:strand-animation}), then the
hormone distribution will be inverted---higher concentration on the
boundary than in the interior---but the differences in concentration
between neighboring cells will follow the same qualitative rules as in
the first relative size configuration, albeit with an opposite sign
(details in Proposition~\ref{prop:1}). Therefore, the same program can
produce new vascular strands and the strands that it produces will be
exactly the same in both configurations.

\begin{figure}[t]
  \centering
  \includegraphics[width=\textwidth]{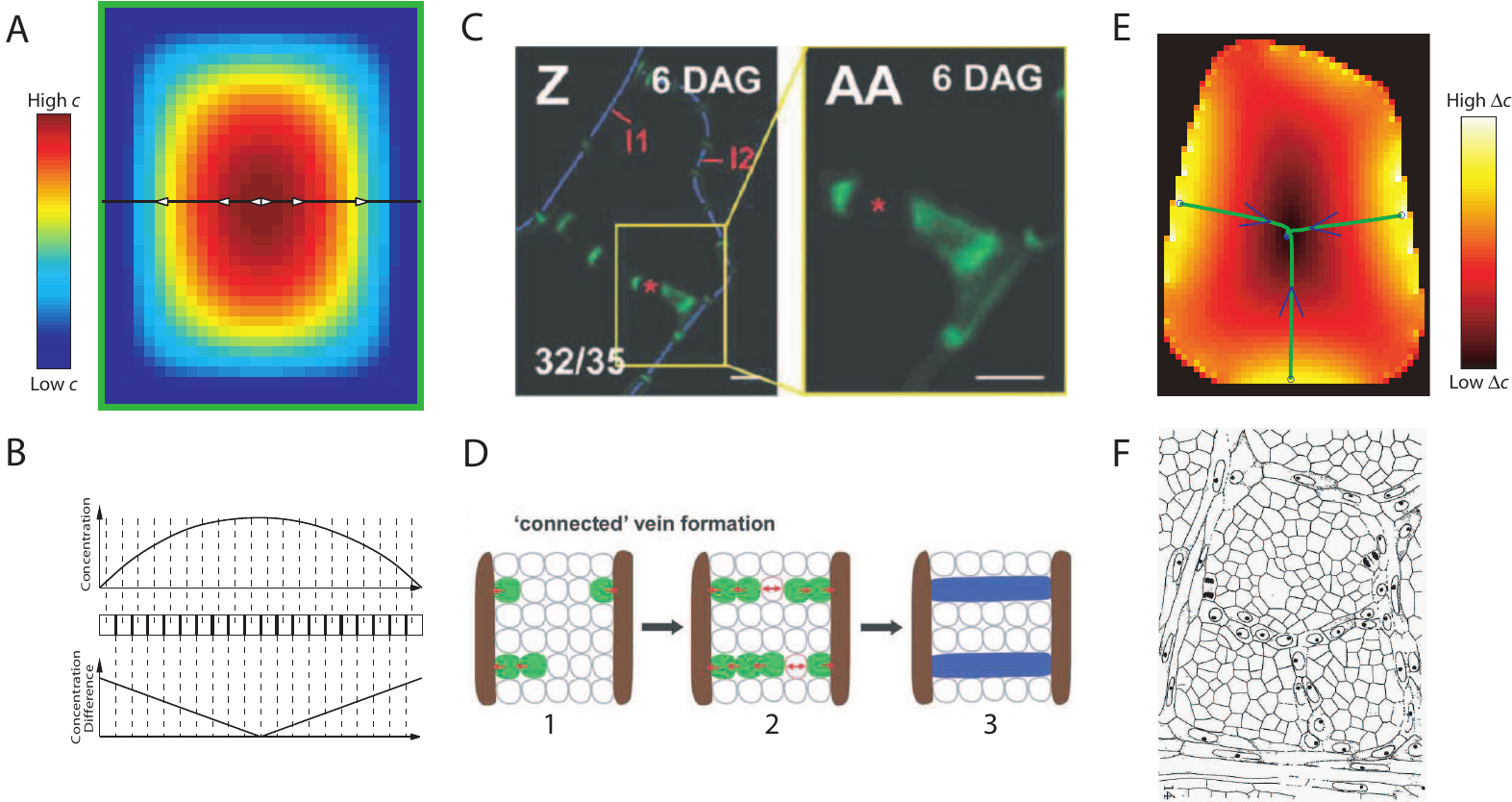}

  \caption[Illustration of the constant production hypothesis and
    cartoon model for vein formation.]{Illustration of the constant
    production hypothesis and cartoon model for vein formation.
    \FP{A} Consider a square areole as in
    \reffig[A--D]{fig:strand-animation} in which c-vascular cells are
    larger than the interior ground cells. Auxin diffuses faster
    between c-vascular cells than any other type. We show the
    equilibrium concentration distribution. Note that it is minimal
    nearest the veins and maximal at the center; i.e., {\em it varies
      with the distance to the nearest vein}.  Arrows along a
    one-dimensional cut (black line through the center) illustrate the
    flow of auxin along this line from high- to low-concentration
    pixels.  \FP{B} Concentration along the cut in \CP{A} illustrating
    maximum at center. Magnitude of gradient (concentration difference
    between cells) varies ``inversely'' and peaks at the veins with a
    value proportional to vein distance to the center. This suggests a
    {\em schema} : differentiate from ground to vascular when gradient
    magnitude is large (equivalently: when central cells are far from
    veins).  Once a cell begins to differentiate, it clears auxin more
    efficiently thereby causing adjacent cells to differentiate, until
    new veins are formed (see \CP{D} below).  \FP{C} PIN1 expression
    in {\em Arabidopsis}. Red star (*) denotes a {\em bipolar} cell.
    \FP{D} Cartoon mechanism of vein formation suggested by
    Scarpella~\etal~\cite{berleth:bipolarcell} based on measurements
    as in \CP{C}. Note that this realizes precisely the schema in
    \CP{A}.  \FP{E} Illustration of the veins formed in an areole
    according to the schema in \CP{A} and developed in
    \cite{pdswz:paper}. Note the {\em bipolar} flow at single
    cells predicted by the model and reported in
    \cite{berleth:bipolarcell}; compare with \CP{C}. Colors denote
    magnitude of gradient and arrows show the direction of vascular
    strand formation, opposite to flow. \FP{F} Cell outlines for the
    areole modeled in \CP{E}. Note the large vascular cells.  Figure
    credits: \CP{A, B, E} from \cite{pdswz:paper}; \CP{C, D} from
    Fig.~2 and Fig.~7, resp. of \cite{scarpella:genes-for-procambium};
    \CP{F} from \cite{pray-1955b}.}

  \label{fig:bipolar-cell-prediction}
\end{figure}

\reffig{fig:strand-animation} shows a simulation of new strands
created according to Table~\ref{tab:helmholtz-model} in a hypothetical areole
for each of the two combinations of relative cell sizes.  Notice that
the two new c-vascular strands connect in the middle of the
areole. They do so by draining $s$ in opposite directions so that the
cell where the two strands eventually meet does not have a well
defined polarity---it is effectively bipolar. Thus, our model predicts
multi-polar cells whenever loops of veins form. Recently at least the
bipolar case has been reported for {\em Arabidopsis}
\cite[Fig.~2]{berleth:bipolarcell},
\reffig{fig:bipolar-cell-prediction} compares our predictions to the
empirical observations.

However, our theory explains this phenomenon only under the assumption
that vascular cells are larger than ground cells. If these relative
sizes are reversed, then we should expect bipolar cells to form but
the definition needs to be revised. In effect, the relevant cells
would not have carriers facilitating export; instead, neighboring
cells from opposite sides of a bipolar cell both exhibit facilitated
transport toward the bipolar cell. Our theory predicts that such
configurations would arise if new vascular strands are created in more
mature organs, extending an existing vein as opposed to existing
procambium (e.g. a tertiary vein stemming from a primary may be a good
candidate).

%%%-----
\subsubsection{Auxin Distribution in {\em Arabidopsis} Roots}

Roots have a much simpler cell size distribution than leaves, and
reliable estimates are easier to obtain. There even exist detailed 3-D
models of root tips of {\em Arabidopsis}, but we shall only use 2-D
slices to test our theory in those organs. This type of data is
representative of the full 3-D organ because roots are radially
symmetric: rotating the 2-D slice about its long axis yields a good
approximation the complete root. Along the same axis, three regions
are distinguished: (1) a division zone (DZ) near the tip, (2) an
elongation zone immediately adjacent to the DZ, and (3) a maturation
or differentiation zone (\cite[p.~436]{raven81:biology-of-plants},
\cite{salisbury-ross-1992}). On average, cells are smallest in the
division zone, followed by slightly larger ones in the elongation
zone, and then by the largest cells in the maturation zone.

This schematic configuration is depicted in \reffig[A]{fig:rootsims}
and forms the setup of the first type of root simulations. Each cell
is represented by a rectangular box containing a single dot inside,
the unit of auxin production independent of cell size. Note that the
hormone distribution obtained predicted in this fashion
(\reffig[B,E]{fig:rootsims}) qualitatively agrees with the empirical
measurements (\reffig[C,D]{fig:rootsims}) reported by
Bhalerao~\etal~\cite{bhalerao-etal-2002}. Their data come from slices
of untreated plants and consist of the average concentration of IAA as
a function of distance from the root tip. The technique for measuring
auxin levels gives good concentration estimates but has poor spatial
resolution, so our comparison is restricted to the overall shape of
the curve.  The use of staining techniques, on the other hand,
promises higher resolution but only provides qualitative
information. Thus, we suspect that there is a peak of auxin
concentration near the root tip, as shown in
\reffig[C]{fig:tracedrootsims}, but do not know its height. This
observation may be sufficiently explained by the geometry of the
organ, as our first schematic simulation suggests, so we turn to a
real root specimen to test this claim.

Our next simulation uses a manually traced 2-D slice
\cite[Fig.~1A]{casimiro-etal-2001} to predict the shape of auxin
distribution inside that root. The setup is redrawn in
\reffig[A]{fig:tracedrootsims}. Notice that the simulation result (in
part {\em B}) reproduces that peak. This is a robust feature of the
model as the only external cue has been cell size: the diffusion
coefficients are all equal, the per-cell auxin production is constant.

% The effect of the constant production hypothesis is also consistent
% with measured levels of auxin in the root tip of {\em Arabidopsis}.
% The simulations in \reffig{fig:rootsims} qualitatively match the
% experimental data presented by \cite{bhalerao-etal-2002}.  The
% simluation assumes that each cell produces auxin at the same constant
% rate and that cell sizes vary in a well known pattern
% \cite[Fig.~1]{swarup01}. Cells are smaller toward the tip (division
% zone) of the root and increase in the direction of the shoots
% (maturation and elongation zones), in the shoot there are small cells
% as well near the apical meristem. This prediction can be verified by
% an experiment where concentrations are measured and the cell
% geometries are recorded.

\begin{figure}[ht]
  \centering
  \begin{center}
    \begin{minipage}[b]{0.55\linewidth}
      \includegraphics[width=.55\figwidth]{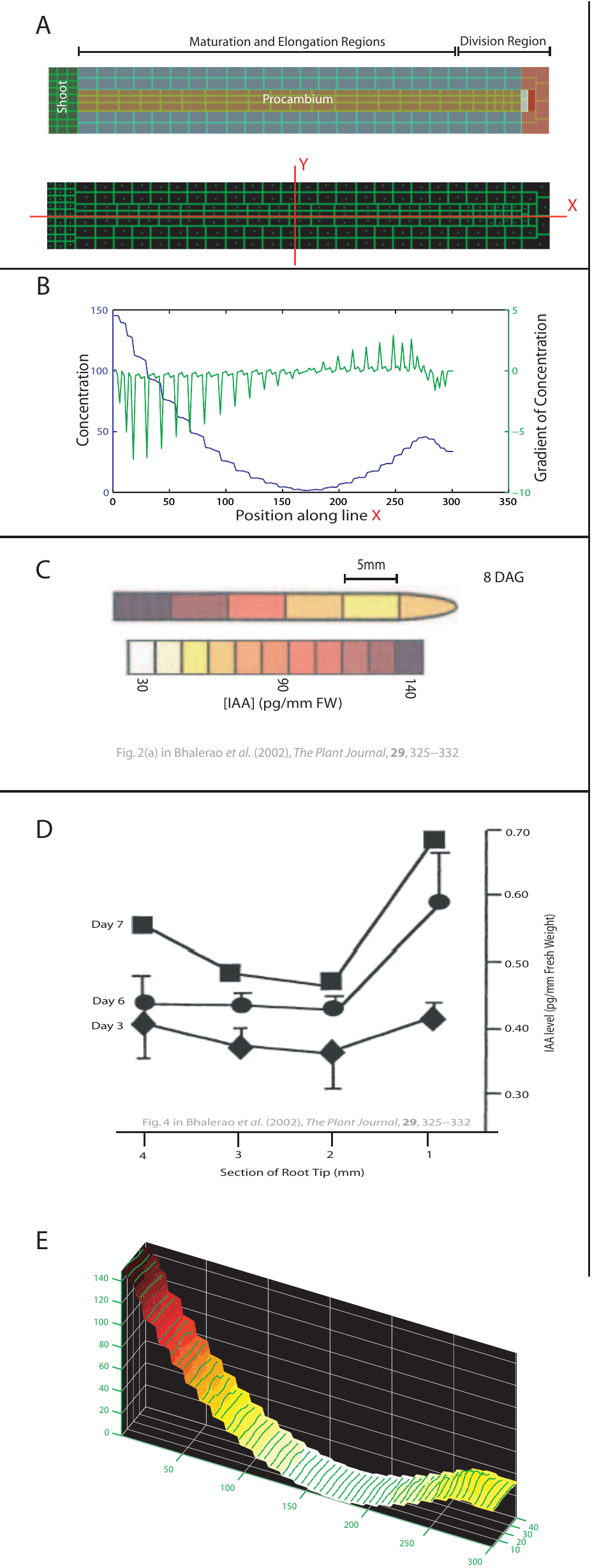}      
    \end{minipage}\hfill
    \begin{minipage}[b]{0.43\linewidth}
      \caption[Simulations in a schematic root.]{Simulations in a
        schematic root. \FP{A} Setup inspired by known regions of
        plant roots \cite[p.~434-6]{raven81:biology-of-plants}. Cells
        are rectangular boxes which increase in size from the tip
        toward the shoot. A single dot inside each cell represents one
        unit of hormone production; green boundaries (cell interfaces)
        have the same diffusion coefficients everywhere. \FP{B} Result
        of simulation. Concentration at steady-state through the
        horizontal red line in \CP{A}. \FP{C,D} Measurements of auxin
        concentration in sampled root tissues (either 5mm or 1mm
        cylinders) reported by Bhalerao~\etal~\cite[Figs. 2 and
          4]{bhalerao-etal-2002}. \FP{E} Result of simulation:
        concentration profile over the whole domain.}
      \label{fig:rootsims}
    \end{minipage}
  \end{center}
\end{figure}

\begin{figure}[ht]
  \centering
  \begin{center}
    %\begin{minipage}[b]{0.53\linewidth}      
      %\includegraphics[]{traced-root-sim-01}           
      \includegraphics[]{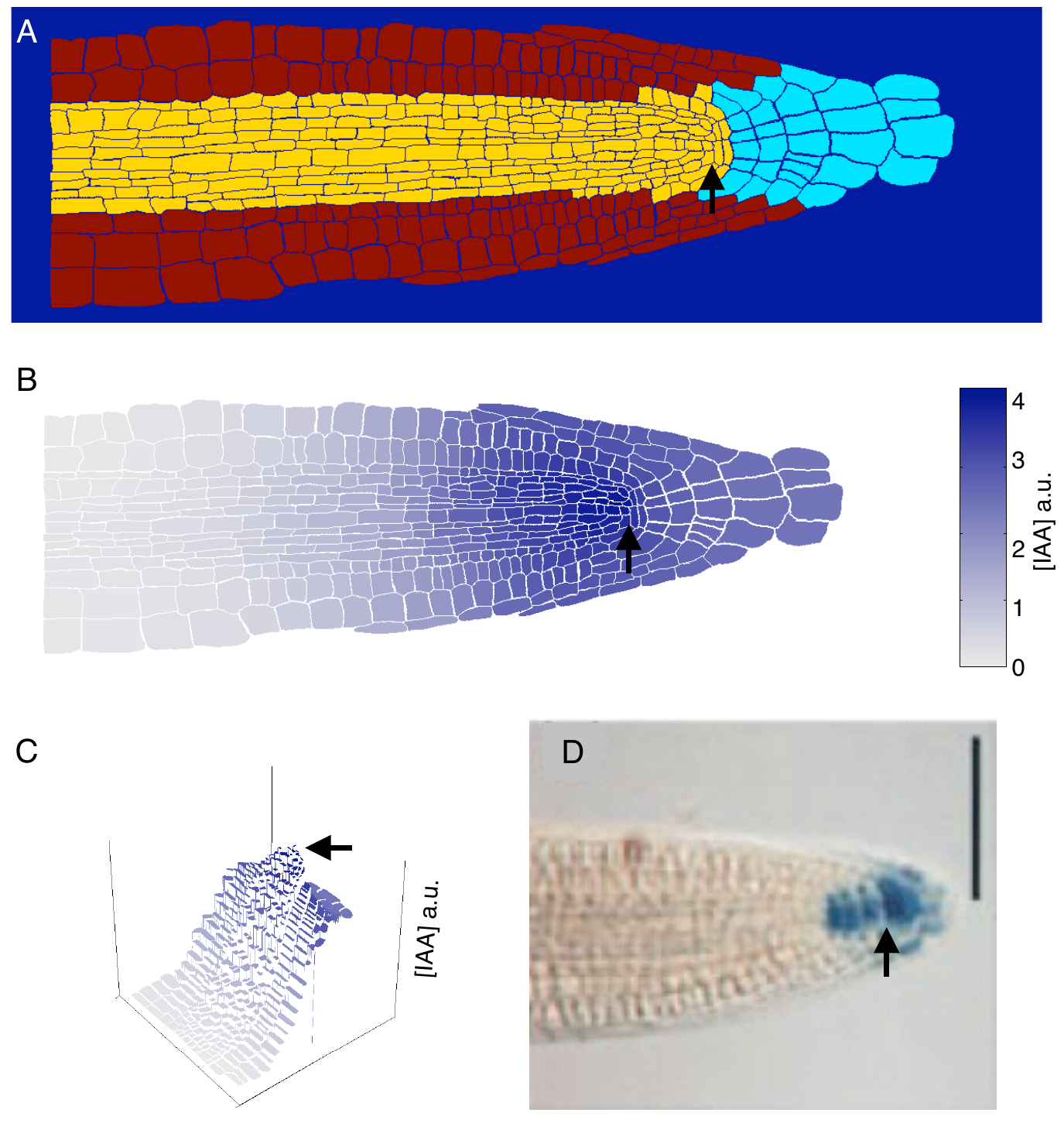}
    %\end{minipage}\hfill
    %\begin{minipage}[b]{0.45\linewidth}      
      \caption[Root simulations from a manually traced {\em
          Arabidopsis} specimen from
        Swarup~\etal~\cite{swarup01}.]{Root simulations from a
        manually traced {\em Arabidopsis} specimen from
        Swarup~\etal~\cite[Fig.~1]{swarup01}. \FP{A} Traced root used
        for the simulation. Cell size was determined by computing the
        area; neighbor relations and interface area were obtained by
        computing the length of a shared boundary. Diffusion
        coefficients and per-cell hormone production are the same for
        all cells. \FP{B,C} Result of simulation. Note the predicted
        peak of concentration (arrows). \FP{D} Stained root from
        Casimiro~\etal~\cite[Fig.1A]{casimiro-etal-2001} showing a
        peak at the tip (arrow).  Our model predicts this peak even
        though all parameters are kept uniform.}
    \label{fig:tracedrootsims}
    %\end{minipage}
  \end{center}
\end{figure}

%%%%%%%%%%%%%%%%%%%%%%%% 
% Section
%%%%%%%%%%%%%%%%%%%%%%%% 
\section{Numerical Simulations}

%%-----
\subsection{Background}

Here we provide background definitions and results separated in two
major categories: geometry and graph theory. The first category is
needed to prove the results about the steady-state of certain
dynamical processes, and the second category is used to prove that a
discretization of these processes is well behaved.

A {\em graph} $G=(V,E)$ is a combinatorial structure that consists of
three things: a set of vertices (or nodes) $V$, a set of edges $E$,
and an incidence relation. The edges describe which vertices are
connected and the incidence relation attributes an order to each
connection. For example, the edge $e=(i,j)$ says that node $i$ is
connected to node $j$ and that it ``starts'' at $i$ and ``ends'' at
$j$. The number of edges of which vertex $j$ is an end is called the
{\em degree} of $i$. Each edge may take a value, called a weight, and
these values can be recorded in an {\em adjacency matrix} $A$ of size
$n\times n$ where $n=|V|$. So, $A_{ij}$ is the value for the edge
$(i,j)$. Note that the matrix will be symmetric if the direction of
the edges does not matter, i.e. $A_{ij}=A_{ji}$. This is the
undirected case, which will be discussed in this paper and describes
the (Helmholtz) dynamics of auxin concentration. In the following
paper \cite{plos-paper2}, however, directed graphs will be needed and
$A_{ij}=-A_{ji}$ in some cases.

A matrix can be described by its {\em eigenvectors} and {\em
  eigenvalues}. A vector $v$ is called an eigenvector (or
characteristic vector) of the matrix $A$ iff $Av=\lambda v$ for some
number $\lambda$, which is the corresponding eigenvalue (or
characteristic value).

The {\em graph Laplacian} $L(G)$ of a graph $G$ is given by $L(G) =
D(G) - A(G)$ where $D$ is the diagonal degree matrix ($D_{ii}=$degree
of $i$) and $A$ is the adjacency matrix. The following is a standard
result (e.g. see \cite{briggs93}).

\begin{theorem} \label{thm:graph-laplacian}
  Let $G$ be a connected graph on $n$ vertices and denote by
  $\lambda_0\leq\lambda_1\leq \cdots \leq \lambda_{n-1}$ the eigenvecotrs
  of $L=L(G)$, the graph Laplacian. Then:
  \begin{enumerate}
  \item[(a)] All $\lambda_i$ are real;
  \item[(b)] $\lambda_0=0$, with eigenvector $\v{1} = [1,1,\cdots,1]^T$;
  \item[(c)] $\lambda_1>0$; and
  \item[(d)] the eigenvectors of $L$ span $\Reals^n$.
  \end{enumerate}
\end{theorem}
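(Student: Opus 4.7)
The plan is to exploit the fact that $L=D-A$ is a real symmetric matrix (since the undirected graph yields $A^T=A$ and $D$ is diagonal) and then read off all four conclusions from the spectral theorem together with one connectedness argument. First I would observe that because $L$ is real and symmetric, the spectral theorem gives (a) real eigenvalues and (d) an orthonormal eigenbasis of $\Reals^n$; these are immediate and require no graph-theoretic input beyond the undirectedness assumption.

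For (b), I would compute $L\v{1}$ directly: the $i$-th entry of $A\v{1}$ is $\sum_j A_{ij}$, which by definition equals $D_{ii}$, so $(D-A)\v{1}=0$. This shows $0$ is an eigenvalue with eigenvector $\v{1}$, and in particular $\lambda_0\leq 0$.

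The main work, and the step I expect to be the real obstacle, is (c): showing that $0$ has multiplicity exactly one, which will simultaneously prove $\lambda_0=0$ (so the inequality above is tight) and $\lambda_1>0$. The key identity is the quadratic-form representation
\begin{equation*}
  x^T L x \;=\; \sum_{(i,j)\in E}(x_i-x_j)^2,
\end{equation*}
which I would verify by expanding $x^T D x - x^T A x$ and grouping terms by edge. This identity shows $L$ is positive semidefinite, so every eigenvalue satisfies $\lambda_i\geq 0$, confirming $\lambda_0=0$. Moreover, if $Lx=0$ then $x^TLx=0$, forcing $x_i=x_j$ for every edge $(i,j)\in E$. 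Here connectedness enters: since any two vertices are joined by a path in $G$, equality propagates along the path, so $x$ is constant, i.e.\ $x\in\mathrm{span}(\v{1})$. Hence the $0$-eigenspace is one-dimensional and $\lambda_1>0$.

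Finally, (d) is already delivered by the spectral theorem applied in the first step, so once (c) is established the theorem is complete. The only subtle point is ensuring the edge-sum identity is written with the right convention (each undirected edge contributing once), which I would handle by briefly fixing an arbitrary orientation of the edges for bookkeeping purposes.
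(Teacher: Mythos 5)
Your proof is correct and complete. The paper itself gives no argument for this theorem---it simply labels it a standard result and cites a reference---so there is nothing to diverge from; your write-up (spectral theorem for the real symmetric $L$ giving (a) and (d), the direct computation $L\v{1}=0$ for (b), and the quadratic-form identity $x^TLx=\sum_{(i,j)\in E}(x_i-x_j)^2$ plus connectedness to pin down a one-dimensional kernel for (c)) is exactly the standard proof the citation points to, and all steps check out.
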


A matrix $A$ is called {\em positive definite} if $v^t A v > 0$ (where
$v^t$ is the transpose of $v$) for all non-zero vectors $v$, and it is
{\em semi-definite} if some non-zero vector $w$ exists such that $v^t
A v = 0$. Note that $L(G)$ is positive semi-definite.

The {\em determinant} of an $n\times n$ matrix $A$ is given by the following
formula:
$$
det(A) = \sum_{\sigma \in S_n} sgn(\sigma) \prod_{i=1}^n A_{i,\sigma(i)} 
$$

\noindent
where $\sigma$ is a permutation on $n$ elements and $sgn(\sigma)$ is
the sign of the permutation: positive if the permutation can be
produced by an even number of element exchanges and negative
otherwise. A matrix $A$ is invertible -- i.e. $A^{-1}$ exists such
that $A^{-1}A=Id$ -- iff $det(A) \neq 0$. The determinant is equal to
the product of eigenvalues, $det(A)=\prod_i \lambda_i$, so $L(G)$ is
not invertible.

%%-----
\subsection{Discrete Simulations}
\label{sec:discrete-sims}

In this section we prove that appropriate discretizations of the
continuous equations may be solved numerically. Specifically, we show
that \refeq{eq:goveq} and \refeq{eq:production-destruction} converge
to a unique solution as $t\to\infty$ and show how to obtain this
solution without iteration. We shall not assume anything about the
dimensionality of the space which contains the cells, only that they
are connected.

Suppose there are $n$ cells in a conglomerate $\shapesym$ where each
cell shares an interface with at least one other cell and that the
conglomerate is connected in this fashion. Labeling each cell with a
number from 1 to $n$, suppose that each cell $i$ contains a hormone at
concentration $c(i)$. The interfaces allow this hormone to diffuse
following Fick's law, so assuming the diffusion constant through an
interface between cell $i$ and cell $j$ is $D_{ij}$, then the flow
into cell $i$ through each existing interface is given by $D_{ij}
(c(j)-c(i))$. Thus, the diffusion of the hormone through the
conglomerate may be described by a matrix $-L$ applied to the vector
$\v{c}$ of concentrations, where
$$
-L_{ij} = \left\{
  \begin{array}{l@{\hspace{1em}}r}
    D_{ij} & i\neq j\\
    -\sum_{j\neq i} D_{ij} & i=j
  \end{array}
  \right.
  $$
  Notice that $L$ is the graph Laplacian for the graph where each cell
  is a node and the edge weights are the diffusion coefficients
  $D_{ij}$.
  
  Now suppose that the hormone concentration in cell $i$ is somehow
  maintained at a fixed level $c(i)=c_f$; refer to such a cell as a
  \defem{sink}. Then the diffusion process in or out of cell $i$ has
  no effect on its concentration $c(i)$, but neighboring cells $c(j)$
  will be affected by $c(i)$. Hence, the new matrix $M_{ij}$
  describing the diffusion process looks exactly like $-L$ except for
  the rows corresponding to sinks; if $i$ is a sink then $M_{ii}=1$
  and $M_{ij}=0$ for $j\neq i$. This means that $det(M)=det(M^{ns})$,
  and that $M$ is no longer symmetric, because any neighbor $j$ of $i$
  which is not a sink will induce $M_{ji}=D_{ji}$. However, the
  sub-matrix $M^{ns}$ of $M$ with rows and columns corresponding to all
  cells which are not sinks is symmetric. In fact, the next lemma
  follows immediately.
  
  \begin{lemma}
    Let $\shapesym$ be a conglomerate of $n = n_s + n_r$ cells of
    which $n_s$ are sinks. Label the sinks 1 to $n_s$ and the rest
    with $n_s+1$ to $n$. Let $M^{ns}=M(n_s+1:n, n_s+1:n)$ be the
    $n_r\times n_r$ sub-matrix and $-L$ be the graph Laplacian of the
    sub-graph $G^{(ns)}$ of non-sink cells. Let $C_{ij}$ by an
    $n_r\times n_r$ diagonal matrix such that for each cell $i$ in
    $G^{(ns)}$ neighboring sinks $s_j$ we have $C_{ii} = \sum_{s_j}
    D_{i s_j}$. Then, for a suitable labeling,
    $$
    M^{ns}= -L - C ~.
    $$
  \end{lemma}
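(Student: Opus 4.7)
The plan is straightforward bookkeeping: compare the entries of $M^{ns}$ term by term to the entries of $-L - C$, splitting contributions according to whether a neighbor is a sink or not. Since the lemma is a purely combinatorial identity about matrices, there is no analytic obstruction; the only care needed is in matching indices after the relabeling that puts sinks first.

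First I would fix $i, j \in \{n_s+1,\dots,n\}$ and compute $M^{ns}_{ij}$ directly from the definition of $M$. For the off-diagonal case $i \neq j$, the row $i$ of $M$ is a row of $-L_{\text{full}}$ (since $i$ is not a sink), so $M_{ij} = D_{ij}$ when $i$ and $j$ share an interface and $0$ otherwise. But because both $i$ and $j$ are non-sinks, the edge $(i,j)$ lies in the subgraph $G^{(ns)}$, so this is exactly the $(i,j)$ entry of $-L$ (the graph Laplacian of $G^{(ns)}$). The diagonal matrix $C$ contributes $0$ off the diagonal, so the off-diagonal entries of $M^{ns}$ and $-L - C$ agree.

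Next I would handle the diagonal. By definition, $M_{ii} = -\sum_{k \neq i} D_{ik}$ where the sum ranges over all neighbors $k$ of $i$ in the full conglomerate. I split this sum into two pieces: neighbors $k \in G^{(ns)}$ (the non-sinks) and neighbors $k$ that are sinks, writing
\[
M^{ns}_{ii} = -\sum_{k \in G^{(ns)},\, k \neq i} D_{ik} \;-\; \sum_{s_j \text{ sink}, s_j \sim i} D_{i s_j}.
\]
The first sum is exactly $(-L)_{ii}$ since the subgraph Laplacian of $G^{(ns)}$ has diagonal entry $-\sum_{k \sim i, k \in G^{(ns)}} D_{ik}$ (with a minus sign flipped by the convention $-L$). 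The second sum is $-C_{ii}$ by the definition of $C$. Hence $M^{ns}_{ii} = (-L)_{ii} - C_{ii}$.

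Combining the two cases gives $M^{ns} = -L - C$ entrywise. The only ``hard'' part is the mild notational one of keeping the sign conventions straight (the paper uses $-L$ for the diffusion operator whereas the graph Laplacian is $L = D(G) - A(G)$), and verifying that restricting to the non-sink rows and columns of the full $M$ does not introduce any contributions that the subgraph Laplacian of $G^{(ns)}$ fails to capture aside from the sink-boundary terms, which are exactly what $C$ is designed to absorb.
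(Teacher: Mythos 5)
Your entrywise verification is correct, and it is exactly the computation the paper has in mind: the paper states the lemma ``follows immediately'' from the definitions of $M$ and $C$ and omits the details, which are precisely your split of the diagonal neighbor sum into non-sink neighbors (giving the subgraph Laplacian term) and sink neighbors (giving $-C_{ii}$), together with the observation that off-diagonal entries between non-sinks are untouched.
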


  If we also assume that the hormone is destroyed in each non-sink
  cell $i$ at a rate proportional to the concentration in cell $i$,
  then the sub-matrix becomes
  \begin{equation}
    \label{eq:Mns}
      M^{ns}= -L - C - \alpha I~
  \end{equation}
  where $I$ is the $n_r\times n_r$ identity matrix. Observe that
  $M$ is invertible if and only if $M^{ns}$ is invertible because
  $det(M)=det(M^{ns})$.

  \begin{lemma} \label{lemma:Minvertible}
    If either $\alpha>0$ or $\alpha\geq 0$ and there is at least one
    sink cell, then the matrix $M$ is invertible and has strictly
    negative eigenvalues.
  \end{lemma}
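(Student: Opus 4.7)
The plan is to reduce the claim to the symmetric non-sink block $M^{ns} = -L - C - \alpha I$ from \refeq{eq:Mns}. Ordering the $n_s$ sinks first, the sink rows make $M$ block lower triangular with $I_{n_s}$ as the upper-left block and $M^{ns}$ as the lower-right block, so $\det M = \det M^{ns}$, and the spectrum of $M$ is $\{1\}$ (with multiplicity $n_s$) together with the spectrum of $M^{ns}$. Invertibility of $M$ therefore reduces to invertibility of $M^{ns}$, and the claim about strictly negative eigenvalues refers to those of $M^{ns}$ (the sink block trivially contributes the eigenvalue $1$).

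The core step is to show that the symmetric matrix $L + C + \alpha I$ is positive definite; equivalently, that $M^{ns}$ is negative definite. For any $v\in\Reals^{n_r}$,
$$
v^T (L + C + \alpha I)\, v \;=\; v^T L v \;+\; \sum_i C_{ii} v_i^2 \;+\; \alpha \norm{v}^2 ,
$$
and each summand is nonnegative: $v^T L v \geq 0$ because $L$ is a graph Laplacian (\refthm{thm:graph-laplacian}), $C$ is diagonal with nonnegative entries, and $\alpha \geq 0$. When $\alpha > 0$, the third term alone is strictly positive for every $v\neq 0$, so positive-definiteness is immediate.

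The main obstacle is the case $\alpha = 0$ with at least one sink present, where strict positivity must be wrung out of only the first two terms. From $v^T L v = 0$, \refthm{thm:graph-laplacian}(b,c) applied to each connected component of the non-sink subgraph $G^{(ns)}$ forces $v$ to be constant on every such component. From $\sum_i C_{ii} v_i^2 = 0$ we then get $v_i = 0$ for every non-sink cell $i$ adjacent to a sink. The key observation is that, because the full conglomerate $\shapesym$ is connected and contains a sink, every connected component of $G^{(ns)}$ must contain at least one cell adjacent to a sink; otherwise that component would be disconnected from the sinks in $\shapesym$, contradicting global connectivity. Hence the constant value of $v$ on each component is forced to be $0$, so $v = 0$, completing the positive-definiteness argument and the lemma.
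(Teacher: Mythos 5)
Your proof is correct and follows the same basic strategy as the paper's: reduce to the non-sink block $M^{ns}=-L-C-\alpha I$ via $\det M=\det M^{ns}$ and show that $L+C+\alpha I$ is symmetric positive definite. But your execution is more careful than the paper's in two places, and both refinements matter. First, the paper's inner \reflemma{lemma:posdef} assumes the graph whose Laplacian is $L$ is \emph{connected} and that a single diagonal entry of $D=C+\alpha I$ is positive; when $\alpha=0$ this is not enough, because deleting the sink cells can disconnect $G^{(ns)}$, and then $v^TLv=0$ only forces $v$ to be constant on each component, so one positive $C_{ii}$ kills $v$ only on the component containing that cell. Your observation that global connectivity of $\shapesym$ forces \emph{every} component of $G^{(ns)}$ to contain a cell adjacent to a sink is exactly the missing step, and your component-wise argument closes the gap. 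Second, you correctly note that with sinks present the full matrix $M$ has eigenvalue $1$ with multiplicity $n_s$ from the sink rows, so the assertion ``strictly negative eigenvalues'' can only be read as applying to $M^{ns}$ (or to $M$ restricted to the non-sink coordinates); the paper states the conclusion for $M$ without this caveat, which as written is inconsistent with its own definition $M_{ii}=1$ for sinks. In short: same route, but your version is the one that actually proves the lemma in full generality.
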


\begin{proof}
  As observed above, it suffices to show that $det(M^{ns})\neq 0$. We
  shall use the following result and apply it to $-M^{ns}$ from
  \refeq{eq:Mns}.
  
  \begin{lemma}  \label{lemma:posdef}
    Let $D=\st{d_{ij}}$ be an $n\times n$ diagonal matrix with
    $d_{ii}\geq 0$ with at least one $d_{ii}>0$. Let $L$ be the graph
    Laplacian of a connected graph on $n$ vertices. Then $M=L+D$ is
    symmetric positive definite. It follows that $M$ is invertible and
    has strictly positive eigenvalues.
  \end{lemma}
  
  \begin{proof}
    From \refthm{thm:graph-laplacian} we see that $L$ is positive
    semi-definite. So, if $\v{v}\in\Reals^n$ and $\v{v}\neq 0$, then
    $\v{v}^T L \v{v} \geq 0$ with equality reached only for $\v{v}=s\v{1}$
    where $s\neq 0$.  Similarly, $D$ is positive semi-definite because
    $\v{v}^T D \v{v} = \sum_i d_{ii} v_i^2$ and all terms are non-negative.
    Further, $(s\v{1})^T D (s\v{1}) = \sum_i d_{ii}s^2 > 0$ since
    $s^2>0$ and at least one entry in $D$ is strictly positive.
    Finally, if $\v{v}\in\Reals^n$ and $\v{v}\neq 0$, then
    $$
    \v{v}^T M \v{v} = \v{v}^T (L+D) \v{v} = \v{v}^T L \v{v} +
    \v{v}^T D \v{v} > 0 ~.
    $$
    This finishes the proof of \reflemma{lemma:posdef}.
  \end{proof}
  
  Now, let $D = C + \alpha I$. If there is at least one sink, then at
  least one $C_{ii}>0$; if $\alpha>0$ then all $D_{ii}>0$. Hence
  \reflemma{lemma:posdef} applies and completes the proof of
  \reflemma{lemma:Minvertible}.
\end{proof}

To complete the discretization of the continuous process, suppose that
$\rho_\shapesym(i) = K/S(i)$, with units
($mass*time^{-1}*volume^{-1}$), is the rate at which cell $i$ produces
the hormone. Setting $S(i)$ to be the size of cell $i$, we obtain the
following discrete dynamics:
\begin{equation}
  \label{eq:discrete-process}
  \v{c}^{(t+\dt)} = \v{c}^{(t)} + \dt \underbrace{ \paren{ M
      \v{c}^{(t)} + \rho_\shapesym }}_{dc/dt }
\end{equation}
where $\dt$ is the time step. We wish to show that, given a small
enough $\dt$, $\v{c}$ will converge to a unique value.

Observe that the update rule in \refeq{eq:discrete-process} may
represented as $\tilde \v{c}^{(t+\dt)} = U \tilde \v{c}^{(t)} $ by
writing
%$\tilde \v{c} = \left[ c(1), c(2),  \cdots, c(n), 1 \right]^T$ and
$$
\underbrace{
\paren{
  \begin{array}{c}
    c^{(t+\dt)}(1) \\ c^{(t+\dt)}(2) \\ \vdots \\ c^{(t+\dt)}(n) \\ 1
  \end{array}
}
}_{\tilde \v{c}^{(t+\dt)}}
=
\underbrace{
\left[
  \begin{array}{cc}
    \left[
      \begin{array}{ccccc}
        \\
        & (\dt) M + I&
        \\ \\ \\
      \end{array}
    \right] &
    \begin{array}{c}
      \dt {\rho_\shapesym}(1) \\ \dt {\rho_\shapesym}(2) \\
      \vdots \\ \dt {\rho_\shapesym}(n)
    \end{array}
   \\
  0 & 1
\end{array}
\right]
}_{U}
\underbrace{
\paren{
  \begin{array}{c}
    c^{(t)}(1) \\ c^{(t)}(2) \\ \vdots \\ c^{(t)}(n) \\ 1
  \end{array}
}
}_{\tilde \v{c}^{(t)}}
$$
where $I$ is the $n\times n$ identity matrix.
The convergence of the process now reduces to showing that $U^k
\tilde \v{c}^{(0)}$ converges as $k\to\infty$. Notice that $U$ has the
block form of the following three matrices
$$
U_a = 
\left[
  \begin{array}{cc}
    A & \v{v}_a \\
    \v{0} & 1
  \end{array}
\right], \quad
U_b = 
\left[
  \begin{array}{cc}
    B & \v{v}_b \\
    \v{0} & 1
  \end{array}
\right], \quad
U_c = 
\left[
  \begin{array}{cc}
    C & \v{v}_c \\
    \v{0} & 1
  \end{array}
\right]
$$
where $A$, $B$ and $C$ are $n\times n$ matrices; $\v{v}_a$,
$\v{v}_b$ and $\v{v}_c$ are $n\times 1$ vectors; and $\v{0}$ is a
$1\times n$ vector. The product of two such matrices preserves the
block form; e.g. $U_{c}=U_a U_b$ by setting $C
= A B$ and $\v{v}_{c} = A \v{v}_b + \v{v}_a$. Therefore, by induction
on $k$, the blocks of $U_c=U_a^k$ must be $C=A^k$ and $\v{v}_c =
\sum_{i=0}^{n-1} A^i \v{v}_a$.

\begin{theorem}
  Suppose that each cell $i$ has size $S(i)$, produces a hormone at a
  rate $\rho_\shapesym(i)$ and destroys the hormone at a rate $\alpha
  c(i)$ with $\alpha\geq 0$. If $\alpha>0$ or there is at least one
  sink cell, then for a sufficiently small $\dt$ the discrete process
  in \refeq{eq:discrete-process} will converge to $\v{c}^* = -M^{-1}
  (\rho_\shapesym/S)$.
\end{theorem}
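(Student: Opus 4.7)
\medskip
\noindent\textbf{Proof plan.}
The plan is to identify the candidate steady state from the matrix equation, convert the affine iteration to a purely linear recursion for the error, and then use the spectral information about $M$ established in \reflemma{lemma:Minvertible} to control that recursion.

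First, under the hypothesis ($\alpha>0$, or $\alpha\ge 0$ with at least one sink), \reflemma{lemma:Minvertible} gives that $M$ is invertible with strictly negative eigenvalues. Hence the equation $M\v{c}^*+\rho_\shapesym=\v{0}$ has a unique solution $\v{c}^*=-M^{-1}\rho_\shapesym$ (recall $\rho_\shapesym(i)=K/S(i)$, so the ``$/S$'' is already built into $\rho_\shapesym$). This $\v{c}^*$ is the only candidate limit, since any fixed point of \refeq{eq:discrete-process} must satisfy $M\v{c}^*+\rho_\shapesym=\v{0}$.

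Second, subtract the fixed-point equation from \refeq{eq:discrete-process} and set $\v{e}^{(t)}:=\v{c}^{(t)}-\v{c}^*$. A direct computation gives the purely linear recurrence
$$
\v{e}^{(t+\dt)} \;=\; (I+\dt M)\,\v{e}^{(t)},
$$
so $\v{e}^{(k\dt)}=(I+\dt M)^k\,\v{e}^{(0)}$. Thus convergence $\v{c}^{(t)}\to\v{c}^*$ is equivalent to the spectral radius of $I+\dt M$ being strictly less than $1$. Since $M=-L-C-\alpha I$ is symmetric, it is orthogonally diagonalizable; letting $\mu_1,\dots,\mu_n<0$ be its eigenvalues, the eigenvalues of $I+\dt M$ are $1+\dt\mu_i$. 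The upper bound $1+\dt\mu_i<1$ is automatic because $\mu_i<0$, and the lower bound $1+\dt\mu_i>-1$ holds iff $\dt<2/|\mu_i|$. Choosing
$$
0 \;<\; \dt \;<\; \frac{2}{\max_i |\mu_i|}
$$
therefore makes $\|(I+\dt M)^k\|\to 0$ geometrically, so $\v{e}^{(k\dt)}\to\v{0}$ and $\v{c}^{(k\dt)}\to\v{c}^*$.

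Finally, I would reconcile this with the block form $\tilde\v{c}^{(k\dt)}=U^k\tilde\v{c}^{(0)}$ used in the text, to make the statement literally match the machinery introduced just before the theorem. With $A=I+\dt M$ and $\v{v}_a=\dt\,\rho_\shapesym$, the induction identity $U^k=\begin{bsmallmatrix}A^k & \sum_{i=0}^{k-1}A^i\v{v}_a\\ \v{0} & 1\end{bsmallmatrix}$ combined with $A^k\to 0$ and the Neumann series $\sum_{i=0}^{\infty}A^i=(I-A)^{-1}=-\dt^{-1}M^{-1}$ gives a top block converging to $(-\dt^{-1}M^{-1})(\dt\,\rho_\shapesym)=-M^{-1}\rho_\shapesym=\v{c}^*$, as claimed. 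The main obstacle is essentially bookkeeping --- translating between the two equivalent presentations (the affine iteration on $\v{c}$ and the linear iteration on $\tilde\v{c}$) and quantifying ``sufficiently small $\dt$'' in terms of the spectrum of $M$; the only genuinely non-trivial input, strict negativity of the eigenvalues of $M$, has already been packaged into \reflemma{lemma:Minvertible}.
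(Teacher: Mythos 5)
Your proposal is correct and follows essentially the same route as the paper: both invoke \reflemma{lemma:Minvertible} to get strictly negative eigenvalues of $M$, deduce that $A=I+\dt M$ has spectral radius less than $1$ for sufficiently small $\dt$, and obtain the limit $-M^{-1}\rho_\shapesym$ from the geometric (Neumann) series in the block form of $U^k$. Your error-vector recursion is just an equivalent repackaging, and your step-size bound $\dt<2/\max_i\abs{\mu_i}$ is a mild relaxation of the paper's $\dt<1/\lambda$; you also rightly flag the harmless notational mismatch between $\rho_\shapesym$ and $\rho_\shapesym/S$.
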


\begin{proof}
  The conditions of \reflemma{lemma:Minvertible} apply, so $M$ has
  strictly negative eigenvalues and $M^{-1}$ exists. Choose $0<\dt <
  1/\lambda$ where $\lambda$ is the largest (in absolute value)
  eigenvalue of $M$. Thus $A=\dt M + I$ will have eigenvalues
  $0<\abs{\lambda_i} <1$; it follows that $\lim_{k=\infty} A^k = 0$.
  Now, from the above, $\v{v}_c = \sum_{i=0}^{k-1} A^i \v{v}_a =
  (A^k-I)(A-I)^{-1} \v{v}_a = (A^k-I)(\dt M)^{-1} \v{v}_a $ and the
  claim follows.
\end{proof}

This result demonstrates that an iterative process will indeed
converge---assuming perfect arithmetic operations---but it also shows
that the equilibrium can be computed much more efficiently. It
suffices to solve the linear system $\v{c}^* = -M^{-1}
(\rho_\shapesym/S)$. The system is well behaved numerically whenever
$\alpha$ is sufficiently large, because the condition number of this
matrix is roughly equal to the largest degree of the graph, times $D$
divided by $\alpha$; see Dahlquist and
Bj{\"o}rck~\cite{dahlquist-bjork-1974} for a discussion of matrix
condition numbers.

%%-----
\subsection{Geometric Domain Definition}
\label{sec:domain-definition}

In this section we outline how the domains---representing leaves,
roots, etc.---are defined geometrically and then converted into the
graph representation discussed in the previous section. Ultimately,
the geometry of the domains should correspond to and be comparable to
the geometry of real plant tissues. Thus, we define the domains by
manipulating images of those tissues. Both the cell size and the cell
neighbors (i.e. the topology of the graph) are computed from an image.

In this paper we adopted a pixel-based approach whereby the organ is
drawn as a digital image and the color of each pixel encodes some
information: whether the pixel is part of the domain or not, the value
of the production function $\rho$, whether the pixel is a sink or part
of the vein pattern, etc. The natural connectivity of pixels on a
square grid---four or eight neighbors---then defines the topology of
the graph. The final diffusion matrix is built after defining the
diffusion constants for each pair of pixel colors.

But this representation also allows us to define a cell by using
multiple pixels.  \reffig{fig:rootsims} shows an example in which a
cell consists of several black pixels---representing the
interior---surrounded by green pixels---representing the cell
walls. None produce auxin except for a single dark-green pixel in the
middle of the interior pixels. Thus, each cell produces auxin at the
same rate (the rate of the center pixel) but cells may have different
sizes. Moreover, the diffusion coefficient in the interior of the cell
may be different from the diffusion coefficient through the cell
wall. Our usual assumption is that the interior diffusion coefficient
is much larger.

%%%%%%%%%%%%%%%%%%%%%%%% 
% Section
%%%%%%%%%%%%%%%%%%%%%%%% 
\section{Conclusion}

%% The explicit assumption about hormone depletion---the Proportional
%% Destruction Hypothesis---greatly increased the scope of our earlier
%% model \cite{pdswz:paper}. We showed that there are at least two
%% additional ways in which a distance map becomes locally available to a
%% group of cells, and that testable predictions ensue. And although the
%% available data is insufficient to compare the predictions to the
%% actual numbers, the qualitative trend is accurately reproduced and
%% explicit measurements are suggested by updated model. The analysis and
%% assumptions of \refsec{sec:cph-support}, in effect, describe further
%% experiments to test the theory. However, our model is still too
%% abstract to be deemed biological. In particular, it is well known that
%% diffusion is not the only transport mechanism responsible for auxin
%% flow. Active, or at least facilitated, transport carriers are known to
%% exist, which our current formulation does not consider. That is a
%% topic of our next paper. For now, we remark that the Fickian transport
%% and reaction diffusion equation developed here can provide an
%% abstraction in such a manner that its main properties hold when more
%% detailed facilitated transport is taken into account.

We have developed the foundations for a theory of how global
information about shape is related to the distance transform, and how
several of the essential properties of this distance transform can be
computed by a simple reaction-diffusion equation. The model has its
roots in our earlier Constant Production Hypothesis, and is based on a
computational abstraction that all cells behave according to the same
rules. Most importantly, it provides a mechanism that illustrates how
``hot spots'' of concentration can develop from structural conditions
rather than differential production induced by an explicit
developmental program.

The explicit assumption about hormone depletion---the Proportional
Destruction Hypothesis---greatly increased the scope of our earlier
model \cite{pdswz:paper}. We showed that there are at least two
additional ways in which a distance map becomes locally available to a
group of cells, and that testable predictions ensue. And although the
available data is insufficient to compare the predictions to the
actual numbers, the qualitative trend is accurately reproduced and
explicit measurements are suggested by updated model. The analysis and
assumptions of \refsec{sec:cph-support}, in effect, describe further
experiments to test the theory.

The simulations in this paper, which involved detailed anatomical
considerations, show the power of such calculations. That an auxin
concentration peak emerged properly near the root tip illustrates
their role is sufficency rather than necessity.

Nevertheless, our model is still too abstract to be deemed
biological. In particular, it is well known that diffusion is not the
only transport mechanism responsible for auxin flow. Active, or at
least facilitated, transport carriers are known to exist, which our
current formulation does not consider. That is a topic of our
companion paper. For now, we remark that the Fickian transport and
reaction diffusion equation developed here can provide an abstraction
in such a manner that its main properties hold when more detailed
facilitated transport is taken into account.

\singlespacing
%% Bibliography
\bibliographystyle{abbrv}    % bibliography using BibTex format
\bibliography{paper}

\end{document}